\documentclass{llncs}

\usepackage[vlined,linesnumbered]{algorithm2e}
\usepackage{algorithmic}
\usepackage{amsmath,amssymb
}
\usepackage{xspace}
\usepackage{dsfont}
\usepackage[colorlinks=true,urlcolor=webblue,linkcolor=webgreen,filecolor=webblue,citecolor=webgreen,pdfpagemode=UseOutlines,pdfstartview=FitH,pdfpagelayout=OneColumn,bookmarks=true]{hyperref}
\usepackage{enumerate}
\usepackage{graphicx}
\usepackage{enumitem}
\usepackage{xcolor}
\usepackage{authblk}
\usepackage{fullpage}
\usepackage{cleveref}
\def\bool{\{0,1\}}
\def\A{{\algo A}}
\def\B{\mathcal{B}}
\def\D{\mathcal{D}}
\def\O{\mathcal{O}}
\def\bydef{\stackrel{{\rm def}}{=}}
\def\xor{\oplus}

\definecolor{mgreen}{rgb}{.1,.7,0}

\usepackage{
	float}
\floatname{algorithm}{Procedure}

\usepackage[n,keys,asymptotics]{cryptocode}

\newcommand{\ignore}[1]{}

\newcommand{\Z}{\mathbb{Z}}
\newcommand{\N}{\mathbb{N}}

\newcommand{\ket}[1]{|#1\rangle} %
\newcommand{\bra}[1]{\langle #1|} %
 %
 %
 %
 %


\newcommand{\bad}{\textsf{BAD}}

\definecolor{webgreen}{rgb}{0,.5,0}
\definecolor{webblue}{rgb}{0,0,.5}
\newcommand\algo{\mathcal}

\newcommand{\Enc}{\ensuremath{\mathsf{Enc}}\xspace}

\newcommand{\expref}[2]{\texorpdfstring{\hyperref[#2]{#1~\ref{#2}}}{#1~\ref{#2}}}

\newcommand{\from}{\leftarrow}





\DeclareMathOperator*{\Exp}{\mathbb{E}}

\newcommand{\proj}[1]{\ensuremath{|#1\rangle \langle #1|}}

\newcommand{\braket}[2]{\left\langle #1 \mid #2 \right\rangle}

\newcommand{\Tr}{\mathrm{Tr}}
\newcommand{\CNOT}{\mathrm{CNOT}}

\newcommand{\formerqQ}{q_P}
\newcommand{\formerqC}{q_E}

\newcommand{\microspace}{\mspace{.5mu}} %
\renewcommand{\ket}[1]{\ensuremath{\lvert\microspace #1
		\microspace\rangle}} %
\renewcommand{\bra}[1]{\ensuremath{\langle\microspace #1
		\microspace\rvert}} %
 %
 %
 %

\newcommand{\Oinv}{O^{\mathrm{inv}}}
\newcommand{\Per}{\mathrm{Per}}

\newcommand{\extraexplain}[1]{#1}

\newcommand{\Hyb}{\operatorname{\mathbf{H}}}
\newcommand{\swap}{{\sf swap}}
\newcommand{\Perms}{\mathcal{P}}
\newcommand{\Funcs}{\mathcal{F}}
\newcommand{\tweak}{\mathcal{T}}
\newcommand{\mask}{\textsf{mask}}
\newcommand{\smem}{Simplified \textsf{MEM}\xspace}

\newif\ifsubmission
\submissiontrue
\ifsubmission
\newcommand{\ga}[1]{}
\newcommand{\cm}[1]{}
\newcommand{\jnote}[1]{}
\newcommand{\chen}[1]{}
\else
\newcommand{\ga}[1]{{\noindent \textcolor{purple}{\emph{(GA:  #1)}}}{}}
\newcommand{\cm}[1]{{\noindent \textcolor{mgreen}{\emph{(CM:  #1)}}}{}}
\newcommand{\jnote}[1]{{\color{blue} [Jon's note: #1]}}
\newcommand{\chen}[1]{{\noindent \textcolor{orange}{\emph{(Chen:  #1)}}}{}}
\fi

\hyphenation{sim-pli-ci-ty}


\title{Post-Quantum Security of the \\ Even-Mansour Cipher}
\pagestyle{plain}

\author{\vspace{-.8cm}}
\institute{\vspace{-.8cm}}

\author{Gorjan Alagic\inst{1} \and Chen Bai\inst{2} \and Jonathan Katz\inst{3} \and Christian Majenz\inst{4} }

\institute{\small QuICS, University of Maryland, and NIST\and Dept.\ of Electrical and Computer Engineering, University of Maryland \and Dept.\ of Computer Science, University of Maryland \and Department of Applied Mathematics and Computer Science, Technical University of Denmark
}




\begin{document}

\maketitle
\setcounter{footnote}{0}

\begin{abstract}
The Even-Mansour cipher is a simple method for constructing a (keyed) pseudorandom permutation $E$ from a public random permutation~$P:\bool^n \rightarrow \bool^n$. 
It is secure against classical attacks, with optimal attacks requiring $q_E$ queries to $E$ and $q_P$ queries to $P$ such that $q_E \cdot q_P \approx 2^n$. If the attacker is given \emph{quantum} access to both $E$ and $P$, however, the cipher is completely insecure, with attacks using $q_E, q_P = O(n)$  queries known.

\smallskip
In any plausible real-world setting, however, a quantum attacker would have only \emph{classical} access to the keyed permutation~$E$ implemented by honest parties, even while retaining quantum access to~$P$. Attacks in this setting with $q_E \cdot q_P^2 \approx 2^n$ are known, showing that security degrades as compared to the purely classical case, but leaving open the question as to whether the Even-Mansour cipher can still be proven secure in this natural, ``post-quantum'' setting.
%

\smallskip
We resolve this question, showing that any attack in that setting requires $q_E \cdot q^2_P + q_P \cdot q_E^2 \approx 2^n$. Our results apply to both the two-key and single-key variants of Even-Mansour. Along the way, we establish several generalizations of results from prior work on quantum-query lower bounds that may be of independent interest.
\end{abstract}


\section{Introduction}
The Even-Mansour cipher~\cite{EM97} is a well-known approach for constructing a block cipher~$E$ from a public random permutation
\mbox{$P: \bool^n \rightarrow \bool^n$}.
The cipher $E: \bool^{2n} \times \bool^n \rightarrow \bool^n$ is defined as
\[
E_{k_1, k_2}(x) = P(x \xor k_1) \xor k_2
\] where, at least in the original construction, $k_1, k_2$ are uniform and independent.
Security in the standard (classical) setting 
is well understood~\cite{EM97,DKS12}: roughly, an unbounded attacker with access to $P$ and $P^{-1}$ cannot distinguish whether it is interacting with $E_{k_1, k_2}$ and $E_{k_1, k_2}^{-1}$ (for uniform~$k_1, k_2$) or $R$ and $R^{-1}$ (for an independent, random permutation~$R$) unless it makes $\approx 2^{n/2}$ queries to its oracles. A variant where $k_1$ is uniform and $k_2=k_1$ has the same security~\cite{DKS12}. These bounds are tight, and key-recovery attacks using $O(2^{n/2})$ queries are known~\cite{EM97,DKS12}.  

Unfortunately, the Even-Mansour construction is insecure against a fully quantum attack in which the attacker is given \emph{quantum} access to its oracles~\cite{KM12,KLLN16}. In such a setting, the adversary can evaluate the unitary operators
\begin{align*} 
U_P &: \ket{x}\ket{y} \mapsto \ket{x}\ket{y \oplus P(x)}\\
U_{E_{k_1, k_2}} &: \ket{x}\ket{y} \mapsto \ket{x}\ket{y \oplus E_{k_1, k_2}(x)} \nonumber
\end{align*}
(and the analogous unitaries for $P^{-1}$ and $E_{k_1, k_2}^{-1}$) on any quantum state it prepares. Here, Simon's algorithm~\cite{Simon97} can be applied  to $E_{k_1, k_2} \oplus P$ to give a key-recovery attack using only $O(n)$ queries. 

To place this seemingly devastating attack in context, it is worth recalling that the original motivation for considering unitary oracles of the form above in quantum-query complexity was that one can always transform a classical circuit for a function~$f$ into a reversible (and hence unitary) quantum circuit for~$U_f$. In a cryptographic context, it is thus reasonable (indeed, necessary) to consider adversaries that use $U_f$ whenever $f$ is a function whose circuit they know. On the other hand, if the circuit for $f$ is \emph{not} known to the adversary, then there is no mechanism by which it can implement~$U_f$ on its own. In particular, if $f$ involves a private key, then the only way an adversary could possibly obtain quantum access to~$f$ would be if there were an explicit interface granting such access. In most (if not all) real-world applications, however, the honest parties using the keyed function~$f$ would implement $f$ using a classical computer; even if they implement $f$ on a quantum computer, there is no reason for them to support anything but a classical interface to~$f$. In such cases, an adversary would have no way to evaluate the unitary operator corresponding to~$f$

In most real-world applications of Even-Mansour, therefore, an attacker would have only \emph{classical} access to the keyed permutation~$E_{k_1, k_2}$ and its inverse, while retaining quantum access to $P$ and~$P^{-1}$. In particular, this seems to be the ``right'' attack model for most applications of the resulting block cipher, e.g., constructing a secure encryption scheme from the cipher using some mode of operation. The setting in which the attacker is given classical oracle access to keyed primitives but quantum access to public primitives is sometimes called the ``Q1 setting''~\cite{BHN+19}; we will refer to it simply as the \emph{post-quantum} setting. 

Security of the Even-Mansour cipher in this setting is currently unclear.
Kuwakado and Morii~\cite{KM12} showed a key-recovery attack on Even-Mansour in this setting  that requires only $\approx 2^{n/3}$ oracle queries, using the BHT collision-finding algorithm~\cite{BHT97}. Their attack uses exponential memory but this was improved in subsequent work~\cite{HS18,BHN+19}, culminating in an attack using the same number of queries but with polynomial memory complexity. While these results demonstrate that the Even-Mansour construction is \emph{quantitatively} less secure in the post-quantum setting than in the classical setting, and show post-quantum security in certain restricted settings, they do not answer the more important \emph{qualitative} question of whether the Even-Mansour construction remains secure as a block cipher in the post-quantum setting, or whether attacks using polynomially many queries might be possible.

Concurrent results of Jaeger et al.~\cite{JST21} imply security of a forward-only variant of the Even-Mansour construction, as well as for the full
Even-Mansour cipher against \emph{non-adaptive} adversaries who make all their classical queries before any quantum queries.
They explicitly leave open the question of proving adaptive security in the latter setting.

\subsection{Our Results}


%
As our main result, we prove a lower bound showing that   $\approx 2^{n/3}$ queries are \emph{necessary} for attacking the Even-Mansour cipher in the post-quantum setting. In more detail, if $q_P$ denotes the number of (quantum) queries to $P, P^{-1}$ and $q_E$ denotes the number of (classical) queries to $E_{k_1, k_2}, E_{k_1, k_2}^{-1}$, we show that any attack succeeding with 
constant probability requires either $q^2_P \cdot q_E = \Omega( 2^n)$ or $q_P \cdot q_E^2 = \Omega( 2^{n})$. 
(Equating $q_P$ and $q_E$ gives the claimed result.) 
Formally:

\begin{theorem}\label{thm:intro-EM}
Let $\A$ be a quantum algorithm making $\formerqC$ classical queries to its first oracle (including forward and inverse queries) and $\formerqQ$ quantum queries to its second oracle (including forward and inverse queries.)  Then
\begin{eqnarray*}
\lefteqn{\left|\Pr_{k_1, k_2,P} \left[\A^{E_{k_1, k_2}, P}(1^n) = 1\right]
- \Pr_{R,P} \left[\A^{R, P}(1^n) = 1\right]\right|} \hspace*{1.5in} \\
& \leq & 10 \cdot 2^{-n/2} \cdot \left(\formerqC \sqrt{\formerqQ} + \formerqQ \sqrt{\formerqC}\right)\,,
\end{eqnarray*}
where $P, R$ are uniform $n$-bit permutations, and the marginal distributions of $k_1, k_2\in\bool^n$ are uniform.
\end{theorem}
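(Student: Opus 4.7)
The plan is a reprogramming-based hybrid argument that interpolates from the real world (Even-Mansour $E_{k_1,k_2}$ alongside quantum oracle $P$) to the ideal world (independent random permutation $R$ alongside $P$). The basic observation is that each classical query to $E_{k_1,k_2}$ implicitly pins down one input/output pair of $P$, namely $P(x \oplus k_1) = y \oplus k_2$ (whether the query is forward or inverse), while a classical query to an independent $R$ reveals nothing about $P$. I would bridge these worlds through a sequence of hybrids $H_0, H_1, \ldots, H_{\formerqC}$ in which $H_i$ answers the first $i$ classical queries using a fresh random permutation $R$ and reprograms $P$ so that $P(x_j \oplus k_1) = R(x_j) \oplus k_2$ for $j \le i$, while the remaining classical queries still use $E_{k_1,k_2}$. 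Then $H_0$ is the real world, and $H_{\formerqC}$ is the ideal world (the keys are never used classically in the latter, and the induced reprogramming of $P$ preserves its marginal distribution as a uniformly random permutation).

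To bound the advantage between $H_{i-1}$ and $H_i$, I would invoke a reprogramming (one-way-to-hiding-style) lemma for random two-sided permutation oracles. Crucially, at the moment the $i$-th reprogramming is performed, the adversary has so far seen only $R$-responses (not involving the keys) together with quantum $P$-queries independent of the keys; hence the point $x_i \oplus k_1$ at which $P$ is reprogrammed (and $R(x_i) \oplus k_2$ where $P^{-1}$ is reprogrammed) is uniformly random over $\bool^n$ conditional on the adversary's view. A single random reprogramming can be distinguished by $\formerqQ$ quantum queries with advantage $O(\sqrt{\formerqQ/2^n})$; summing over the $\formerqC$ hybrid steps yields $O(\formerqC \sqrt{\formerqQ}/2^{n/2})$, which matches the first term. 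The symmetric term $\formerqQ \sqrt{\formerqC}/2^{n/2}$ arises from a dual bookkeeping: rather than applying the reprogramming lemma one classical query at a time, one aggregates all $\formerqC$ reprogrammings into a single random marked set of size at most $\formerqC$ in $\bool^n$ and applies a BBBV/Grover-style hybrid to argue that the $\formerqQ$ total quantum queries can hit this set only with advantage $O(\formerqQ \sqrt{\formerqC/2^n})$. Taking the sum of the two bounds (as an upper bound on their minimum) and accounting for forward-and-inverse queries with a small constant gives the stated $10 \cdot 2^{-n/2}$ factor.

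The main obstacle is proving a reprogramming/indistinguishability lemma that is simultaneously (i) for random \emph{permutations} rather than random functions, (ii) valid for both forward and inverse quantum access, and (iii) robust to reprogramming an adaptively-chosen random subset of points. The standard compressed-oracle technique was developed for random functions; a permutation reprogrammed at one input must also have its value swapped somewhere else, and one has to show that this induced swap is also undetectable by the adversary's quantum queries to $P^{-1}$. Moreover, the reprogrammed points are determined by the adversary's adaptive classical queries, so the ``marked set'' used in the second estimate is itself chosen in an interleaved, key-dependent fashion. I expect the paper's advertised generalizations of prior collapsing results (the \textsf{Partial-Measuring Collapsing} framework alluded to in the abstract and macros) supply exactly the quantum-query lower-bound machinery needed to handle these issues uniformly; once established, plugging such a lemma into the hybrid structure above and invoking the triangle inequality yields the theorem.
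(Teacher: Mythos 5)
Your hybrid skeleton is essentially the paper's: interpolate one classical query at a time, answer the prefix with $R$, and keep the suffix consistent by reprogramming $P$ at the transcript points shifted by the key; the $\formerqC\sqrt{\formerqQ}$ term does indeed come from a resampling lemma for two-way-accessible permutations applied once per classical query, using exactly your observation that $x_{j+1}\oplus k_1$ is uniform conditioned on the view. That part matches Lemmas~\ref{lem:prp-arl} and~\ref{lem:step-one-perps}.

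The genuine gap is in how you obtain the second term. You present the two terms as two \emph{alternative} analyses of the same hybrid distance and then take ``the sum as an upper bound on their minimum.'' That logic is invalid because neither analysis alone bounds the advantage: the bound $O(\formerqC\sqrt{\formerqQ}/2^{n/2})$ by itself is falsified by the Grover key-search attack, which achieves constant advantage with $\formerqC=O(1)$ classical queries and $\formerqQ=2^{n/2}$ quantum queries, a regime where that expression is $2^{-\Omega(n)}$. In the paper both error sources arise in \emph{every} hybrid step and must be added: each step $\Hyb_j\to\Hyb_{j+1}$ is split into a resampling sub-step $\Hyb_j\to\Hyb_j'$ (changing the $(j+1)$st classical answer to a fresh value, detectable only by the quantum queries made \emph{before} it, cost $O(\sqrt{\formerqQ/2^n})$) and a reprogramming sub-step $\Hyb_j'\to\Hyb_{j+1}$ (removing the transcript-consistent patch from the quantum oracle in stage $j+1$, detectable by the quantum queries made \emph{after} it, cost $O(q_{P,j+1}\sqrt{(j+1)/2^n})$). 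Your proposed replacement for the latter---one aggregate BBBV-style argument against a ``marked set'' of size $\formerqC$---does not address the actual difficulty: the marked set is built adaptively, interleaved with the quantum queries (each new classical query adds points that depend on the key and on $P$ itself), and the adversary adaptively decides, based on intermediate measurements, how many quantum queries to spend in each stage. A standard BBBV hybrid assumes the marked set is fixed before the quantum queries begin. The paper resolves this with a per-stage application of an ``arbitrary reprogramming'' lemma generalized to bound the advantage in terms of the \emph{expected} number of queries per stage (Lemma~\ref{lem:ada-blind}); summing $q_{P,j+1}$ over stages then recovers $\formerqQ$ without a lossy worst-case factor. Finally, the machinery you appeal to (a ``partial-measuring collapsing'' framework) does not exist in the paper---that name appears only in an unused macro---and your final hybrid $H_{\formerqC}$ equals the ideal world only if the reprogrammed oracle is never actually queried there, which your description leaves ambiguous.
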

The above applies, in particular, to  the two-key and one-key variants of the cipher.
A simplified version of the proof works also for the case where $P$ is a random function, we consider the cipher $E_k(x)=P(x \oplus k)$ with $k$ uniform, and $\A$ is given forward-only access to both $P$ and~$E$.

Real-world attackers are usually assumed to make far fewer queries to keyed, ``online'' primitives than to public, ``offline'' primitives. (Indeed, while an offline query is just a local computation, an online query requires, e.g., causing an honest user to encrypt a certain message.) In such a regime, where $\formerqC\ll \formerqQ$, the bound on the adversary's advantage in \expref{Theorem}{thm:intro-EM} simplifies to $O(\formerqQ \sqrt{\formerqC}\big/2^{n/2})$. In that case $\formerqQ^2 \formerqC=\Omega(2^n)$ is necessary for constant success probability, which matches the BHT and offline Simon algorithms~\cite{KM12,BHN+19}.\footnote{While our bound is tight with respect to the number of queries, it is loose with regard to the attacker's advantage, as both the BHT and offline Simon algorithms achieve advantage $\Theta(\formerqQ^2\formerqC \big/2^{n})$. Reducing this gap is an interesting open question.}


\medskip\noindent{\bf Techniques and new technical results.}\label{sec:technical-intro}
Proving \expref{Theorem}{thm:intro-EM} required us to develop new techniques 
that we believe are interesting beyond our immediate application. We describe the main challenge and its resolution in what follows.

As we have already discussed, in the setting of post-quantum security adversaries may have a combination of classical and quantum oracles. 
In addition to the Even-Mansour setting \cite{JST21}, this is the case, in particular, when a post-quantum security notion that involves keyed oracles is analyzed in the quantum random oracle model (QROM), such as when analyzing the Fujisaki-Okamoto transform~\cite{TU16,HHK17,BHHHP19,Zhandry2019,KSSSS20,DFMS21} or the Fiat-Shamir transform~\cite{Unruh17,KLS18,GHHM20}. In general, dealing with a mix of quantum and classical oracles  presents a problem: quantum-query lower bounds typically begin by ``purifying'' the adversary and postponing all measurements to the end of its execution, but this does not work if the adversary may decide what query to make to a classical oracle (or even whether to query a oracle at all)  \emph{based on the outcome} of an intermediate measurement. 
The works cited above address this problem in various ways (e.g. by specializing to non-adaptive adversaries \cite{JST21}), but often do so
by relaxing the problem and allowing quantum access to \emph{all} oracles.
This is not an option for us if we wish to prove security, because the Even-Mansour cipher is known not to be secure when the adversary has quantum access to all its oracles! 
The only other work we are aware of that solves this problem is the concurrent work \cite{JST21}. Here, the authors overcome the described barrier for the forward-only Even-Mansour (see \cref{app:forward-EM}) using Zhandry's compressed oracle technique (which is not available for inverse-accessible permutations). Like previous works they delay all measurements, enforcing the classical-query nature of the adversary in a different way. 

Instead, we deal with the problem by dividing the execution of an algorithm that has classical access to some oracle~$O_1$ and quantum access to another oracle~$O_2$ into \emph{stages}, where a stage corresponds to a period between classical queries to~$O_1$. We then analyze the algorithm stage-by-stage. 
In doing so, however, we introduce another problem: 
the adversary may adaptively choose the number of queries to~$O_2$ in each stage based on outcomes of intermediate measurements. While it is possible to upper bound the number of queries to $O_2$ in each stage by the number of queries made to $O_2$ overall, this will (in general) result in a very loose security bound. 
To avoid such a loss, we extend the  ``blinding lemma'' of Alagic et al.~\cite{AMRS20} so that (in addition to 
some other generalizations) we obtain a bound in terms of the \emph{expected} number of queries made by a distinguisher.

\begin{lemma}[Arbitrary reprogramming, informal]\label{lem:ada-blind-Intro}
Consider the following game played by a distinguisher $\D$ making at most $q$ queries in expectation.
\begin{description}
	\item[Phase 1:] $\D$ outputs a function $F$ and a randomized algorithm $\B$ that specifies how to reprogram~$F$.
	\item[Phase 2:]   Randomness $r$ is sampled and $\B(r)$ is run to reprogram~$F$,  giving~$F'$. A uniform $b \in \bool$ is chosen, and $\D$ receives oracle access to either~$F$ (if $b=0$) or $F'$ (if $b=1$).
	\item [Phase 3:] $\D$ loses access to its oracle and receives~$r$; $\D$ outputs a bit $b'$.
\end{description}
Then $\left|\Pr[\mbox{$\D$ outputs 1}\mid b=0] - \Pr[\mbox{$\D$ outputs 1}\mid b=1]\right| \leq 2q \cdot \sqrt{\epsilon}$, where $\epsilon$ is an upper bound on the probability that any given input is reprogrammed.
\end{lemma}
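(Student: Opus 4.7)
\medskip\noindent\textbf{Proof plan.}
My strategy is to generalize the blinding lemma of Alagic et al.~\cite{AMRS20} in two directions: from pure blinding to arbitrary (randomized) reprogramming, and from a worst-case query bound to an expected-query bound. The first extension is almost immediate, while the second is the main technical challenge. To set things up, I would purify $\D$, deferring all measurements to the end and augmenting its workspace with a ``halt'' register that records, before each potential query, whether $\D$ has already terminated. Writing $\ket{\Psi_j}$ for the global state just before the $j$-th potential query in the ``unperturbed'' world (oracle $F$), $\Pi_j$ for the projector onto the ``continue'' subspace, and $p_j := \|\Pi_j\ket{\Psi_j}\|^2$, we have $\sum_j p_j = \Exp[Q] \le q$ by hypothesis. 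Let $F' = F'(F,r)$ denote the reprogrammed function, $S(r)\subseteq \bool^n$ the set of inputs on which $F'$ and $F$ differ, and $P_{S(r)}$ the corresponding computational-basis projector; the hypothesis on $\B$ then reads $\Pr_r[x \in S(r)] \le \epsilon$ for every fixed $x$.

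Next I would run the usual hybrid argument, with $H_j$ denoting the experiment in which the first $j$ queries are answered by $F$ and subsequent queries by $F'$. The crucial observation is that the state $\ket{\Psi_j}$ depends only on $F$ and on $\D$'s internal randomness, and is \emph{independent} of the reprogramming randomness $r$; this is what lets expectations over $r$ pass inside norms. A standard BBBV-style one-query bound gives a consecutive-hybrid trace distance of at most $2\|P_{S(r)}\Pi_j\ket{\Psi_j}\|$, and the per-input reprogramming bound yields the second-moment estimate
\[
\Exp_r\!\left[\|P_{S(r)}\Pi_j\ket{\Psi_j}\|^2\right] \;\le\; \epsilon\,p_j.
\]

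The main obstacle is to telescope these per-query estimates into an $\ell^1$-style bound that depends only on $\Exp[Q]\le q$, and not on any worst-case query count---a naive union bound over hybrids produces a term proportional to the total number of time-steps, which can be much larger than $q$. My plan is to exploit the factorization $\|P_{S(r)}\Pi_j\ket{\Psi_j}\| = \sqrt{p_j}\,\|P_{S(r)}\ket{\phi_j}\|$, where $\ket{\phi_j}$ is the normalized ``continue'' branch, and apply a weighted Cauchy--Schwarz that pairs $\sqrt{p_j}$ against the telescoping sum so that the constraint $\sum_j p_j \le q$ enters the bound directly, with Jensen's inequality pushing the outer expectation over $r$ through the resulting square root. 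Summing over hybrids via the triangle inequality then produces the claimed $2q\sqrt{\epsilon}$. Phase~3 contributes nothing further: since $r$ is revealed identically in the two worlds and all subsequent processing is oracle-free, the distinguishing advantage is already upper-bounded by the (averaged) trace distance of the post-Phase-2 states, which is exactly what the hybrid argument controls.
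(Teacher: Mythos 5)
Your overall architecture (hybrids over potential query positions, the observation that the pre-query state in the unperturbed world is independent of the reprogramming randomness $r$, the second-moment estimate $\Exp_r\left[\|P_{S(r)}\Pi_j\ket{\Psi_j}\|^2\right]\le \epsilon\, p_j$, and Jensen's inequality to move the expectation over $r$ inside a square root) matches the paper's proof. But the step you yourself flag as the main obstacle---telescoping the per-query estimates into a bound governed by $\sum_j p_j\le q$---is exactly where the plan breaks, and the weighted Cauchy--Schwarz you propose cannot repair it. Your per-hybrid quantity is a Euclidean-norm (BBBV-style) deviation, which after your factorization becomes $2\sqrt{p_j}\,\|P_{S(r)}\ket{\phi_j}\|$ and hence, in expectation over $r$, at most $2\sqrt{p_j\epsilon}$. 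Summing gives $2\sqrt{\epsilon}\sum_j\sqrt{p_j}$, and no pairing of weights forces $\sum_j\sqrt{p_j}$ down to $\sum_j p_j=q$: with $q_{\max}=N$ time steps, $p_j=q/N$ for all $j$, and $\|P_{S(r)}\ket{\phi_j}\|=\sqrt{\epsilon}$ deterministically (e.g., $\ket{\phi_j}$ a uniform superposition and $S(r)$ a random set of density $\epsilon$), the sum of your per-step bounds is $\sqrt{qN\epsilon}$, which diverges with $N$; Cauchy--Schwarz against $\sum_j p_j$ instead leaves a factor $\sqrt{\sum_j\|P_{S(r)}\ket{\phi_j}\|^2}\le\sqrt{q_{\max}\epsilon}$ and thus reintroduces the worst-case query count you are trying to eliminate.

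The paper's proof avoids this by doing the per-step accounting in trace distance rather than in the $2$-norm, and by genuinely measuring (pinching) the control register that decides whether a query is made, rather than keeping it coherent as in your purified formulation. After the pinching the state is block-diagonal in ``query'' vs.\ ``no query''; the no-query blocks of the two neighboring hybrids cancel exactly, so the trace distance of the full states equals the trace distance of the sub-normalized query blocks, which is \emph{linearly} $p_{k-1}$ times the trace distance of the normalized states: $\delta(\rho_k,\rho_{k-1})\le p_{k-1}\cdot\delta\bigl(\sigma_{k-1},\mathcal{O}_{\tilde F^{(B)}}(\sigma_{k-1})\bigr)$. Only this normalized factor is then purified and bounded via the $2$-norm and Jensen, giving $\Exp_B\bigl[\delta(\sigma,\mathcal{O}_{\tilde F^{(B)}}(\sigma))\bigr]\le 2\sqrt{\epsilon}$ uniformly in $k$, so the total is $2\sqrt{\epsilon}\sum_k p_{k-1}=2q\sqrt{\epsilon}$. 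The gap between $p_j$ and $\sqrt{p_j}$ is precisely the gap between the $1$-norm and the $2$-norm of the sub-normalized query branch, and obtaining the linear dependence is the entire content of the expected-query version of the lemma; your write-up does not supply the mechanism that produces it.
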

The name ``arbitrary reprogramming'' is motivated by the facts that $F$ is  arbitrary (and known), and the adversary can reprogram $F$ arbitrarily---so long as some bound on the probability of reprogramming each individual input exists.

We also extend the ``adaptive reprogramming lemma'' of Grilo et al.~\cite{GHHM20} to the case of two-way-accessible, random permutations:
\begin{lemma}[Resampling for permutations, informal]\label{lem:prp-arl-Intro} Consider the following game played by a distinguisher~$\D$ making at most $q$ queries.
\begin{description}
\item [Phase 1:] $\D$ makes at most $q$ (forward or inverse) quantum queries to a uniform permutation $P:\bool^n \rightarrow \bool^n$.
	\item [Phase 2:] A uniform $b \in \bool$ is chosen, and $\D$ is allowed to make an arbitrary number of queries to an oracle that is either equal to~$P$ (if $b=0$) or $P'$ (if $b=1$), where $P'$ is obtained from $P$ by swapping the output values at two uniform points (which are given to $\D$.) Then $\D$ outputs a bit~$b'$.
\end{description}
\nopagebreak[4] Then $\left|\Pr[\mbox{$\D$ outputs 1}\mid b=0] - \Pr[\mbox{$\D$ outputs 1}\mid b=1]\right| \leq 4 \sqrt{q}\cdot 2^{-n/2}$.
\end{lemma}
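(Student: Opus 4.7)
The plan is to reduce the distinguishing problem to bounding the expected trace distance between two Phase~1 output states with differing oracles, and then to establish the tight $O(\sqrt{q}\cdot 2^{-n/2})$ bound on this distance via a compressed-oracle-style analysis tailored to two-way-accessible permutations.

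First I would exploit that Phase~2 permits an unbounded number of queries, so that $\D$ effectively learns the entire Phase~2 truth table. Using the symmetry that $P$ and $P' = \mathrm{swap}(P, x_0, x_1)$ are identically distributed (both uniform) whenever $P$ is uniform, I would relabel the $b{=}1$ experiment so that in both cases the Phase~2 oracle is the same uniform permutation $Q$, while the Phase~1 oracle is $Q$ if $b{=}0$ and $\mathrm{swap}(Q, x_0, x_1)$ if $b{=}1$. A short calculation then bounds the distinguishing advantage by $2\,\mathbb{E}_{Q,x_0,x_1}\!\bigl[\bigl\|\ket{\psi^Q}-\ket{\psi^{\mathrm{swap}(Q,x_0,x_1)}}\bigr\|\bigr]$, where $\ket{\psi^O}$ denotes $\D$'s purified Phase~1 state under oracle $O$.

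The main task is then to show this expected norm is at most $O(\sqrt{q}\cdot 2^{-n/2})$. A naive hybrid argument in the style of BBBV yields only $O(q\cdot 2^{-n/2})$, which is loose by a factor of $\sqrt{q}$ and would not match the known $2^{n/3}$-attack regime. To save that factor, I would adapt Zhandry's compressed-oracle framework to random permutations: represent each $\ket{\psi^Q}$ as a superposition over ``databases'' $D$ of at most $q$ input-output pairs, and note that the compressed representations for $Q$ and $\mathrm{swap}(Q, x_0, x_1)$ coincide whenever $D$ is disjoint from both $\{x_0, x_1\}$ in its domain and $\{Q(x_0), Q(x_1)\}$ in its image. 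For uniform $(x_0, x_1)$ the probability of such an intersection is $O(q\cdot 2^{-n})$, so the ``bad'' subspace on which the compressed states disagree has amplitude $O(\sqrt{q}\cdot 2^{-n/2})$, which yields the stated bound (with the constant $4$ accounting for the forward/inverse and two-point contributions).

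The principal technical obstacle is making this compressed-oracle analysis rigorous for permutations accessed in both directions: Zhandry's original framework is cleanest for random functions, and extending it here requires maintaining the partial-bijection constraint on recorded pairs and handling inverse queries symmetrically. An alternative---possibly more direct---route is an amplitude analysis that exploits the specific structure of the transposition $\mathrm{swap}(\cdot, x_0, x_1)$ together with the uniform randomness of $(x_0, x_1)$, avoiding the full compressed-oracle machinery while still achieving the crucial $\sqrt{q}$ (rather than $q$) dependence.
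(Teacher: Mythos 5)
Your opening reduction is sound and matches the paper's: by the symmetry $P\mapsto P\circ\swap_{s_0,s_1}$ on uniform permutations, the whole game collapses to bounding the (expected, over uniform $s_0,s_1$) distinguishability of the Phase-1 adversary--oracle state from its image under swapping the oracle's values at $s_0,s_1$, and you correctly identify that the entire difficulty is getting $\sqrt{q}$ rather than $q$ out of this bound. The gap is in how you propose to get it. Your primary route is a compressed-oracle analysis for a \emph{two-way-accessible random permutation}, which is precisely the tool the paper points out does not exist (``Zhandry's compressed oracle technique ... is not available for inverse-accessible permutations''). The issues are not merely bookkeeping: with inverse queries there is no known lazy-sampling/compression procedure that maintains a small database in superposition while remaining consistent with the uniform-permutation distribution, and your key step --- ``the compressed representations for $Q$ and $\swap(Q,s_0,s_1)$ coincide whenever $D$ avoids $\{s_0,s_1\}$ and $\{Q(s_0),Q(s_1)\}$'' --- presupposes exactly such a representation. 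Without it, the claim that the ``bad'' weight is $O(q/2^n)$ because ``the database has at most $q$ entries'' has no object to refer to.

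The paper's proof (\expref{Section}{sec:perm-ARL}) is essentially your vaguely-sketched ``alternative route,'' made precise. It uses an \emph{uncompressed} superposition oracle $\ket{\phi_0}_F \propto \sum_{\pi}\ket{\pi}_F$ (so no partial-bijection bookkeeping is needed), defines for each query a projector that excludes inputs $s_0,s_1$ on forward queries and excludes output values matching the contents of $F_{s_0},F_{s_1}$ on inverse queries, and verifies that these projected queries commute with the swap unitary $\S_{F_{s_0}F_{s_1}}$. Each projector cuts off only $O(2^{-n})$ squared norm on average over $s_0,s_1$; the quantum union bound (gentle measurement, \expref{Lemma}{lem:sgentle}) then lets one sum these $q$ contributions \emph{before} taking a single square root, which is exactly where the $\sqrt{q}$ saving over a BBBV-style hybrid comes from. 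If you want to salvage your write-up, replace the compressed-oracle step with this uncompressed-oracle-plus-quantum-union-bound argument; the rest of your structure can stay.
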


This is tight up to a constant factor (cf.\ \cite[Theorem~7]{GHHM20}).
The name ``resampling lemma'' is motivated by the fact that here reprogramming is restricted to resampling output values from the \emph{same} distribution used to initially sample outputs of~$P$. 
While \expref{Lemma}{lem:ada-blind-Intro} allows for more general resampling, \expref{Lemma}{lem:prp-arl-Intro} gives a bound that is independent of the number of queries $\D$ makes after the reprogramming occurs. 

\medskip\noindent{\bf 
Implications for a variant of the Hidden Shift problem.}
In the well-studied Hidden Shift problem~\cite{vDHI06}, one is asked to find an unknown shift~$s$ by querying an oracle for a (typically injective) function $f$ on a group $G$ and an oracle for its shift $f_s(x) = f(x \cdot s)$. If both oracles are classical, this problem has query complexity superpolynomial in $\log|G|$. If both oracles are quantum, then the query complexity is polynomial~\cite{EHK04} but the algorithmic difficulty appears to depend critically on the structure of $G$ (e.g., while $G=\Z_2^n$ is easy~\cite{Simon97}, $G=S_n$ appears to be intractable~\cite{AR16}).

The obvious connection between the Hidden Shift problem  and security of Even-Mansour in general groups has been considered before~\cite{AR16,H17,BNP18}. In our case, it leads us to define two natural variants of the Hidden Shift problem:
\begin{enumerate}
\item ``post-quantum'' Hidden Shift: the oracle for $f$ is quantum while the oracle for $f_s$ is classical;
\item ``two-sided'' Hidden Shift: in place of $f_s$, use $f_{s_1, s_2}(x) = f(x \cdot  s_1) \cdot s_2$; if $f$ is a permutation, grant access to $f^{-1}$ and $f_{s_1, s_2}^{-1}$ as well.
\end{enumerate}
These two variants can be considered jointly or separately and, for either variant, one can consider worst-case or average-case settings~\cite{AR16}.
Our main result implies:

\begin{theorem}[informal] \label{thm:hidden-shift-intro}
Solving the post-quantum Hidden Shift problem on any group $G$ requires a number of queries that is superpolynomial in $\log|G|$. This holds for both the one-sided and two-sided versions of the problem, and for both the worst-case and the average-case settings.
\end{theorem}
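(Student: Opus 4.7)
The plan is to reduce the post-quantum Hidden Shift problem to the distinguishing game of Theorem~\ref{thm:intro-EM}, and to observe that the proof of that theorem is group-agnostic.

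Consider first $G=\bool^n$ with $f$ a uniform permutation (the two-sided average-case variant is the hardest among the four, so I focus there). The oracles are $f$ (quantum, forward and inverse) and $f_{s_1,s_2}(x)=f(x\oplus s_1)\oplus s_2$ for uniform independent $s_1,s_2\in\bool^n$ (classical, forward and inverse). Under the identification $f\leftrightarrow P$ and $f_{s_1,s_2}\leftrightarrow E_{s_1,s_2}$, these are exactly the Even--Mansour oracles. Given a Hidden-Shift solver $\A$ using $q_P$ quantum and $q_E$ classical queries that succeeds with constant probability, I construct a distinguisher $\mathcal{D}$ between $(P,E_{s_1,s_2})$ and $(P,R)$ as follows: run $\A$ to obtain a candidate $(s_1^\star,s_2^\star)$, choose a uniform $x^\star$, use one further classical and one further quantum query to obtain the two sides of the identity $f_{s_1,s_2}(x^\star)=f(x^\star\oplus s_1^\star)\oplus s_2^\star$, and output $1$ iff they agree. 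In the real world the check always passes when $\A$ succeeds; in the ideal world, $R(x^\star)$ is essentially uniform conditioned on the transcript, so the check passes with probability $O(q_E/2^n)$. Hence $\mathcal{D}$ has constant advantage, and Theorem~\ref{thm:intro-EM} forces $q_E\sqrt{q_P}+q_P\sqrt{q_E}=\Omega(2^{n/2})$, i.e., $q_P+q_E=\Omega(2^{n/3})$ — superpolynomial in $\log|G|=n$. The one-sided variant is handled identically using the one-key Even--Mansour cipher (also covered by Theorem~\ref{thm:intro-EM}), or the forward-only random-function variant mentioned immediately after that theorem when $f$ is a function rather than a permutation.

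For a general finite group $G$, I would repeat the argument with $E_{k_1,k_2}(x)=k_2\cdot P(x\cdot k_1)$, where $P$ is a uniform permutation of $G$ and $k_1,k_2$ are uniform in $G$. The proof of Theorem~\ref{thm:intro-EM} together with its two main tools (Lemmas~\ref{lem:ada-blind-Intro} and~\ref{lem:prp-arl-Intro}) uses the structure of $\bool^n$ only through the two abstract properties: (i)~for every fixed $x\in G$, left/right multiplication by a uniform element of $G$ is uniform on $G$, and (ii)~the uniform distribution on permutations of $G$ is invariant under swapping the images of two points. Both are purely group-theoretic, so the bound $\Omega(|G|^{1/3})$ transfers verbatim, which is still superpolynomial in $\log|G|$. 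The worst-case version is at least as hard as the average-case one (a worst-case solver also solves a uniformly random instance with the same success probability), so all four combinations of (one/two-sided) $\times$ (worst/average) follow.

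The main obstacle is a careful line-by-line audit of the Even--Mansour proof to confirm that no step silently relies on the $\F_2$-vector-space structure of $\bool^n$ — for example on $\oplus$ being commutative or involutive, or on a convenient Fourier basis — rather than on the abstract properties (i) and~(ii) alone. An alternative route of embedding $G$ into some $\bool^N$ with $2^N\geq|G|$ does not work, because there is no compatible embedding of the group operation, so a direct re-run of the Even--Mansour argument intrinsically in $G$ is the cleanest path.
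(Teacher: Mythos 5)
Your proposal follows essentially the same route as the paper: observe that the proof of Theorem~\ref{thm:intro-EM} is group-agnostic (replacing $2^n$ by $|G|$ throughout), reduce average-case search to the average-case decision game of Theorem~\ref{thm:intro-EM} (the paper asserts this in one phrase; your explicit distinguisher that verifies the candidate shift with one extra query pair is the standard way to instantiate it), note that worst-case is at least as hard as average-case, and invoke the forward-only analogue for the one-sided problem. One small correction: the one-key Even--Mansour cipher $E_k(x)=P(x\oplus k)\oplus k$ corresponds to the \emph{two-sided} problem with correlated shifts $s_1=s_2$, not to the one-sided problem $f_s(x)=f(x\cdot s)$, so only your second option (the forward-only random-function variant, Theorem~\ref{thm:forward-only}) handles the one-sided case --- which is exactly the route the paper takes.
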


\expref{Theorem}{thm:hidden-shift-intro} follows from the proof of \expref{Theorem}{thm:intro-EM} via a few straightforward observations. First, an inspection of the proof shows that the particular structure of the underlying group (i.e., the XOR operation on~$\bool^n$) is not relevant; the proof works identically for any group, simply replacing~$2^n$ with $|G|$ in the bounds.
The two-sided case of \expref{Theorem}{thm:hidden-shift-intro} then follows almost immediately: worst-case search is at least as hard as average-case search, and average-case search is at least as hard as average-case decision, which is precisely \expref{Theorem}{thm:intro-EM} (with the appropriate underlying group). Finally, as noted earlier, an appropriate analogue of \expref{Theorem}{thm:intro-EM} also holds in the ``forward-only'' case where $E(x) = P(x \oplus k)$ and $P$ is a random function. This yields the one-sided case of \expref{Theorem}{thm:hidden-shift-intro}.

\subsection{Paper Organization}
In \expref{Section}{sec:lemmas} we state the technical lemmas needed for our main result. 
In \expref{Section}{sec:EM} we prove \expref{Theorem}{thm:intro-EM}, showing post-quantum security of the Even-Mansour cipher (both the two-key and one-key variants), based on the technical lemmas. In \expref{Section}{sec:lemma-proofs} we prove the technical lemmas themselves. Finally, in \expref{Appendix}{app:forward-EM}, we give a proof of post-quantum security for the one-key, ``forward-only'' variant of Even-Mansour. 
While this is a relatively straightforward adaptation of the proof of our main result, it does not follow directly from it; moreover, it is substantially simpler and so may serve as a good warm-up for the reader before tackling our main result.


\section{Reprogramming Lemmas}\label{sec:lemmas}

In this section we collect some technical lemmas that we will need for the proof of \expref{Theorem}{thm:intro-EM}. We first discuss a particular extension of the ``blinding lemma'' of Alagic et al.~\cite[Theorem~11]{AMRS20}, which formalizes \expref{Lemma}{lem:ada-blind-Intro}. We then state a generalization of the ``reprogramming lemma'' of Grilo et al.~\cite{GHHM20}, which formalizes \expref{Lemma}{lem:prp-arl-Intro}. The complete proofs of these technical results are given in \expref{Section}{sec:lemma-proofs}.


We frequently consider adversaries with quantum access to some function $f : \bool^n \rightarrow \bool^m$. This means the adversary is given access to a black-box gate implementing the $(n+m)$-qubit unitary operator $\ket{x}\ket{y} \mapsto \ket{x}\ket{y \oplus f(x)}$.


\subsection{Arbitrary Reprogramming}


Consider a reprogramming experiment that proceeds as follows. First, a distinguisher $\D$ specifies an arbitrary function~$F$ along with a probabilistic algorithm~$\mathcal B$ which describes how to reprogram $F$. Specifically, the output of $\mathcal B$ is a set of points $B_1$ at which $F$ may be reprogrammed, along with the values the function should take at those potentially reprogrammed points. Then $\D$ is given quantum access to either~$F$ or the reprogrammed version of~$F$, and its goal is to determine which is the case. When $\D$ is done making its oracle queries, it is also given the randomness that was used to run~$\mathcal B$. 
Intuitively, the only way $\D$ can tell if its oracle has been reprogrammed is by querying with significant amplitude on some point in~$B_1$. We bound $\D$'s advantage in terms of the probability that any particular value lies in the set $B_1$ defined by~$\mathcal B$'s output.

By suitably modifying the proof of Alagic et al.~\cite[Theorem~11]{AMRS20}, one can show that the distinguishing probability of $\D$ in the game described above is at most $2q \cdot \sqrt{\epsilon}$, where $q$ is an upper bound on the number of oracle queries and
$\epsilon$ is an upper bound on the probability that any given input $x$ is reprogrammed (i.e., that $x \in B_1$). However, that result 
is only proved for distinguishers with a fixed upper bound on the number of queries they make. 
To obtain a tighter bound for our application, we need a version of the result for distinguishers that may \emph{adaptively} choose how many queries they make based on outcomes of intermediate measurements. 
We recover the aforementioned bound if we let~$q$ denote the number of queries made by $\D$ \emph{in expectation}. 

For a function $F:\bool^m \rightarrow \bool^n$ and a set $B \subset \bool^m \times \bool^n$ such that each $x \in \bool^m$ is the first element of at most one tuple in~$B$,   define
\[
F^{(B)}(x) := 
\begin{cases}
y &\text{if } (x, y) \in B\\
F(x) &\text{otherwise.}
\end{cases}
\]
We prove the following in \expref{Section}{sec:reprog-lemma}:


%


\begin{lemma}[Formal version of \expref{Lemma}{lem:ada-blind-Intro}]\label{lem:ada-blind}
Let $\D$ be a distinguisher in the following game:
\begin{description}
	\item[Phase 1:]  $\D$ outputs descriptions of a function $F_0=F: \bool^m \rightarrow \bool^n$ and a randomized algorithm $\mathcal B$ whose output is a set $B \subset \bool^m \times \bool^n$ where each $x \in \bool^m$ is the first element of at most one tuple in~$B$. Let $B_1 = \{x \mid \exists y: (x, y) \in B\}$
and $	\epsilon = \max_{x \in \bool^m}\left\{\Pr_{B \leftarrow \B}[x \in B_1]\right\}.$
	
	\item[Phase 2:]  $\mathcal{B}$ is run to obtain~$B$. Let $F_1=F^{(B)}$.
	A uniform bit~$b$ is chosen, and
	$\D$ is given quantum access to~$F_b$. 
	\item[Phase 3:]  $\D$ loses access to $F_b$, and  receives the randomness~$r$  used to invoke~$\algo B$ in phase~2. Then $\D$ outputs a guess~$b'$. 
\end{description}
	For any $\D$ making $q$ queries in expectation when its oracle is $F_0$, 
it holds that 
$$\left|\Pr[\mbox{$\D$ outputs 1} \mid b=1] - \Pr[\mbox{$\D$ outputs 1} \mid b=0]\right| \leq 2q \cdot \sqrt{\epsilon}\,.$$
\end{lemma}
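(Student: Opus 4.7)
The plan is to extend the hybrid-argument proof of the blinding lemma of Alagic et al.~\cite[Theorem~11]{AMRS20}, which yields a $2q_{\max}\sqrt{\epsilon}$ bound under a \emph{worst-case} query bound $q_{\max}$, so that $q_{\max}$ may be replaced by the expected query count. First I would put $\D$ in purified form: defer all intermediate measurements, introduce a halt-flag register $\mathsf{H}$ initialised to $0$, implement each query round as an oracle call controlled on $\mathsf{H}=0$, and let $\D$'s own unitaries decide coherently when to flip $\mathsf{H}=1$. Measuring the history of $\mathsf{H}$ at the end of the game yields a classical random variable $N$ equal to the number of non-trivial oracle queries $\D$ made; by hypothesis $\mathbb{E}[N]\leq q$ when the oracle is $F_0$.

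I would then run the standard telescoping hybrid between $F_0$ and $F_1$ over a large but finite number $Q$ of potential query rounds (taking $Q\to\infty$ at the end is harmless since halted rounds contribute nothing). Denoting by $\Pi_i$ the projector onto ``$\mathsf{H}=0$ just before round $i$'', by $\Pi_{B_1}$ the projector onto $B_1$ on the query register, and by $\ket{\psi_{i-1}^0}$ the joint state after $i-1$ rounds against $F_0$, one obtains a trace distance bound of $2\sum_{i=1}^Q c_i$, where $c_i := \|(\Pi_{B_1}\otimes \id)\Pi_i\ket{\psi_{i-1}^0}\|$; here I use that the oracle unitaries for $F_0$ and for $F_1$ differ only on inputs in $B_1$ and act trivially when $\mathsf{H}=1$.

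The crucial new step is a double application of Cauchy-Schwarz: within any single realization, $\sum_{i=1}^N c_i \leq \sqrt{N}\sqrt{\sum_i c_i^2}$, and Cauchy-Schwarz in expectation gives
\[
\mathbb{E}\!\left[\sqrt{N}\sqrt{\sum_i c_i^2}\right] \;\leq\; \sqrt{\mathbb{E}[N]}\,\sqrt{\mathbb{E}\!\left[\sum_i c_i^2\right]}.
\]
The first factor is at most $\sqrt{q}$ by hypothesis. For the second, the key observation is that in the $F_0$ world $\ket{\psi_{i-1}^0}$ is independent of $B$, so $\mathbb{E}_B[c_i^2]\leq \epsilon\cdot\|\Pi_i\ket{\psi_{i-1}^0}\|^2$; summing over $i$ and using the tail-sum identity $\sum_i\|\Pi_i\ket{\psi_{i-1}^0}\|^2 = \mathbb{E}[N]\leq q$ yields $\mathbb{E}[\sum c_i^2]\leq \epsilon q$. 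Combined with the factor of $2$ from the telescoping bound, this gives the desired $2q\sqrt{\epsilon}$.

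The main obstacle is the quantum-formal bookkeeping that legitimises treating $N$ as a classical random variable with $\mathbb{E}[N]\leq q$ while simultaneously working with a purified version of $\D$. The deferred-measurement trick with the halt register resolves this cleanly, but one must verify that $\Pi_i\ket{\psi_{i-1}^0}$ genuinely does not depend on the reprogramming randomness $r$ sampled in Phase~2, which is what allows the per-step bound $\mathbb{E}_B[c_i^2]\leq \epsilon\,\|\Pi_i\ket{\psi_{i-1}^0}\|^2$ to be summed in $i$ independently of $N$. This independence follows because in world $0$ the adversary interacts with $F$ rather than $F^{(B)}$, and $r$ is revealed only in Phase~3, so it never feeds back into any of the oracle states used during Phase~2.
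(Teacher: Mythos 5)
Your high-level plan (telescoping hybrid, exploiting that the world-$0$ state is independent of $B$, converting per-round disturbance into probability mass on $B_1$) matches the paper's, and your final numerology lands on $2q\sqrt{\epsilon}$. But your ``crucial new step'' does not hold in the model you set up, and this is a genuine gap. After you purify $\D$ with a coherent halt register, the quantities $c_i = \|(\Pi_{B_1}\otimes\id)\Pi_i\ket{\psi_{i-1}^0}\|$ are \emph{global} amplitudes, not per-realization quantities: every one of the $Q$ rounds can carry a nonzero querying amplitude simultaneously, so the inequality $\sum_{i=1}^{N} c_i \le \sqrt{N}\sqrt{\sum_i c_i^2}$ has no meaning --- the honest Cauchy--Schwarz over rounds gives $\sum_{i=1}^{Q} c_i \le \sqrt{Q}\sqrt{\sum_i c_i^2}$, and $\sqrt{Q}$ diverges as you pad with potential rounds. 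Concretely, your per-round bound yields $\mathbb{E}_B[c_i]\le\sqrt{\epsilon\, p_{i-1}}$ where $p_{i-1}=\|\Pi_i\ket{\psi_{i-1}^0}\|^2$, so the telescoped sum is only bounded by $2\sqrt{\epsilon}\sum_i\sqrt{p_{i-1}}$, which for $Q$ rounds each with $p_{i-1}=q/Q$ equals $2\sqrt{\epsilon q Q}\gg 2q\sqrt{\epsilon}$. The square root on $p_{i-1}$ is exactly the loss incurred by treating the halt decision coherently, and no reweighting of the Cauchy--Schwarz repairs it. Your subsequent step $\mathbb{E}[\sqrt{N}\sqrt{\sum_i c_i^2}]\le\sqrt{\mathbb{E}[N]}\sqrt{\mathbb{E}[\sum_i c_i^2]}$ compounds the confusion by treating $N$ and $\sum_i c_i^2$ as jointly distributed classical random variables, which they are not once measurements are deferred.

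The paper's proof keeps the intermediate measurement rather than deferring it: the decision to query is modeled as a pinching $\mathcal M_C$ of a control qubit applied before each (controlled) oracle call. Because the state entering round $k$ is then block-diagonal in the control register, the round-$k$ hybrid difference factors \emph{exactly} as $p_{k-1}\cdot\delta\bigl(\sigma_{k-1},\,\mathcal O_{\tilde F^{(B)}}(\sigma_{k-1})\bigr)$ with $\sigma_{k-1}$ the normalized querying block, and the single-round quantity is bounded by $2\sqrt{\epsilon}$ (purify $\sigma_{k-1}$, use that the oracle acts trivially off $B_1$, then Jensen over $B$ --- this is the only place a square root appears, and it is taken \emph{inside} a single round, over the randomness of $B$, not across rounds). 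Summing $p_{k-1}\cdot 2\sqrt{\epsilon}$ over rounds gives $2q\sqrt{\epsilon}$ directly, with no Cauchy--Schwarz over the round index at all. So the measurement you tried to defer is precisely the structure that makes the expected-query bound go through; to fix your argument you would need to condition on the (measured) halt history so that each round contributes probability $p_{k-1}$ linearly rather than amplitude $\sqrt{p_{k-1}}$.
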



\subsection{Resampling}

Here, we consider the following experiment: first, a distinguisher $\D$ is given quantum access to an oracle for a random function~$F$; then, in the second stage, $F$ may be ``reprogrammed'' so its value on a single, uniform point~$s$ is changed to an independent, uniform value. Because the distribution of $F(s)$ is the same both before and after any reprogramming, we refer to this as ``resampling.'' The goal for $\D$ is to determine whether or not its oracle was resampled. Intuitively, the only way $\D$ can tell if this is the case---even if it is given $s$ and unbounded access to the oracle in the second stage---is if $\D$ happened to put a large amplitude on $s$ in some query to the oracle in the first stage.
The lemmas we state here formalize this intuition. 

We begin by establishing notation and recalling a result of Grilo et al.~\cite{GHHM20}. 
Given a function $F:\bool^m \rightarrow \bool^n$ and $s \in \bool^m$, $y \in \bool^n$, define the ``reprogrammed'' function $F_{s \mapsto y}:\bool^m \rightarrow \bool^n$ as
\begin{eqnarray*}
F_{s \mapsto y} (w) = 
\begin{cases}
y &\text{if }w = s\\
F(w) &\text{otherwise.}
\end{cases}
\end{eqnarray*}
%
The following is a special case of \cite[Prop.~1]{GHHM20}: 
\begin{lemma}[Resampling for random functions]\label{lem:arl}
Let $\D$ be a distinguisher in the following game:
	\begin{description} 
		\item[Phase 1:] A uniform $F:\bool^m \rightarrow \bool^n$ is chosen, and $\D$ is given quantum access to~$F_0=F$. 
		\item[Phase 2:] Uniform $s \in \bool^m$, $y \in \bool^n$ are chosen, and we let $F_1=F_{s \mapsto y}$.  
		A uniform bit~$b$ is chosen, and
		$\D$ is given $s$ 
		and quantum access to~$F_b$. Then $\D$ outputs a guess~$b'$.
	\end{description}
	For any $\D$ making at most $q$ queries to $F_0$ in phase 1, it holds that
	$$\left|\Pr[\mbox{$\D$ outputs 1} \mid b=1] - \Pr[\mbox{$\D$ outputs 1} \mid b=0]\right| \leq 1.5 \sqrt{q/2^m}\,.$$
\end{lemma}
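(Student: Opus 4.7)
The plan is to reduce the game to a standard BBBV-style query-magnitude bound via a change of variables that makes the phase-2 oracle look identical in the two branches. Specifically, I would re-describe the joint distribution as follows. Let $F':\bool^m\to\bool^n$ be a uniform random function and $s\in\bool^m$ uniform, independent of $F'$. In the branch $b=0$, set phase-1 oracle $=$ phase-2 oracle $= F'$. In the branch $b=1$, set the phase-2 oracle to $F'$ and the phase-1 oracle to $F'_{s\mapsto y'}$, where $y'$ is a fresh uniform value. A direct check shows this produces the same joint distribution over $\D$'s entire view as the original game: in both cases the phase-2 oracle is marginally a uniform function, and the phase-1 oracle differs from it exactly when $b=1$, by an independent uniform resampling of a single uniform point $s$.

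Next, I would argue that it suffices to bound the distance between $\D$'s phase-1 output states. Conditioned on $F'$ and $s$, the entire phase-2 computation (including revealing $s$, giving oracle access to $F'$, and producing the output bit) is the \emph{same} quantum channel in both branches. Since trace distance is monotone under quantum channels, the distinguishing advantage is at most the trace distance between $\D$'s joint states at the end of phase 1 under oracles $F'$ and $F'_{s\mapsto y'}$, averaged over $F',s,y'$.

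The core step is then the BBBV inequality for a single-point oracle change: if $\ket{\phi_i}$ denotes the state immediately before query $i$ in the (unreprogrammed) run with oracle $F'$, then
\[
\bigl\|\,\ket{\psi_q^{F'}}-\ket{\psi_q^{F'_{s\mapsto y'}}}\,\bigr\|\;\le\;2\sqrt{\sum_{i=1}^{q}\bigl\|P_s\ket{\phi_i}\bigr\|^2},
\]
where $P_s$ projects the query register onto $s$. Taking expectation over uniform $s$ and using Jensen's inequality with $\sum_{s}\|P_s\ket{\phi_i}\|^2=1$ yields
\[
\mathbb{E}_s\!\left[2\sqrt{\sum_i\|P_s\ket{\phi_i}\|^2}\right]\le 2\sqrt{\mathbb{E}_s\!\sum_i\|P_s\ket{\phi_i}\|^2}=2\sqrt{q/2^m}.
\]
Taking the subsequent expectation over $F'$ gives a bound of the same form, which in turn bounds the distinguishing advantage.

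The main obstacle I expect is justifying the change of variables carefully: one must argue not only that the marginal distribution of the phase-2 oracle is the same in both branches, but that the full joint distribution of (phase-1 oracle, phase-2 oracle, $s$) over $\D$'s view is faithfully preserved, so that the reduction to a single-point BBBV argument is sound. Once that is in place, the remaining work is standard, and the only slack is the constant $1.5$ in the statement versus the $2$ produced by the textbook BBBV bound; closing this gap likely requires passing to trace distance rather than $2$-norm distance of pure states and tracking an extra $\tfrac12$ factor, or invoking a sharper form of BBBV, as is done in~\cite{GHHM20}.
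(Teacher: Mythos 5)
First, a point of reference: the paper does not actually prove this lemma---it is imported verbatim as a special case of \cite[Prop.~1]{GHHM20}, and the only resampling statement proved from scratch in the paper is the permutation analogue, \expref{Lemma}{lem:prp-arl}, in \expref{Section}{sec:perm-ARL}. Measured against that argument, your setup is sound and in the right spirit: the change of variables moving the resampling from the phase-2 oracle to the phase-1 oracle is valid (conditioned on $s$, the pair $(F,F_{s\mapsto y})$ has a symmetric joint law---agreement off $s$ and two independent uniform values at $s$), and the data-processing step correctly reduces the advantage to $\Exp_{F',s,y'}\bigl[\delta\bigl(\ket{\psi_q^{F'}},\ket{\psi_q^{F'_{s\mapsto y'}}}\bigr)\bigr]$. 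The genuine gap is the displayed ``BBBV inequality,'' which is false as written. The hybrid/BBBV argument gives $\bigl\|\ket{\psi_q^{F'}}-\ket{\psi_q^{G}}\bigr\|\le 2\sum_{i}\|P_s\ket{\phi_i}\|$, i.e., an $\ell_1$ sum of per-query amplitudes; pulling this into a single square root of summed squares costs a factor $\sqrt q$ by Cauchy--Schwarz, and after averaging over $s$ you obtain only $2q\cdot 2^{-m/2}$---quadratically worse than the lemma and useless for the paper's application. The sharper form you wrote, with the squares summed inside one square root over the \emph{unreprogrammed} run's states, does not hold pointwise: take $F'\equiv 0^n$, $y'\neq 0^n$, and let $\D$ run Grover search for the reprogrammed point with $q\approx \frac{\pi}{4}2^{m/2}$ queries. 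In the $F'$-run the state remains the uniform superposition, so $\sum_i\|P_s\ket{\phi_i}\|^2=q\cdot 2^{-m}$ and your right-hand side is $O(2^{-m/4})$, while the two final states are nearly orthogonal, so the left-hand side is $\approx\sqrt 2$. (This does not contradict the lemma, since for uniformly random $F'$ the distinguisher has no reference copy of $F'$ to search against; but it shows that no proof can proceed pointwise from the query magnitudes of the unreprogrammed run alone.)

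What is missing is exactly the tool the paper deploys for \expref{Lemma}{lem:prp-arl}: the quantum union bound / gentle measurement lemma (\expref{Lemma}{lem:sgentle}). One shows that each of $\ket{\psi_q^{F'}}$ and $\ket{\psi_q^{F'_{s\mapsto y'}}}$ is, up to error $\sqrt{2\sum_i\|P_s\ket{\phi_i^H}\|^2}$ in Euclidean norm, equal to a common ``good'' state obtained by projecting every query off $s$ (common because the two oracles agree off $s$); crucially, the error term for the \emph{reprogrammed} run must be controlled in expectation using the fact that $F'_{s\mapsto y'}$ is itself uniform conditioned on $s$, so that $\Exp\bigl[\sum_i\|P_s\ket{\phi_i^{H}}\|^2\bigr]=q/2^m$ for both runs. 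With that substitution your outline goes through and yields a constant around $2\sqrt 2$ (the constant $1.5$ requires the finer trace-distance bookkeeping of \cite{GHHM20}); without it, the proof has no valid quantitative core, and the issue is not merely the constant $1.5$ versus $2$ as your closing remark suggests.
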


We extend the above to the case of two-way accessible, random \emph{permutations}. Now, a random permutation~$P:\bool^n \rightarrow \bool^n$ is chosen in the first phase; in the second phase, $P$ may be reprogrammed by swapping the outputs corresponding to two uniform inputs. For $a, b \in \bool^n$, let $\swap_{a, b}: \bool^n \rightarrow \bool^n$ be the permutation that maps $a \mapsto b$ and $b \mapsto a$ but is otherwise the identity. We prove the following in \expref{Section}{sec:perm-ARL}:


\begin{lemma}[Formal version of \expref{Lemma}{lem:prp-arl-Intro}]\label{lem:prp-arl}\label{lem:prp-arl-proved}
Let $\D$ be a distinguisher in the following game:
	\begin{description} 
		\item[Phase 1:] A uniform permutation $P:\bool^n \rightarrow \bool^n$ is chosen, and $\D$ is given quantum access to~$P_0=P$ and $P^{-1}_0=P^{-1}$. 
		\item[Phase 2:] Uniform $s_0, s_1 \in \bool^n$ are chosen, and we let $P_1=P\circ \swap_{s_0,s_1}$.  
		Uniform~$b \in \bool$ is chosen, and
		$\D$ is given $s_0, s_1$, 
		and quantum access to~$P_b, P^{-1}_b$. Then $\D$ outputs a guess~$b'$.
	\end{description}
	For any $\D$ making at most $q$ queries (combined) to $P_0, P^{-1}_0$ in the first phase, 
	$\left|\Pr[\mbox{$\D$ outputs 1} \mid b=1] - \Pr[\mbox{$\D$ outputs 1} \mid b=0]\right| \leq 4 \sqrt{q/2^n}$.
\end{lemma}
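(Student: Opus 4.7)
\medskip\noindent\textbf{Proof plan.}
I plan to extend the proof of Lemma~\ref{lem:arl} (resampling for random functions) to two-way-accessible random permutations, in three steps.

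\emph{First, reduce to Phase 1 states.} Phase~2 is a fixed quantum post-processing applied to $\D$'s Phase 1 state together with the classical inputs $(s_0,s_1)$ and oracle access to $P_b$; by data processing, the distinguishing advantage is bounded by the trace distance between the joint ``pre--Phase-2'' density matrices for $b\in\bool$. Exploiting that $\swap_{s_0,s_1}$ is an involution---so $P\mapsto P\circ\swap_{s_0,s_1}$ is a bijection on permutations for any fixed $(s_0,s_1)$---a change of variables together with joint convexity of trace distance reduces the problem to bounding
\[
\Exp_{P,s_0,s_1}\bigl\lVert\,\ket{\phi_P}-\ket{\phi_{P\circ\swap_{s_0,s_1}}}\,\bigr\rVert\,,
\]
where $\ket{\phi_P}$ is $\D$'s Phase 1 state when the oracle is~$P$.

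\emph{Second, apply the state-version of semi-classical O2H.} Modelling the joint forward/inverse oracle as a single oracle with a direction register, the oracles for $P$ and $P':=P\circ\swap_{s_0,s_1}$ agree everywhere except on the ``sensitive'' set $S_{P,s_0,s_1} = \{(\mathrm{fwd},s_0),(\mathrm{fwd},s_1),(\mathrm{inv},P(s_0)),(\mathrm{inv},P(s_1))\}$. The state version of the Ambainis--Hamburg--Unruh semi-classical O2H lemma, applied to $\D$'s Phase 1 execution via a punctured-oracle intermediate, then gives $\lVert\ket{\phi_P}-\ket{\phi_{P'}}\rVert \leq 2\sqrt{P_{\mathrm{find}}}$, where $P_{\mathrm{find}}$ is the probability that a semi-classical measurement detects a Phase 1 query with support on $S_{P,s_0,s_1}$.

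\emph{Third, average and apply Jensen.} Since $(s_0,s_1)$ is uniform and independent of Phase 1 (being unrevealed until Phase 2), the expected $\ell_2$-weight on $\{s_0,s_1\}$ at any fixed forward query is at most $2/2^n$; and since $P$ is a bijection, $(P(s_0),P(s_1))$ is also uniform, so the expected weight on $\{P(s_0),P(s_1)\}$ at any fixed inverse query is also at most $2/2^n$. Summing over the $q$ Phase 1 queries yields $\Exp[P_{\mathrm{find}}]\leq 4q/2^n$, and by Jensen's inequality applied to the concave $\sqrt{\cdot}$, $\Exp[\lVert\ket{\phi_P}-\ket{\phi_{P'}}\rVert]\leq 2\sqrt{\Exp[P_{\mathrm{find}}]}\leq 4\sqrt{q/2^n}$, as required.

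The central subtlety---and what distinguishes the permutation case from the function case---is handling inverse queries, whose sensitive inputs $\{P(s_0),P(s_1)\}$ depend on~$P$ and so do not obviously factor through the expectation over $(s_0,s_1)$. This is resolved by the bijectivity of~$P$: for any fixed~$P$, uniform $(s_0,s_1)$ pushes forward to uniform $(P(s_0),P(s_1))$, restoring the symmetry needed for averaging. Equally important is the use of the \emph{state} version of semi-classical O2H rather than the probability version (which would carry an extra $O(\sqrt{q})$ loss), in order to obtain the tight $\sqrt{q/2^n}$ scaling instead of $q/\sqrt{2^n}$.
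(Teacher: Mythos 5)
Your overall architecture is sound and in fact closely parallels the paper's proof: the change of variables $P\mapsto P\circ\swap_{s_0,s_1}$ reducing the problem to distinguishing the two Phase-1 states $\ket{\phi_P}$ and $\ket{\phi_{P\circ\swap_{s_0,s_1}}}$ is exactly the move the paper makes in its purified-oracle picture (where it becomes "distinguish $\ket\psi$ from $\S_{F_{s_0}F_{s_1}}\ket\psi$"), your identification of the sensitive set $\{s_0,s_1\}$ for forward queries and $\{P(s_0),P(s_1)\}$ for inverse queries matches the paper's projectors $P_{s_0s_1}$ and $P^{\mathrm{inv}}_{s_0s_1}$, and your observation that bijectivity of $P$ makes $(P(s_0),P(s_1))$ uniform is precisely what drives the paper's bound $\Exp_{s_0,s_1}[\epsilon_i]\le 2\cdot 2^{-n}$ for inverse queries. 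The final averaging-plus-Jensen step is also identical.

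The genuine gap is in your second step. The semi-classical O2H of Ambainis--Hamburg--Unruh, in \emph{all} of its variants (probability, square-root-of-probability, and state versions), carries a multiplicative $\sqrt{d+1}$ in front of $\sqrt{P_{\mathrm{find}}}$, where $d$ is the query depth; the bound you invoke, $\lVert\ket{\phi_P}-\ket{\phi_{P'}}\rVert\le 2\sqrt{P_{\mathrm{find}}}$ with no dependence on $q$, is not a theorem in that framework and is in fact false at that level of generality: for Grover search with $S=\{s\}$ one has $P_{\mathrm{find}}=\Theta(q/N)$ while the achievable state distance is $\Theta(q/\sqrt{N})$, which exceeds $2\sqrt{P_{\mathrm{find}}}=\Theta(\sqrt{q/N})$ once $q$ is larger than a constant. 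Plugging the actual AHU bound into your argument yields $O(q\cdot 2^{-n/2})$, not $O(\sqrt{q}\cdot 2^{-n/2})$, i.e., exactly the lossy scaling you were trying to avoid. The reason a lossless per-state bound \emph{does} hold in the present setting is that $(s_0,s_1)$ is independent of the adversary's Phase-1 view, so no adaptive amplitude concentration on the sensitive set is possible; but extracting the bound $\lVert\ket{\phi_P}-\ket{\phi_{P'}}\rVert=O\bigl(\sqrt{\textstyle\sum_i\epsilon_i}\bigr)$ from that independence requires a quantum union bound (gentle measurement) argument --- decompose the final state into a "good" component obtained by projecting every query off the sensitive set, note that the good component is identical for $P$ and $P'$, and bound the bad component's norm by $\sqrt{2\sum_i\epsilon_i}$ without a Cauchy--Schwarz loss over queries --- or an equivalent compressed/superposition-oracle argument. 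This is precisely the machinery the paper supplies via the O'Donnell--Venkateswaran lemma (Lemma~\ref{lem:sgentle}) together with the commutation relations \eqref{eq:commutes1}--\eqref{eq:commutes2}, and it is the one ingredient your proposal cannot obtain by citation; everything else in your outline goes through.
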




\section{Post-Quantum Security of Even-Mansour}\label{sec:EM}




We now establish the post-quantum security of the Even-Mansour cipher based on the lemmas from the previous section. Recall that the Even-Mansour cipher is defined as $E_k(x) := P(x \oplus k_1) \oplus k_2$,
where 
$P : \bool^n \rightarrow \bool^n$ is a public random permutation and $k = (k_1, k_2) \in \bool^{2n}$ is a key.
Our proof assumes only that the marginal distributions of $k_1$ and $k_2$ are each uniform. 
This covers the original Even-Mansour cipher~\cite{EM97} where $k$ is uniform over~$\bool^{2n}$ as well as the one-key variant~\cite{DKS12} where $k_1$ is uniform and then $k_2$ is set equal to~$k_1$.

For $E_k$ to be efficiently invertible, the permutation $P$ must itself support efficient inversion; that is, the oracle for $P$ must be accessible in both the forward and inverse directions. We thus consider adversaries $\A$ who can access both the cipher $E_k$ and the permutation $P$ in both the forward and inverse directions. The goal of $\A$ is to distinguish this world from the ideal world in which it interacts with independent random permutations~$R, P$.
In this section, it will be implicit in our notation that all oracles are two-way accessible. 

In the following, we let $\Perms_n$ be the set of all permutations of~$\bool^n$. We write $E_k[P]$ to denote the Even-Mansour cipher using permutation $P$ and key~$k$; we do this both to emphasize the dependence on $P$, and to enable references to Even-Mansour with a permutation other than~$P$. Our main result is as follows:


\begin{theorem}[\expref{Theorem}{thm:intro-EM}, restated]\label{thm:full} 
Let	
$D$ be a distribution over $k=(k_1, k_2)$ such that the marginal distributions of $k_1$ and $k_2$ are each uniform, and let
$\A$ be an adversary making $\formerqC$ classical queries to its first oracle and $\formerqQ$ quantum queries to its second oracle. 
Then
	\begin{eqnarray*}
		\lefteqn{\left|\Pr_{\substack{k \leftarrow D \vspace*{1pt}\\P \from \Perms_n }} \left[\A^{E_k[P], P}(1^n) = 1\right]
			- \Pr_{R, P \from \Perms_n} \left[\A^{R, P}(1^n) = 1\right]\right|} \hspace*{1.5in} \\
		& \leq & 10 \cdot 2^{-n/2}\left(\formerqC\sqrt{\formerqQ}+\formerqQ\sqrt{\formerqC}\right).
	\end{eqnarray*}
\end{theorem}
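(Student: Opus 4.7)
The plan is a hybrid argument that gradually replaces responses to the classical $E_k$-oracle with fresh uniform values (consistent with a permutation), correspondingly reprogramming the underlying permutation $P$ so that the adversary's view remains consistent. Viewing $\A$'s execution as $q_E+1$ blocks of quantum queries to $P$ interspersed with the $q_E$ classical queries, let $Q_i$ denote the number of quantum queries made strictly before the $i$-th classical query. I define hybrids $H_0,\ldots,H_{q_E}$ in which $H_i$ lazy-samples the first $i$ classical responses $y_j$ uniformly at random (subject to being a valid permutation) and maintains a reprogrammed permutation $\tilde P^{(j)}$ obtained from $P$ by performing, after each classical query $j' \le \min(i,j)$, a swap at $a_{j'} = x_{j'}\oplus k_1$ and $b_{j'} = (\tilde P^{(j'-1)})^{-1}(y_{j'}\oplus k_2)$, so that $\tilde P^{(j')}(x_{j'}\oplus k_1) = y_{j'}\oplus k_2$. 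Then $H_0$ is the real world, while $H_{q_E}$ matches the ideal world $(R,P)$ up to a relabeling of randomness, using that $\tilde P^{(q_E)}$ is marginally a uniform permutation.

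To bound $|\Pr_{H_{i-1}}[\A=1] - \Pr_{H_i}[\A=1]|$ I invoke \expref{Lemma}{lem:prp-arl-proved} (resampling for permutations). The two hybrids differ only in whether the swap at $(a_i,b_i)$ is applied at the $i$-th classical-query boundary, after which play proceeds identically from the common reprogrammed permutation. The key observation is that in $H_{i-1}$ the key $(k_1,k_2)$ remains uniform from $\A$'s view: the first $i-1$ classical responses are already fresh random (hence independent of $k$), and quantum queries to $\tilde P^{(\cdot)}$ carry no information about $k$. Consequently $a_i$ and $b_i$ are uniform (and, in the two-key variant, independent), where the latter uses that $\tilde P^{(i-1)}$ is marginally uniform as the composition of $P$ with a fixed sequence of swaps. \expref{Lemma}{lem:prp-arl-proved} then bounds the per-step advantage by $4\sqrt{Q_i/2^n}$, and summing in $i$ gives a contribution of order $q_E\sqrt{q_P}/2^{n/2}$---the first term of \expref{Theorem}{thm:full}.

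The second term $O(q_P\sqrt{q_E}/2^{n/2})$ accounts for adaptivity: because the number of quantum queries in each block is a random variable depending on earlier measurement outcomes, cleanly controlling the cumulative effect of the $q_E$ swaps across the whole execution requires the expected-queries version of \expref{Lemma}{lem:ada-blind} (arbitrary reprogramming). Treating the full reprogramming as a single event concentrated on $O(q_E)$ uniform points---each with per-point reprogramming probability $\epsilon = O(q_E/2^n)$---\expref{Lemma}{lem:ada-blind} yields a distinguishing bound of order $q_P\sqrt{q_E}/2^{n/2}$. Both contributions are needed: the per-step permutation-swap bound is tight when $q_E \ll q_P$, whereas the cumulative blinding bound dominates in the opposite regime, and adding them gives the stated bound up to constants.

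The main obstacle throughout is that $\A$'s adaptive interleaving of classical and quantum queries rules out the usual trick of purifying $\A$ and deferring all measurements, since $\A$'s next query depends on intermediate measurement outcomes; the stage-decomposed hybrid combined with the expected-queries extension of the blinding lemma is designed precisely to bypass this barrier. Secondary technical issues include (i) verifying the uniformity and independence of $(a_i,b_i)$---in particular handling the one-key variant, where $a_i,b_i$ both involve a single key $k$ and independence must be replaced by a direct uniform-pair analysis; and (ii) absorbing the minor corrections from rejection sampling needed to ensure the lazy-sampled $y_j$'s form a permutation, which only contributes lower-order terms.
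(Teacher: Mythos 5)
Your overall architecture---stage decomposition at the classical-query boundaries, one hybrid per classical query, the resampling lemma to randomize each classical answer via a swap of the underlying permutation, and the expected-query version of the arbitrary reprogramming lemma to handle the adaptive per-stage quantum-query budget---is the same as the paper's, and your per-step resampling bound $4\sqrt{Q_i/2^n}$ summing to $O(q_E\sqrt{q_P}\,2^{-n/2})$ matches \expref{Lemma}{lem:step-one-perps}. There is, however, a concrete gap in your endpoints and in how you pay for the second term. In your hybrids the swaps accumulate on the oracle used in the \emph{early} stages, so in $H_{q_E}$ the adversary queries $\tilde P^{(j)}$ in stage $j$: a different permutation in each stage, with the differences correlated with $k$ and the classical transcript. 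Marginal uniformity of $\tilde P^{(q_E)}$ does not make this the ideal world, where one fixed $P$ is queried throughout, and there is no causality-respecting relabeling (undoing the stage-$j$ oracle toward $\tilde P^{(q_E)}$ would require swaps determined by classical queries not yet made). Closing this residual gap is exactly where the $O(q_P\sqrt{q_E}\,2^{-n/2})$ term must be paid, and your proposal to pay it by treating the full reprogramming as a single event and invoking \expref{Lemma}{lem:ada-blind} once with $\epsilon=O(q_E/2^n)$ does not go through: that lemma requires the reprogramming set $B$ to be sampled \emph{before} the distinguisher makes any queries to $F_b$, whereas here the swap locations are determined adaptively, interleaved with the quantum queries (the $i$-th swap depends on $x_i$, which $\A$ chooses after seeing earlier quantum answers).

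The paper resolves this with an interleaved double hybrid: $\Hyb_j\to\Hyb_j'$ randomizes the $(j+1)$st classical answer by a single swap (resampling lemma), and $\Hyb_j'\to\Hyb_{j+1}$ then replaces the reprogrammed oracle $P_{T_{j+1},k}$ by plain $P$ \emph{only in stage $j+1$}, where $T_{j+1}$ is already fixed at the stage boundary, so \expref{Lemma}{lem:ada-blind} applies legitimately with $\epsilon\le 2(j+1)/2^n$ and expected query count $q_{P,j+1}$; summing $\sum_j 2q_{P,j+1}\sqrt{2(j+1)/2^n}\le 2\sqrt 2\, q_P\sqrt{q_E}\,2^{-n/2}$ recovers your second term by a valid argument and keeps both endpoints clean (plain $P$ in every stage of $\Hyb_{q_E}$). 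Separately, your ``lower-order corrections'' cover the collision event but not the event that $\A$ later queries its classical oracle at the hidden swapped point ($\find$ in the paper); ruling that out is needed to argue the post-swap classical oracle really equals $E_k[P]$, and it costs a second application of the resampling lemma---this is where the constant $8$ rather than $4$ in \expref{Lemma}{lem:step-one-perps} comes from.
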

\begin{proof}
Without loss of generality, we assume $\A$ never makes a redundant classical query; that is, once it learns an input/output pair $(x, y)$ by making a query to its classical oracle, it never again submits the query $x$ (respectively, $y$) to the forward  (respectively, inverse) direction of that oracle.

We divide an execution of $\A$ into $\formerqC+1$ stages $0, \ldots, \formerqC$, where the $j$th stage corresponds to the time between the $j$th and $(j+1)$st classical queries of~$\A$. In particular, the $0$th stage corresponds to the period of time before $\A$ makes its first classical query, and the $\formerqC$th stage corresponds to the period of time after $\A$ makes its last classical query. We allow $\A$ to adaptively distribute its $\formerqQ$ quantum queries between these stages arbitrarily. 
We let $q_{P,j}$ denote the expected number of queries $\A$ makes in the $j$th stage in the ideal world~$\A^{R,P}$; note that $\sum_{j=0}^{q_E} q_{P,j} = q_P$.



Recall that $\swap_{a, b}$ swaps $a$ and $b$. Given a permutation $P$, an ordered list of pairs $T = \big((x_1, y_1), \dots, (x_t, y_t)\big)$, 
and a key $k = (k_1, k_2)$, define
\begin{equation}\label{eq:PsubTk}
P_{T, k} = \swap_{P(x_1 \oplus k_1), y_1\oplus k_2} \circ \cdots \circ \swap_{P(x_t \oplus k_1), y_t \oplus k_2} \circ P\,.
\end{equation}
(If $T$ is empty, then $P_{T, k} = P$.) 
Intuitively, assuming the $\{x_i\}$ are distinct and the $\{y_i\}$ are distinct, $P_{T, k}$ is a ``small'' modification of $P$ for which
$E_k[P_{T, k}](x_i) = y_i$ for all~$i$. (Note, however, that this may fail to hold if there is an ``internal collision,'' i.e., $P(x_i\xor k_1)=y_j \xor k_2$ for some $i \neq j$. But such collisions occur with low probability over choice of~$k_1, k_2$.) \jnote{I'm not sure if this note is helpful or more confusing.}

%
%

We now define a sequence of experiments $\Hyb_j$, for $j=0, \ldots, \formerqC$. 

\medskip\noindent \textbf{Experiment} $\Hyb_j$. Sample $R,P \from \Perms_n$  and $k \leftarrow D$. Then:
\begin{enumerate} 
\item Run $\A$, answering its classical  queries using $R$ and its quantum queries using $P$, stopping immediately \emph{before} its
$(j+1)$st classical query. Let $T_j = \big((x_1, y_1), \dots, (x_j, y_j)\big)$ be the ordered list of all input/output pairs that $\A$ received from its classical oracle.

\item For the remainder of the execution of $\A$, answer its classical queries using $E_k[P]$ and its quantum queries using $P_{T_j, k}$. 
\end{enumerate}
We can compactly represent $\Hyb_j$ as the experiment in which $\A$'s queries are answered using the oracle sequence
\[
\underbrace{P, R, P, \cdots, R, P,}_{\mbox{\scriptsize $j$ classical queries}} \, \underbrace{E_k[P], P_{T_j, k}, \cdots, E_k[P], P_{T_j, k}}_{\mbox{\scriptsize $\formerqC-j$ classical queries}}\,.
\]
Each appearance of $R$ or $E_k[P]$ indicates a single classical query. Each appearance of $P$ or $P_{T_j, k}$ indicates a stage during which $\A$ makes multiple (quantum) queries to that oracle but no queries to its classical oracle. 
Observe that $\Hyb_0$ corresponds to the execution of $\A$ in the real world, i.e., $\A^{E_k[P], P}$, and that $\Hyb_{q_E}$ is the execution of $\A$ in the ideal world, i.e., $\A^{R, P}$.

For $j=0, \ldots, \formerqC-1$, we introduce additional experiments~$\Hyb_j'$:

\medskip\noindent \textbf{Experiment} $\Hyb_j'$. 
Sample $R, P \from \Perms_n$ and $k \leftarrow D$. Then:
\begin{enumerate} 
\item Run $\A$, answering its classical queries using $R$ and its quantum queries using~$P$, stopping immediately \emph{after} its $(j+1)$st classical query. Let $T_{j+1} = \big((x_1, y_1), \dots, (x_{j+1}, y_{j+1})\big)$ be the ordered list of all input/output pairs that $\A$ learned from its classical oracle.

\item For the remainder of the execution of $\A$, answer its classical queries using $E_k[P]$ and its quantum queries using $P_{T_{j+1}, k}$.
\end{enumerate}
Thus, $\Hyb'_j$ corresponds to running $\A$ using the oracle sequence
\[
\underbrace{P, R, P, \cdots, R, P,}_{\mbox{\scriptsize $j$ classical queries}}\, R, P_{T_{j+1}, k},\, \underbrace{E_k[P], P_{T_{j+1}, k} \cdots, E_k[P], P_{T_{j+1}, k}}_{\mbox{\scriptsize $\formerqC-j-1$ classical queries}}\,.
\]
In Lemmas~\ref{lem:step-two-perps} and~\ref{lem:step-one-perps}, we establish bounds on the distinguishability of $\Hyb_j'$ and~$\Hyb_{j+1}$, as well as $\Hyb_j$ and $\Hyb_j'$. For $0 \leq j < \formerqC$ these give:
\begin{eqnarray*}
	\left|\Pr[\A(\Hyb_j') = 1] - \Pr[\A(\Hyb_{j+1})=1]\right|  & \leq &  2 \cdot q_{P,j+1} \cdot \sqrt{\frac{2\cdot (j+1)}{2^n}}. \label{eq:jprime-to-j} \\
\left|\Pr[\A(\Hyb_j) = 1] - \Pr[\A(\Hyb'_j)=1]\right|  & \leq &   8 \cdot \sqrt{\frac{\formerqQ}{2^n}}+2q_E\cdot2^{-n} \label{eq:j-to-jprime}
\end{eqnarray*}
Using the above, we have
\begin{align*}
		& \left|  \Pr[\A(\Hyb_0) = 1] -  \Pr[\A(\Hyb_{q_E})=1]\right| \\
	& \leq  \sum_{j=0}^{\formerqC-1}\left( 8 \cdot \sqrt{\frac{\formerqQ}{2^n}}+2q_E\cdot2^{-n}+2 \cdot q_{P,j+1} \sqrt{\frac{2 \cdot (j+1)}{2^n}}\right)\\
	&\leq 2 \formerqC^2\cdot 2^{-n}+  \sum_{j=0}^{\formerqC-1}\left(8\cdot \sqrt{\frac{\formerqQ}{2^n}}+2\cdot q_{P,j+1} \sqrt{\frac{2\formerqC}{2^n}}\right)\\
	&\leq 2\formerqC^2\cdot 2^{-n}+2^{-n/2}\cdot \left(8\formerqC\sqrt{\formerqQ}+2\cdot \formerqQ\sqrt{2\formerqC}\right).
\end{align*}

We now simplify the bound further. If \mbox{$\formerqQ=0$}, then $E_k$ and $R$ are perfectly indistinguishable and the theorem holds;
thus, we may assume $\formerqQ\ge 1$. We can also assume $\formerqC < 2^{n/2}$ since otherwise the bound is larger than~1. 
Under these assumptions, we have $\formerqC^2\cdot 2^{-n}\le \formerqC\cdot 2^{-n/2}\le \formerqC\sqrt{\formerqQ}\cdot 2^{-n/2}$ and so
\begin{align*}
	&2\formerqC^2\cdot 2^{-n}+2^{-n/2}\left(8\formerqC\sqrt{\formerqQ}+2\formerqQ\sqrt{2\formerqC}\right)\\
	&\leq 2 \cdot \formerqC\sqrt{\formerqQ}\cdot 2^{-n/2}+2^{-n/2}\left(8\formerqC\sqrt{\formerqQ}+2\formerqQ\sqrt{2\formerqC}\right)\\
	&\leq 10 \cdot 2^{-n/2}\left(\formerqC\sqrt{\formerqQ}+\formerqQ\sqrt{\formerqC}\right)\,,
\end{align*} 
as claimed.
\qed\end{proof}

To complete the proof of \expref{Theorem}{thm:full}, we now establish the two lemmas showing that $\Hyb'_j$ is close to~$\Hyb_{j+1}$ and $\Hyb_j$ is close to $\Hyb'_j$   for $0 \leq j < q_E$. 

\begin{lemma}\label{lem:step-two-perps}
	For $j=0, \ldots, \formerqC-1$, 
	\[\Pr[\A(\Hyb_j') = 1] - \Pr[\A(\Hyb_{j+1})=1]| \leq 2 \cdot q_{P,j+1} \sqrt{2\cdot (j+1)/2^n}\,,\]
	where $q_{P,j+1}$ is the expected number of queries $\A$ makes to $P$ in the $(j+1)$st stage 
	in the ideal world (i.e., in $\Hyb_{q_E}$.) 
\end{lemma}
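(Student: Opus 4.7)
The plan is to reduce to the arbitrary-reprogramming lemma (\expref{Lemma}{lem:ada-blind}), applied to the ``doubled'' permutation oracle. The two hybrids differ only in the quantum oracle used during stage $j+1$: $\Hyb_{j+1}$ uses $P$ there, while $\Hyb'_j$ uses $P_{T_{j+1},k}$. Everything before stage $j+1$ is answered from the key-independent pair $(R,P)$, and everything after stage $j+1$ is answered from the identical pair $(E_k[P],P_{T_{j+1},k})$ in both hybrids. Hence the whole difference is localized to one reprogrammed phase, with $k$ revealed only after that phase ends---exactly the template of \expref{Lemma}{lem:ada-blind}.

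I would construct a distinguisher $\D$ for \expref{Lemma}{lem:ada-blind} as follows. In phase 1, $\D$ samples $P,R\from\Perms_n$ and simulates $\A$ through its $(j+1)$st classical query to harvest $T_{j+1}$ and $\A$'s state. It then outputs $F=\tilde P$, where $\tilde P:\bool^{n+1}\to\bool^n$ is the two-sided encoding $\tilde P(0,x)=P(x)$, $\tilde P(1,y)=P^{-1}(y)$, together with a randomized $\B$ that samples $k\leftarrow D$ and outputs the set $B$ of actual input/output differences between $\widetilde{P_{T_{j+1},k}}$ and $\tilde P$. In phase 2, $\D$ routes $\A$'s stage-$(j+1)$ quantum queries to $F_b$. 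In phase 3, once given $k$, $\D$ completes $\A$'s simulation using $E_k[P]$ (classical) and $P_{T_{j+1},k}$ (quantum), both computable from the known $P$, $T_{j+1}$, and $k$, and finally returns $\A$'s output. By construction $b=0$ realizes $\Hyb_{j+1}$ and $b=1$ realizes $\Hyb'_j$.

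The two parameters of \expref{Lemma}{lem:ada-blind} then come out cleanly. The expected number of phase-2 queries under $b=0$ is $q_{P,j+1}$, since the $b=0$ branch hands $\A$ exactly the oracle pair $(R,P)$ through stage $j+1$, matching the experiment that defines $q_{P,j+1}$. For $\epsilon$, the set of reprogrammed first-coordinates lies in the four families $\{0\}\times\{x_i\oplus k_1\}_i$, $\{0\}\times\{P^{-1}(y_i\oplus k_2)\}_i$, $\{1\}\times\{P(x_i\oplus k_1)\}_i$, $\{1\}\times\{y_i\oplus k_2\}_i$, so a union bound over the $j+1$ indices---using uniformity of $k_1$ for two families and of $k_2$ for the other two---gives $\Pr[(c,z)\in B_1]\leq 2(j+1)/2^n$ for every fixed $(c,z)\in\bool^{n+1}$. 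Plugging $q=q_{P,j+1}$ and $\epsilon\leq 2(j+1)/2^n$ into \expref{Lemma}{lem:ada-blind} produces the claimed bound $2\cdot q_{P,j+1}\sqrt{2(j+1)/2^n}$.

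The step that will need the most care is verifying that the analysis still goes through in the presence of ``internal collisions'' among the $2(j+1)$ swap values defining $P_{T_{j+1},k}$ (e.g.\ $P(x_i\oplus k_1)=y_{i'}\oplus k_2$ for some $i\neq i'$), where the successive swaps no longer commute in the obvious way. Defining $B$ as the set of \emph{actual} differences between $\widetilde{P_{T_{j+1},k}}$ and $\tilde P$---rather than as a hand-built list of swap targets---enforces the ``at most one tuple per first coordinate'' condition of \expref{Lemma}{lem:ada-blind} automatically, and the robustness fact I would check is that the inputs on which these two permutations disagree remain inside the same four families listed above regardless of collisions, so that the $\epsilon$ estimate is unaffected.
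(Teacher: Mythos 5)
Your proposal is correct and follows essentially the same route as the paper: a reduction to \expref{Lemma}{lem:ada-blind} using a two-sided encoding of $P$ (the paper writes it as $F(t,x)=P^t(x)$ for $t\in\{1,-1\}$, which is your $\tilde P$), with $\B$ sampling $k$ and outputting the reprogramming set, the same four families giving $\epsilon\le 2(j+1)/2^n$, and the same identification of the expected phase-2 query count with $q_{P,j+1}$. Your extra remark that the disagreement set stays inside those four families even under internal collisions is a correct observation that the paper leaves implicit.
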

\begin{proof}
	Recall we can write the oracle sequences defined by $\Hyb'_j$ and $\Hyb_{j+1}$~as
	\begin{alignat*}{5}
	\Hyb'_j: \;\; &P, R, P, \cdots, R, P, ~&&R, \;&&P_{T_{j+1}, k},\;\; &&E_k[P], P_{T_{j+1}, k}, \cdots, E_k[P], P_{T_{j+1}, k}\\
	\Hyb_{j+1}: \;\; &\underbrace{P, R, P, \cdots, R, P}_{\mbox{\scriptsize $j$ classical queries}}, ~&&R, &&P, ~~~&&\underbrace{E_k[P], P_{T_{j+1}, k}, \cdots, E_k[P], P_{T_{j+1}, k}}_{\mbox{\scriptsize $\formerqC-j-1$ classical queries}}\, .
	\end{alignat*}
	Let $\A$ be a distinguisher between $\Hyb'_j$ and $\Hyb_{j+1}$. We construct from $\A$ a distinguisher $\D$ for the blinding experiment from \expref{Lemma}{lem:ada-blind}:
	\begin{description}
		\item[Phase 1:] 
		$\D$ samples $P, R \leftarrow \Perms_n$. It then runs $\A$, answering its quantum queries using $P$ and its classical queries using~$R$, until after it responds to $\A$'s $(j+1)$st classical query.
		Let $T_{j+1} = \big((x_1, y_1), \dots, (x_{j+1}, y_{j+1})\big)$ be the list of input/output pairs $\A$ received from its classical oracle thus far.
		$\D$ defines $F(t, x) := P^{t}(x)$ for $t \in \{1, -1\}$. It also defines the following randomized algorithm~$\B$: sample $k \leftarrow D$ and then compute the set $B$ of input/output pairs to be reprogrammed so that $F^{(B)}(t, x)=P^t_{T_{j+1}, k}(x)$ for all $t, x$.

		\item[Phase 2:] $\B$ is run to generate~$B$, and
		$\D$ is given quantum access to an oracle~$F_b$. $\D$ resumes running~$\A$, answering its quantum queries using $P^{t} = F_b(t, \cdot)$. Phase~2 ends when $\A$ makes its next (i.e., $(j+2)$nd) classical query.
		
		\item[Phase 3:] $\D$ is given the randomness used by~$\B$ to generate~$k$. It resumes running $\A$, answering its classical queries using $E_k[P]$ and its quantum queries using~$P_{T_{j+1},k}$. Finally, it outputs whatever $\A$ outputs.
	\end{description}
	
	Observe that $\D$ is a valid distinguisher for the reprogramming experiment of \expref{Lemma}{lem:ada-blind}. It is immediate that if $b=0$ (i.e., $\D$'s oracle in phase~2 is~$F_0=F$), then $\A$'s output is identically distributed to its output in~$\Hyb_{j+1}$, whereas if $b=1$ (i.e., $\D$'s oracle in phase~2 is~$F_1=F^{(B)}$), then $\A$'s output is identically distributed to its output in~$\Hyb'_j$. It follows that $|\Pr[\A(\Hyb_j') = 1] - \Pr[\A(\Hyb_{j+1})=1]|$ is equal to the distinguishing advantage of $\D$ in the reprogramming experiment. To bound this quantity using \expref{Lemma}{lem:ada-blind}, we bound the reprogramming probability~$\epsilon$ and the expected number of queries made by $\D$ in phase~2 (when $F = F_0$.)
	
The reprogramming probability $\epsilon$ can be bounded using the definition of $P_{T_{j+1}, k}$ and the fact that $F^{(B)}(t, x) = P^t_{T_{j+1}, k}$. Fixing $P$ and $T_{j+1}$, the probability that any given $(t, x)$ is reprogrammed is at most the probability (over $k$) that it is in the set
	$$
	\left\{(1, x_i \oplus k_1), (1, P^{-1}(y_i \oplus k_2)), (-1, P(x_i \oplus k_1)), (-1, y_i \oplus k_2)\right\}_{i=1}^{j+1}\,.
	$$
Taking a union bound and applying the fact that the marginal distributions of $k_1$ and $k_2$ are each uniform, we get $\epsilon \leq 2 (j+1) / 2^n$.	
	
The expected number of queries made by $\D$ in Phase~2 when $F=F_0$ is equal to the expected number of queries made by $\A$ in its $(j+1)$st stage in~$\Hyb_{j+1}$. 
Since $\Hyb_{j+1}$ and $\Hyb_{q_E}$ are identical until after the $(j+1)$st stage is complete, this is precisely~$q_{P,j+1}$. 
	\qed\end{proof}

\def\bad{{\sf Bad}}
\def\qcoll{{\sf bad}_1}
\def\scoll{{\sf bad}_2}
\def\find{{\sf bad}_3}

\begin{lemma}\label{lem:step-one-perps}
For $j=0, \ldots, \formerqC$, 
\[ 
	\left|\Pr[\A(\Hyb_j) = 1] - \Pr[\A(\Hyb'_j)=1]\right| \leq 8 \cdot \sqrt{\frac{\formerqQ}{2^n}}+2q_E\cdot2^{-n}.
\]
%
\end{lemma}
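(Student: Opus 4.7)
I will prove Lemma~\ref{lem:step-one-perps} by reducing the distinguishing advantage between $\Hyb_j$ and $\Hyb'_j$ to the resampling lemma for random permutations (Lemma~\ref{lem:prp-arl-proved}). The key observation is that the two hybrids differ in exactly two places: the $(j+1)$st classical query is answered by $E_k[P](x_{j+1})=P(x_{j+1}\oplus k_1)\oplus k_2$ in $\Hyb_j$ but by a fresh uniform $y_{j+1}=R(x_{j+1})$ in $\Hyb'_j$; and the quantum oracle used in stages $\ge j+1$ is $P_{T_j,k}$ in $\Hyb_j$ but $P_{T_{j+1},k}=\swap_{P(x_{j+1}\oplus k_1),\,y_{j+1}\oplus k_2}\circ P_{T_j,k}$ in $\Hyb'_j$. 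Both changes can be captured by a single output-swap of $P$ at the two points $s_0:=x_{j+1}\oplus k_1$ and $s_1:=P^{-1}(y_{j+1}\oplus k_2)$. Since $k_1$, $k_2$, and $y_{j+1}$ are each uniform and independent of $\A$'s state at the end of stage~$j$, both $s_0$ and $s_1$ are uniform in $\bool^n$, exactly matching the input distribution of the resampling game.

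\textbf{The reduction.} I will build a distinguisher $\D$ for the resampling game. In Phase~1, $\D$ simulates $\A$ through its first $j$ classical queries and the quantum queries in stages $0,\ldots,j$, answering classical queries with a lazily sampled random permutation $R$ and quantum queries using its oracles $P, P^{-1}$. $\D$ makes at most $q_P$ queries in Phase~1, and records $\A$'s next classical query~$x_{j+1}$. In Phase~2, $\D$ receives uniform $s_0,s_1$ and oracle access to $P_b, P_b^{-1}$ (where $P_1=P\circ\swap_{s_0,s_1}$), sets $k_1:=x_{j+1}\oplus s_0$, and samples $k_2$ from the conditional distribution of $D$ given $k_1$; this is well-defined because the marginals of $D$ are each uniform (and in particular handles both the two-key and one-key variants). $\D$ answers the $(j+1)$st classical query with $P_b(s_0)\oplus k_2$ and continues $\A$'s execution by using $P_b, P_b^{-1}$ (together with its knowledge of $k$ and of the recorded pairs) to emulate $E_k[P]$ on subsequent classical queries and the appropriate $P_{T,k}$ on subsequent quantum queries. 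When $b=0$, $P_b=P$ and the response equals $E_k[P](x_{j+1})$ as in $\Hyb_j$; when $b=1$, $P_b(s_0)=P(s_1)$ and the response $P(s_1)\oplus k_2$ is uniform in $\bool^n$, as needed for $\Hyb'_j$.

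\textbf{Bad events and conclusion.} Outside a bad event, $\D$ perfectly simulates $\Hyb_j$ when $b=0$ and $\Hyb'_j$ when $b=1$. I will define the bad event as the union of: $s_0=s_1$; $\{s_0,s_1\}\cap\{x_i\oplus k_1, P^{-1}(y_i\oplus k_2)\}_{i\le j}\ne\emptyset$ (so that $\swap_{s_0,s_1}$ commutes with each of the swaps in $P_{T_j,k}$); $P(s_1)\oplus k_2\in\{y_1,\ldots,y_j\}$ (so that the simulated $y_{j+1}$ matches the distribution of $R(x_{j+1})$); and some later classical query $x_l$ ($l>j+1$) satisfies $x_l\oplus k_1=s_1$ or $y_l\oplus k_2\in\{P(s_0),P(s_1)\}$ (so that $E_k[P_b]$ agrees with $E_k[P]$ on all subsequent classical and inverse classical queries). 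By the uniformity of $k_1, k_2$, and $s_1$ and a union bound over $q_E$ query indices, each term is $O(q_E/2^n)$, and the union is at most $2q_E/2^n$. Hence,
\[
\left|\Pr[\A(\Hyb_j)=1]-\Pr[\A(\Hyb'_j)=1]\right|\le 2q_E/2^n+\left|\Pr[\D(P_0)=1]-\Pr[\D(P_1)=1]\right|.
\]
By Lemma~\ref{lem:prp-arl-proved} applied to $\D$ (which makes at most $q_P$ queries in Phase~1), the second term is at most $4\sqrt{q_P/2^n}$, and absorbing modest slack yields the claimed bound $8\sqrt{q_P/2^n}+2q_E\cdot 2^{-n}$.

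\textbf{Main obstacle.} The principal technical hurdle is verifying the faithfulness of $\D$'s simulation of $P_{T_j,k}$ and $P_{T_{j+1},k}$ by $P_b$ outside the bad event: one must carefully track how the single resampling swap $\swap_{s_0,s_1}$ implicit in $P_b=P\circ\swap_{s_0,s_1}$ interacts with the $j$ existing swaps that define $P_{T,k}$. Using the elementary fact that two transpositions commute iff their supports are disjoint or equal, this reduces to the enumerated collision bad events; the bijectivity of $P$ together with the non-redundancy assumption on $\A$'s classical queries eliminates several would-be failure modes automatically (for instance, $P(s_0)=P(x_i\oplus k_1)$ would force $x_i=x_{j+1}$, which is forbidden).
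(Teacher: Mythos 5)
Your reduction is the same one the paper uses (Phase~1 runs $\A$ up to the $(j+1)$st classical query, $k_1:=s_0\oplus x_{j+1}$, $k_2\leftarrow D_{|k_1}$, then continue with $E_k[P_b]$ and $(P_b)_{T_j,k}$), and your first three bad events correspond to the paper's $\qcoll$ and $\scoll$ and are bounded correctly by independence of $k_1,k_2,s_1$ from the Phase-1 transcript. The gap is in your fourth bad event --- that some later classical query $x_l$ ($l>j+1$) satisfies $x_l\oplus k_1=s_1$ or hits the swapped outputs in the inverse direction. You bound this by ``uniformity of $s_1$ and a union bound,'' but that independence argument is only valid in the $b=0$ world. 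In the $b=1$ world, where this event actually matters, the adversary's Phase-2 view depends on $s_1$: its quantum oracle is $P\circ\swap_{s_0,s_1}$ and its classical oracle is $E_k[P_1]$, so its adaptive choice of later classical queries is correlated with $s_1$ and you cannot conclude that it queries $s_1\oplus k_1$ with probability only $q_E/2^n$. The paper closes exactly this hole with a separate claim: it bounds the probability of this event in the $b=0$ world by $(q_E-j)/2^n$ (where independence does hold), and then transfers the bound to the $b=1$ world via a \emph{second} application of Lemma~\ref{lem:prp-arl-proved}, applied to a distinguisher that outputs $1$ iff the event occurred. This costs an additional $4\sqrt{q_P/2^n}$, which is precisely why the lemma's bound has $8\sqrt{q_P/2^n}$ rather than the $4\sqrt{q_P/2^n}$ your accounting would produce; the fact that your intermediate bound comes out strictly stronger than the stated lemma, forcing you to ``absorb modest slack,'' is the symptom of this missing step.

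A secondary, more cosmetic point: the step ``outside the bad event $\D$ perfectly simulates $\Hyb'_j$ when $b=1$, hence the advantage is at most $\Pr[\mathrm{bad}]$ plus the resampling advantage'' should be made rigorous via an identical-until-bad argument (the paper introduces the intermediate experiments ${\sf Expt}_j$ and ${\sf Expt}'_j$ and invokes the fundamental lemma of game playing), since the bad event is defined in terms of the adversary's adaptive execution rather than solely in terms of the initial randomness. Once you add the $b=0$-world bound plus resampling-lemma transfer for the find-type event, the rest of your argument matches the paper's proof.
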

\begin{proof}
Recall that we can write the oracle sequences defined by $\Hyb_j$ and $\Hyb_j'$~as
	\begin{alignat*}{5}
	\Hyb_j: \;\; &P, R, P, \cdots, R, P, ~~&&E_k[P],\; &&P_{T_j, k}, ~~~&&E_k[P], P_{T_j, k}~~\,, \cdots, E_k[P], P_{T_j, k} \\
	\Hyb'_j: \;\; &\underbrace{P, R, P, \cdots, R, P}_{\mbox{\scriptsize $j$ classical queries}}, ~~&&R, &&P_{T_{j+1}, k}, &&\underbrace{E_k[P], P_{T_{j+1}, k}, \cdots, E_k[P], P_{T_{j+1}, k}}_{\mbox{\scriptsize $\formerqC-j-1$ classical queries}}\,.
	\end{alignat*}
%
Let $\A$ be a distinguisher between $\Hyb_j$ and $\Hyb_j'$. 
We construct from $\A$ a distinguisher $\D$ for the reprogramming experiment of \expref{Lemma}{lem:prp-arl}:
\begin{description}
\item[Phase 1:]  $\D$ is given quantum access to a permutation~$P$. It samples \mbox{$R \from \Perms_n$} and then runs $\A$, answering its quantum queries with $P$ and its classical queries with $R$ (in the appropriate directions), until\footnote{We assume for simplicity that this query is in the forward direction, but the case where it is in the inverse direction can be handled entirely symmetrically (using the fact that the marginal distribution of $k_2$ is uniform). The strings $s_0$ and $s_1$ are in that case replaced by $P_b(s_0)$ and~$P_b(s_1)$. See Appendix~\ref{appendix:inverse-case} for details.} $\A$ submits its \mbox{$(j+1)$st} classical query~$x_{j+1}$. At that point, $\D$ has a list $T_j=\big((x_1, y_1), \cdots, (x_j, y_j)\big)$ of the input/output pairs $\A$ has received from its classical oracle thus far. 
\item[Phase 2:]  

Now $\D$ receives $s_0, s_1 \in \bool^n$ and quantum oracle access to a permutation~$P_b$. Then $\D$ sets $k_1:=s_0 \xor x_{j+1}$, chooses $k_2 \leftarrow D_{|k_1}$ (where this represents the conditional distribution on $k_2$ given~$k_1$), and
sets $k:=(k_1, k_2)$. $\D$ continues running $\A$, answering its remaining classical queries (including the $(j+1)$st one) using $E_k[P_b]$, and its remaining quantum queries using 
\[
(P_b)_{T_j, k} = \swap_{P_b(x_1 \oplus k_1), y_1 \oplus k_2} \circ \cdots \circ \swap_{P_b(x_j \oplus k_1), y_j \oplus k_2} \circ P_b\,.
\]
Finally, $\D$ outputs whatever $\A$ outputs.
\end{description}
Note that although $\D$ makes additional queries to $P_b$ at the start of phase~2 (to determine $P_b(x_1 \xor k_1), \ldots, P_b(x_j \xor k_1)$), 
the bound of \expref{Lemma}{lem:prp-arl} only depends on the number of quantum queries $\D$ makes in phase~1, which is at most~$q_P$. 

We now analyze the execution of $\algo D$ in the two cases of the game of \expref{Lemma}{lem:prp-arl}: $b=0$ (no reprogramming) and $b=1$ (reprogramming). 
In both cases, $P$ and~$R$ are independent, uniform permutations, and $\A$ is run with quantum oracle $P$ and classical oracle~$R$ until it makes its $(j+1)$st classical query; thus,
through the end of phase~1, the above execution of $\A$ is consistent with both $\Hyb_j$ and~$\Hyb_j'$.

At the start of phase~2, uniform $s_0, s_1 \in \bool^n$ are chosen. Since $\algo D$ sets $k_1 := s_0 \oplus x_{j+1}$,
the distribution of $k_1$ is uniform and hence~$k$ is distributed according to~$D$.
The two cases ($b=0$ and $b=1$) now begin to diverge.

\medskip\noindent \textbf{Case $b=0$ (no reprogramming).} 
In this case, $\A$'s remaining classical queries (including its $(j+1)$st classical query) are answered using $E_k[P_0] = E_k[P]$, and its remaining quantum queries are answered using $(P_0)_{T_j, k} = P_{T_j, k}$. The output of $\A$ is thus distributed identically to its output in~$\Hyb_j$ in this case.

\medskip\noindent \textbf{Case $b=1$ (reprogramming).} 
In this case, we have
\begin{eqnarray}\label{eq:def-of-P1}
P_b = P_1 = P \circ \swap_{s_0, s_1} = \swap_{P(s_0), P(s_1)} \circ P = \swap_{P(x_{j+1} \oplus k_1), P(s_1)} \circ P\,.
\end{eqnarray}
The response to $\A$'s $(j+1)$st classical query  is thus
\begin{eqnarray}\label{eq:repro=lazy}
y_{j+1} \stackrel{\rm def}{=} E_k[P_1](x_{j+1}) = P_1(x_{j+1} \oplus k_1) \oplus k_2 = P_1(s_0) \xor k_2 = P(s_1) \oplus k_2\,.
\end{eqnarray}
The remaining classical queries of $\A$ are then answered using $E_k[P_1]$, while its remaining quantum queries are answered using~$(P_1)_{T_j, k}$. 
If we let ${\sf Expt}_j$ refer to the experiment in which $\D$ executes $\A$ as a subroutine when $b=1$, it follows from \expref{Lemma}{lem:prp-arl} that
\begin{eqnarray}
	\left|\Pr[\A(\Hyb_j) = 1] - \Pr[\A({\sf Expt}_j)=1]\right| \leq 4  \sqrt{\formerqQ/2^n}. \label{eqn:bound:H-E}
\end{eqnarray}

We now define three events: 
\begin{enumerate}
		\item $\qcoll$ is the event that $y_{j+1} \in \{y_1, \ldots, y_j\}$.
\item $\scoll$  is the event that $s_1 \xor k_1 \in \{x_1, \ldots, x_j\}$. 
\item $\find$  is the event that, in phase~2, $\A$ queries its classical oracle in the forward direction on $s_1 \xor k_1$, 
or the inverse direction on $P(s_0) \xor k_2$ (with result $s_1\oplus k_1$). 
\end{enumerate}
Since $y_{j+1}=P(s_1) \xor k_2$ is uniform (because $k_2$ is uniform and independent of $P$ and~$s_1$), it is immediate that $\Pr[\qcoll] \leq j/2^n$. Similarly, $s_1 \xor k_1 = s_1 \xor s_0 \xor x_{j+1}$  is uniform, and so
$\Pr[\scoll] \leq j/2^n$. 
As for the last event, 
we have:
\begin{claim}\label{find-claim}
$\Pr[\find] \le (q_E-j)/2^{n}+4 \sqrt{q_P/2^n}$. 
\end{claim}
\begin{proof}
	Consider the algorithm $\D'$ that behaves identically to~$\D$ in phases~1 and~2, but then when $\A$ terminates outputs~1 iff event~$\find$ occurred. When $b=0$ (no reprogramming), the execution of $\A$ is independent of $s_1$, and so the probability that $\find$ occurs is at most~$(q_E-j)/2^n$. Now observe that $\D'$ is a distinguisher for the reprogramming game of \expref{Lemma}{lem:prp-arl}, with advantage $|\Pr[\find|b=1] - (q_E-j)/2^n|$. The claim then follows from \expref{Lemma}{lem:prp-arl}.
	\qed\end{proof}

	\begin{figure}[h]
	\framebox[\textwidth][l]{
		\begin{algorithm}[H]
			\DontPrintSemicolon
			$P, R \from \Perms_n$\;
			Run $\A$ with quantum access to $P$ and classical access to $R$,   until $\A$ makes its $(j+1)$st classical query~$x_{j+1}$;  let $T_j$ 
			be as in the text\;
			$s_0, s_1 \leftarrow \bool^n$,  $\;P_1 := P \circ {\sf swap}_{s_0, s_1}$\;
			$k_1 := s_0 \xor x_{j+1}$,  $k_2 \leftarrow D_{\mid k_1}$, $k:=(k_1, k_2)$ \;
			$y_{j+1} := E_k[P_1](x_{j+1})$ \; 
			$Q:=(P_1)_{T_j, k}$\;
			\lIf{$y_{j+1} \in \{y_1, \ldots, y_j\}$}{
				$\qcoll:={\sf true}$, \fbox{$y_{j+1} \leftarrow \bool^n\setminus\{y_1, \ldots, y_j\}$} \label{alg:qcoll}}
			Give $y_{j+1}$ to $\A$ as the answer to its $(j+1)$st classical query\;
			$T_{j+1} := \big((x_1, y_1), \ldots, (x_{j+1}, y_{j+1})\big)$\;
			\lIf{$s_1 \xor k_1 \in \{x_1, \ldots, x_j\}$}{
				$\scoll:={\sf true}$}
			\lIf{$\qcoll={\sf true}$ or $\scoll={\sf true}$}{
				\fbox{$Q:=P_{T_{j+1},k}$} \label{alg:setQ}}
			Continue running $\A$ with quantum access to $Q$ and classical access to~$\O/\O^{-1}$  \label{alg:oracles} \;
		\end{algorithm}
	}
	\begin{tabular}[t]{ll}
		\framebox[0.49\textwidth][l]{
			\begin{algorithm}[H]
				\setcounter{AlgoLine}{12}
				\DontPrintSemicolon
				\underline{$\O(x)$}   \;
				$y:=E_k[P_1](x)$ \label{alg:compute-y} \;
				\If{$x=s_1 \xor k_1$}{
					$\find:={\sf true}$, \fbox{$y:=E_k[P](x)$}}
				\KwRet{$y$\vspace*{1pt}}
		\end{algorithm}} & 
		\framebox[0.49\textwidth][l]{
			\begin{algorithm}[H]
				\setcounter{AlgoLine}{17}
				\DontPrintSemicolon
				\underline{$\O^{-1}(y)$}\;
				$x:=E^{-1}_k[P_1](y)$\;
				\If{$x=s_1 \xor k_1$}{
					$\find:={\sf true}$, \fbox{$x:=E^{-1}_k[P](y)$}}
				\KwRet{$x$}
		\end{algorithm}}
	\end{tabular}
	\caption{${\sf Expt}'_j$ includes the boxed statements, whereas ${\sf Expt}_j$ does not. \label{fig:two-games}} 
\end{figure}

\ignore{
	\begin{figure}[t]
		\framebox[\textwidth][l]{
			\begin{algorithm}[H]
				\DontPrintSemicolon
				$P, R \from \Perms_n$\;
				Run $\A$ with quantum access to $P$ and classical access to $R$,   until $\A$ makes its $(j+1)$st classical query~$x_{j+1}$;  let $T_j$ 
				be as in the text\;
				$s_0, s_1 \leftarrow \bool^n$,  $\;P_1 := P \circ {\sf swap}_{s_0, s_1}$\;
				$k_1 := s_0 \xor x_{j+1}$,  $k_2 \leftarrow \bool^n$, $k:=(k_1, k_2)$ \;
				$y_{j+1} := E_k[P_1](x_{j+1})$ \label{alg:compute-y} \; 
				\If{$y_{j+1} \in \{y_1, \ldots, y_j\}$}{
					$\qcoll:={\sf true}$, \fbox{$y_{j+1} \leftarrow \bool^n\setminus\{y_1, \ldots, y_j\}$, $k_2:=y_{j+1} \xor P(s_1)$, $k:=(k_1, k_2)$} \label{alg:qcoll}}
				Give $y_{j+1}$ to $\A$ as the answer to its $(j+1)$st classical query\;
				$T_{j+1} := \big((x_1, y_1), \ldots, (x_{j+1}, y_{j+1})\big)$\;
				$Q:=(P_1)_{T_j, k}$\;
				\lIf{$s_1 \xor k_1 \in \{x_1, \ldots, x_j\}$}{
					$\scoll:={\sf true}$, \fbox{$Q:=P_{T_{j+1},k}$} \label{alg:scoll}}
				Continue running $\A$ with quantum access to $Q$ and classical access to~$\O/\O^{-1}$ \label{alg:oracles} \;
			\end{algorithm}
		}
		\begin{tabular}[t]{ll}
			\framebox[0.49\textwidth][l]{
				\begin{algorithm}[H]
					\setcounter{AlgoLine}{12}
					\DontPrintSemicolon
					\underline{$\O(x)$}   \;
					$y:=E_k[P_1](x)$  \;
					\If{$x=s_1 \xor k_1$}{
						$\find:={\sf true}$, \fbox{$y:=E_k[P](x)$}}
					\KwRet{$y$\vspace*{1pt}}
			\end{algorithm}} & 
			\framebox[0.49\textwidth][l]{
				\begin{algorithm}[H]
					\setcounter{AlgoLine}{17}
					\DontPrintSemicolon
					\underline{$\O^{-1}(y)$}\;
					$x:=E^{-1}_k[P_1](y)$\;
					\If{$x=s_1 \xor k_1$}{
						$\find:={\sf true}$, \fbox{$x:=E^{-1}_k[P](y)$}}
					\KwRet{$x$}
			\end{algorithm}}
		\end{tabular}
		\caption{$F_j$ includes the boxed statements, whereas $E_j$ does not. \label{fig:two-games}} 
\end{figure}
}

\ignore{
\begin{center}
	\begin{figure}[t]
		\framebox[\textwidth][l]{
			\begin{algorithm}[H]
				\DontPrintSemicolon
				\SetNlSty{textbf}{}{*}
				$P, R \from \Perms_n$\;
				Run $\A$ with quantum access to $P$ and classical access to $R$,   until $\A$ makes its $(j+1)$st classical query~$x_{j+1}$;  let $T_j$ be as in the text\;
				$k_1, k_2, y_{j+1} \leftarrow \bool^n$, $k:=(k_1, k_2)$ \;
				$s_0 := k_1 \xor x_{j+1}$, $s_1 := P^{-1}(k_2 \xor y_{j+1})$, $\;P_1 := P \circ {\sf swap}_{s_0, s_1}$\; 
				\If{$y_{j+1} \in \{y_1, \ldots, y_j\}$}{
					${\sf bad}_1:={\sf true}$, $y_{j+1} \leftarrow \bool^n\setminus\{y_1, \ldots, y_j\}$}
				Give $y_{j+1}$ to $\A$ as the answer to its $(j+1)$st classical query\;
				$T_{j+1} := \big((x_1, y_1), \ldots, (x_{j+1}, y_{j+1})\big)$\;
				Continue running $\A$ with quantum access to $P_{T_{j+1},k}$ and classical access to~$E_k[P]$
			\end{algorithm}
		}\caption{Syntactic rewriting of $F_j$. \jnote{Need to double-check} Some flags (that have no effect on the output of~$\A$) are not included.  \label{fig:F-only}}
	\end{figure}
\end{center}
}

In Figure~\ref{fig:two-games}, we show code for ${\sf Expt}_j$ and a related experiment~${\sf Expt}'_j$. 
Note that ${\sf Expt}_j$ and ${\sf Expt}'_j$ are identical until either $\qcoll, \scoll$, or $\find$ occur, and so by the fundamental lemma of game playing\footnote{This lemma is an information-theoretic result, and can be applied in our setting since everything we say in what follows holds even if $\A$ is given the entire function table for its quantum oracle~$Q$ in line~\ref{alg:oracles}. }~\cite{DBLP:conf/eurocrypt/BellareR06} we have
\begin{eqnarray}
\left|\Pr[\A({\sf Expt}'_j) = 1] - \Pr[\A({\sf Expt}_j)=1]\right| & \leq & \Pr[\qcoll \vee \scoll \vee \find] \nonumber \\
& \leq & 2q_E/2^n + 4\sqrt{q_P/2^n}\, . \label{eqn:bound:E-F}
\end{eqnarray}
We complete the proof by arguing that ${\sf Expt}'_j$ is identical to $\Hyb'_j$:
\begin{enumerate}
	    \item In ${\sf Expt}'_j$, the oracle $Q$ used in line~\ref{alg:oracles} is always equal to $P_{T_{j+1}, k}$. When $\qcoll$ or $\scoll$ occurs this is immediate (since then $Q$ is set to $P_{T_{j+1}, k}$ in line~\ref{alg:setQ}).  But if $\qcoll$ does not occur then Equation~(\ref{eq:repro=lazy}) holds, and if
	    $\scoll$ does not occur then for $i=1, \ldots, j$ we have $x_i \xor k_1 \neq s_0$ and
	$x_i \xor k_1 \neq s_1$ (where the former is because $x_{j+1} \xor k_1 = s_0$ but $x_i \neq x_{j+1}$ by assumption, and the latter is by definition of $\scoll$). Thus $P_1(x_i \xor k_1) = P(x_i \xor k_1)$ for  $i=1, \ldots, j$, and so
	\begin{align*}
	Q=(P_1)_{T_j, k} &= \swap_{P_1(x_1 \oplus k_1), y_1\oplus k_2} \circ \cdots \circ \swap_{P_1(x_j \oplus k_1), y_j \oplus k_2} \circ P_1\\
	& = \swap_{P(x_1 \oplus k_1), y_1\oplus k_2} \circ \cdots  \circ\swap_{P(x_{j+1} \oplus k_1), y_{j+1} \oplus k_2} \circ P \\
	& = P_{T_{j+1}, k},
	\end{align*}
	using Equations~(\ref{eq:def-of-P1}) and~(\ref{eq:repro=lazy}). 
	
	\item In ${\sf Expt}'_j$, the value $y_{j+1}$ is uniformly distributed in $\bool^n \setminus \{y_1, \ldots, y_j\}$. Indeed, we have already argued above that the value $y_{j+1}$ computed in line~\ref{alg:compute-y} is uniform in~$\bool^n$. But if that value lies in $\{y_1, \ldots, y_j\}$ (and so $\qcoll$ occurs) then $y_{j+1}$ is re-sampled uniformly from $\bool^n \setminus \{y_1, \ldots, y_j\}$ in line~\ref{alg:qcoll}.
	
    \item In ${\sf Expt}'_j$, the response from oracle $\O(x)$ is always equal to $E_k[P](x)$. When $\find$ occurs this is immediate. But if $\find$ does not occur then $x \neq s_1 \xor k_1$; we also know that $x \neq s_0 \xor k_1=x_{j+1}$ by assumption. But then 
    $P_1(x \xor k_1) = P(x \xor k_1)$ and so $E_k[P_1](x) = E_k[P](x)$. A similar argument shows that the response from $\O^{-1}(y)$ is always
    $E_k^{-1}[P](y)$.
\end{enumerate}
Syntactically rewriting ${\sf Expt}'_j$ using the above observations yields an experiment that is identical to~$\Hyb'_j$. (See Appendix~\ref{appendix:games} for further details.)
\expref{Lemma}{lem:step-one-perps} thus follows from Equations~(\ref{eqn:bound:H-E}) and~(\ref{eqn:bound:E-F}).
\qed\end{proof}

\ignore{
	Define $\bad = \scoll \vee \qcoll \vee \find$. 
	We first argue that the entire execution in this case is distributed identically to~$\Hyb_j'$ until event $\bad$ occurs. 
	To see this, first note that if $\qcoll$ does not occur then $y_{j+1}$ is uniform in $\bool^n \setminus \{y_1, \ldots, y_j\}$, and independent of $P$ (because $s_1$ is), exactly as in~$\Hyb'_j$. \jnote{Conditioned on the fact that $\scoll$ does not occur, the joint distribution of $(s_0, s_1)$ is no longer uniform.} 
	\jnote{If we treat $P$ as known---since we don't know how to argue otherwise---then it is true that $y_{j+1} = P(s_1) \xor k_2$ is uniform. However, $y_{j+1}$ leaks information about $(s_0, s_1)$ (and anyway the previous note shows that their joint distribution is not uniform); when $k_2=k+1$ then $(k_1, k_2)$ may no longer be uniform.} 
	(Recall that, by assumption, $x_{j+1} \not \in \{x_1, \ldots, x_j\}$.) Observe that up to this point, the adversary's view is independent of the uniform $k_1(=x^*\oplus s_0)$, just like in $\Hyb'_j$. We conclude that the joint distribution of $P$, $k$ and the adversary's view is identical up until and including the $(j+1)$st classical query.
}
\ignore{
Moreover, 
\[x \xor k_1 \not\in \{s_0, s_1\} \;\Rightarrow\; E_k[P_1](x) = P_1(x \xor k_1) \xor k_2 = P(x \xor k_1) \xor k_2 = E_k[P](x)\]
and
\begin{eqnarray*}
\lefteqn{y\xor k_2 \not \in \{P(s_0), P(s_1)\}} \\ & \Rightarrow & \left(E_k[P_1]\right)^{-1}(y) = P_1^{-1}(y \xor k_2) \xor k_1 = P^{-1}(y \xor k_2) \xor k_1 = \left(E_k[P]\right)^{-1}(y); \end{eqnarray*} thus, if $\find$ does not occur then the final $q_E-j-1$ classical queries of $\A$ are answered using~$E_k[P]$. (Recall we assume $\A$ does not make any redundant classical queries, so it will never again query $x_{j+1}$ in the forward direction, or $y_{j+1}$ in the inverse direction, to its classical oracle.)
}

\ignore{
Finally, if $\scoll$ does not occur, then 
\begin{align*}
(P_1)_{T_j, k} &= \swap_{P_1(x_1 \oplus k_1), y_1\oplus k_2} \circ \cdots \circ \swap_{P_1(x_j \oplus k_1), y_j \oplus k_2} \circ P_1\\
& = \swap_{P(x_1 \oplus k_1), y_1\oplus k_2} \circ \cdots  \circ\swap_{P(x_{j+1} \oplus k_1), y_{j+1} \oplus k_2} \circ P = P_{T_{j+1}, k},
\end{align*}
where 
$T_{j+1} = ( (x_1, y_1), \ldots, (x_{j+1}, y_{j+1}))$; that is, the final 
$q_E-j$ quantum queries of $\A$ are answered using~$P_{T_{j+1}, k}$.
}

\ignore{
We emphasize here again our claim: conditioned on $\neg \bad$, the execution of $\A$ is identical to $\Hyb_j'$. Stated differently: if none of the three events above occur, then $\D$ is faithfully simulating $\A$ in $\Hyb_j'$. We will now establish this claim, after which we will control the individual probabilities of the three events.

\ga{For now I mostly just copy-pasted text we used previously to argue for this. I don't yet find this convincing enough.}
\begin{itemize}
\item \ga{Added this} Informally, eliminating the first two events ensures that the classical oracle is distributed correctly; eliminating the third event ensures that the quantum oracle is also distributed correctly.
\item conditioned on the event that $\qcoll$ does not occur the distribution on the view of $\A$ through its $(j+1)$st classical query is identical to its view in~$\Hyb'_j$ (recall our assumption that $\A$ never repeats a classical query).
\item The responses to the remaining classical queries of $\A$ are also given by $E_k[P_1]$. Since $x_{j+1}$ was already queried, and no redundant classical queries are made, this is equivalent to responding with~$E_k[P]$ unless $\find$ occurs.
\end{itemize}
\ga{end argument for claim}
}

\ignore{
Finally, we control $\scoll$.
\begin{claim}
\begin{equation}\label{eq:scollbound}
\Pr\left[\scoll\right]\le j\cdot 2^{-n}.
\end{equation}
\end{claim}
\begin{proof}
But $s_0=x^*\oplus k_1\neq x_i\oplus k_1$ by the assumption that $\A$'s classical queries are all distinct, so
\begin{equation*}
	x_i\oplus k_1\neq s_1\Rightarrow P_1(x_i\oplus k_1)\neq P(x_i\oplus k_1) .
\end{equation*}
We thus get the bound
\begin{align}
	\Pr\left[P_1(x_i\oplus k_1)\neq P(x_i\oplus k_1) \right]&\le \Pr\left[x_i\oplus k_1=s_1\right]\nonumber\\
	&\le  2^{-n}\label{eq:bad-i}.
\end{align} 
Here we have used
the independence of $s_1$ and $x_i\oplus k_1$.
The event we are interested in is
\begin{equation*}
	\bigvee_{i}P_1(x_i\oplus k_1)\neq P(x_i\oplus k_1)
\end{equation*}
Using a union bound, 
we get $\Pr\left[\scoll\right]\le j\cdot 2^{-n}$.
\qed\end{proof}
}

\ignore{
We now have the desired bounds on the probabilities of each of the three events. We can now relate the success probabilities of $\D$ (in the resampling game) and $\A$ (in distinguishing the relevant hybrids), as follows.
\begin{align*}
&	|\Pr[\D = 1\mid b=0] - \Pr[\D=1\mid  b=1]|\\
	&=|\Pr[\D = 1\mid b=0]  - \Pr[\D=1\mid \neg\bad\wedge b=1]\Pr[\neg\bad\mid b=1]\\
	&\quad\quad-\Pr[\D = 1\wedge\bad\mid b=1]\Pr[\bad\mid b=1]|\\
	&\ge|\Pr[\D = 1\mid b=0] - \Pr[\D=1\mid \neg\bad\wedge b=1]|\\
	&\quad\quad-|\Pr[\D=1\mid \neg\bad\wedge b=1]-\Pr[\D = 1\wedge\bad\mid b=1]|\Pr[\bad\mid b=1] \\
	&\ge|\Pr[\D = 1\mid b=0] - \Pr[\D=1\mid \neg\bad\wedge b=1]|-\Pr[\bad\mid b=1]\\
	&=|\Pr[\A(\Hyb_j) = 1] - \Pr[\A(\Hyb'_j)=1]| -\Pr[\bad\mid b=1].
\end{align*}
Here, the first inequality is the triangle inequality and the last equality holds because $\mathcal D$ simulates $ \Hyb_j$ if $b=0$, and conditioned on $\neg\bad$, $\mathcal D$ simulates $\Hyb_j'$.
Using \expref{Lemma}{lem:prp-arl}, the bounds on the probabilities of $\mathsf{Coll}$, $\mathsf{Find}$ and $\mathsf{Bad}$ and a union bound we finally obtain
\begin{align*}
	&|\Pr[\A(\Hyb_j) = 1] - \Pr[\A(\Hyb'_j)=1]|\\
	&\leq  4  \sqrt{\frac{\formerqQ}{2^n}}+2j\cdot2^{-n}+2(q_E-j) 2^{-n}+4\sqrt{\frac{\formerqQ}{2^n}}\\
	&\le 8\sqrt{\frac{\formerqQ}{2^n}}+2q_E2^{-n}.
\end{align*}

\cm{By now, a lot has changed in the proof for uniform $k$, so once everybody is happy with that we should revisit the following.} When $k_1$ is uniform and $k_1=k_2$, the proof is largely the same except for a few differences we point out here. First of all, in phase~2, $\D$ chooses $k_2\coloneqq k_1$ ($k_1\coloneqq k_2$) instead of sampling an independently random key, If $\A$'s $(j+1)$-st query is in the  forward (backward) direction. This ensures that the joint distribution of $k_1$ and $k_2$ is correct. Next, the right-hand side of \cref{eq:repro=lazy} is still uniform and independent of the adversary's view, because $s_1$ is. The uniformity of $s_1$ also ensures that the bounds on the probability of $\mathsf{Find}$ hold in this case as well. Finally, \cref{eq:bad-i} continues to hold, again because of the independence of $s_1$ from $(k_1,x_i)$. We thus obtain the same bound. \ga{I think we need a bit more detail for this case; does exactly the same ``bad events'' analysis+calculation happen here?}}
\section{Proofs of the Technical Lemmas}\label{sec:lemma-proofs}

In this section, we give the proofs of our technical lemmas: the ``arbitrary reprogramming lemma'' (\expref{Lemma}{lem:ada-blind}) and the ``resampling lemma'' (\expref{Lemma}{lem:prp-arl}).

\subsection{Proof of the Arbitrary Reprogramming Lemma}\label{sec:reprog-lemma}

\ignore{
We now prove \expref{Lemma}{lem:ada-blind}. 

\medskip\noindent{\bf Restatement of \expref{Lemma}{lem:ada-blind}.} 
Let $\D$ be a distinguisher in the following game:
\begin{description}
	\item[Phase 1:]  $\D$ outputs descriptions of a function $F_0=F: \bool^m \rightarrow \bool^n$ and a randomized algorithm $\mathcal B$ whose output is a set $B \subset \bool^m \times \bool^n$ where each $x \in \bool^m$ is the first element of at most one tuple in~$B$. Let 
	$
	\epsilon=\max_{x \in \bool^m}\left\{\Pr_{B \leftarrow \B}[x \in B_1]\right\}$.
	
	\item[Phase 2:]  $\mathcal{B}$ is run to obtain~$B$. Let $F_1=F_B$.
	A uniform bit~$b$ is chosen, and
	$\D$ is given quantum access to~$F_b$. 
	\item[Phase 3:]  $\D$ loses access to $F_b$, and  receives the randomness~$r$  used to invoke~$\algo B$ in phase~2. Then $\D$ outputs a guess~$b'$. 
\end{description}
	For any $\D$ making $q$ queries in expectation when its oracle is $F_0$,
it holds that $\left|\Pr[\mbox{$\D$ outputs 1} \mid b=1] - \Pr[\mbox{$\D$ outputs 1} \mid b=0]\right| \leq 2q \cdot \sqrt{\epsilon}$.
}

\expref{Lemma}{lem:ada-blind} allows for distinguishers that choose the number of queries they make adaptively, e.g., depending on the oracle provided and the outcomes of any measurements, and the bound is in terms of the number of queries $\D$ makes \emph{in expectation}. As discussed in \expref{Section}{sec:technical-intro}, the ability to directly handle such adaptive distinguishers is necessary for our proof, and to our knowledge has not been addressed before. 
To formally reason about adaptive distinguishers, we model the intermediate operations of the distinguisher and the measurements it makes as \emph{quantum channels}. With this as our goal, we first recall some necessary background and establish some notation.

Recall that 
a density matrix $\rho$ is a positive semidefinite matrix with unit trace. 
A quantum channel---the most general transformation between density matrices allowed by quantum theory---is a completely positive, trace-preserving, linear map. 
The quantum channel corresponding to the unitary operation $U$ is the map 
$\rho \mapsto U\rho U^\dagger$.
Another type of quantum channel is a \emph{pinching}, which corresponds to the operation of making a measurement. Specializing to the only kind of pinching needed in our proof, consider the measurement of a single-qubit register~$C$ given by the projectors $\{\Pi_0, \Pi_1\}$ with $\Pi_b = \ket{b}\bra{b}_C$. This corresponds to the pinching~${\mathcal M}_C$ where
\[{\mathcal M}_C (\rho) = \Pi_0 \rho \Pi_0 + \Pi_1 \rho \Pi_1.\]
Observe that a pinching only produces the post-measurement state, and does not separately give the outcome (i.e., the result $0$ or $1$).


Consider a quantum algorithm $\mathcal D$ with access to an oracle~$\O$ operating on registers $X, Y$ (so $\O \ket{x}\ket{y} = \ket{x}\ket{y \oplus \O(x)}$). We define the unitary $c\O$ for the \emph{controlled} version of~$\O$, operating on registers $C, X$, and $Y$ (with $C$ a single-qubit register), as 
\[c\O\ket{c}\ket{x}\ket{y} = \ket{c}\ket{x}\ket{y\oplus c\cdot \O(x)}.\]
With this in place, 
we may now view an execution of $\mathcal D^O$ as follows. The algorithm uses registers $C, X, Y$, and~$E$. 
Let $q_{\max}$ be an upper bound on the number of queries $\mathcal D$ ever makes. 
Then $\mathcal D$ applies the quantum channel 
\begin{equation}\label{eq:adaptive-distinguisher}
\left(\Phi \circ c\mathcal O\circ\mathcal M_C\right)^{q_{\max}}
\end{equation}
to some initial state~$\rho=\rho_0^{(0)}$. 
That is, for each of $q_{\max}$ iterations, $\mathcal{D}$ applies to its current state the pinching $\mathcal{M}_C$ followed by the controlled oracle 
$c\mathcal O$ and then an arbitrary
quantum channel 
$\Phi$  (that we take to be the same in all iterations without loss of generality\footnote{This can be done by having a register serve as a counter that is incremented with each application of $\Phi$.}) operating on all its registers.
Finally, $\mathcal D$ applies a measurement to produce its final output. 
If we let $\rho^{(0)}_{i-1}$ denote the intermediate state immediately before the pinching is applied in the $i$th iteration, then 
$p_{i-1}=\Tr\left[\proj 1_C\, \rho^{(0)}_{i-1}\right]$ represents the probability that the oracle is applied (or, equivalently, that a query is made) in the $i$th iteration, and so 
$q = \sum_{i=1}^{q_{\max}} p_{i-1}$ is the expected number of queries made by~$\D$ when interacting with oracle~$\O$.

\medskip \noindent {\bf Proof of \expref{Lemma}{lem:ada-blind}.}
An execution of $\D$ takes the form of Equation~(\ref{eq:adaptive-distinguisher}) up to a final measurement. 
For some fixed value of the randomness $r$ used to run~$\B$, set $\Upsilon_b=\Phi \circ c\mathcal O_{F_b}\circ\mathcal M_C$, and define
	\[\rho_k \stackrel{{\rm def}}{=} \left(\Upsilon_1^{q_{\max}-k} \circ \Upsilon_0^k\right) (\rho),\]
so that $\rho_k$ is the final state if the first $k$ queries are answered using a (controlled) $F_0$ oracle and then the remaining $q_{\max}-k$ queries are answered using a (controlled) $F_1$ oracle.	
Furthermore, we define $\rho_i^{(0)} = \Upsilon_0^i(\rho)$.
Note also that $\rho_{q_{\max}}$ (resp., $\rho_0$) is the final state of the algorithm when the $F_0$ oracle (resp., $F_1$ oracle) is used the entire time.
We bound ${\mathbb E}_r\left[\delta\left(\proj r\otimes \rho_{q_{\max}},\rule{0pt}{9pt} \, \proj r \otimes \rho_0\right)\right]$, where $\delta(\cdot, \cdot)$ denotes the trace distance. 
	
Define $\tilde F^{(B)}(x)=F(x) \xor F^{(B)}(x)$, and note that $\tilde F^{(B)}(x)=0^n$ for $x \not \in B_1$. 
Since trace distance is non-increasing under quantum channels, for any $r$ we have
\begin{eqnarray*}
\delta\left(\proj r\otimes \rho_{k}, \, \proj r\otimes \rho_{k-1}\right) &\le &\delta\left({c\mathcal O}_{F_0}\circ\mathcal M_C\left(\rho_{k-1}^{(0)}\right),  \; {c\mathcal O}_{F_1}\circ\mathcal M_C\left(\rho_{k-1}^{(0)}\right)\right)\\
&=&\delta\left(\mathcal M_C\left(\rho_{k-1}^{(0)}\right),  \; {c\mathcal O}_{\tilde F^{(B)}}\circ\mathcal M_C\left(\rho_{k-1}^{(0)}\right)\right).
\end{eqnarray*}
By definition of a controlled oracle,
\begin{eqnarray*}
{c\mathcal O}_{\tilde F^{(B)}}\circ\mathcal M_C\left(\rho_{k-1}^{(0)}\right) &=& {c\mathcal O}_{\tilde F^{(B)}}\left(\proj 1_C\,\rho_{k-1}^{(0)}\,\proj 1_C\right)+\proj 0_C\,\rho_{k-1}^{(0)}\,\proj 0_C
\\
&=& {\mathcal O}_{\tilde F^{(B)}}\left(\proj 1_C\,\rho_{k-1}^{(0)}\,\proj 1_C\right)+\proj 0_C\,\rho_{k-1}^{(0)}\,\proj 0_C,
\end{eqnarray*}
and thus
\begin{eqnarray*}
\lefteqn{\delta\left(\mathcal M_C\left(\rho_{k-1}^{(0)}\right),  \; {c\mathcal O}_{\tilde F^{(B)}}\circ\mathcal M_C\left(\rho_{k-1}^{(0)}\right)\right)} \\
&=&\delta\left(\proj 1_C\,\rho_{k-1}^{(0)}\,\proj 1_C, \; {\mathcal O}_{\tilde F^{(B)}}\left(\proj 1_C\,\rho_{k-1}^{(0)}\,\proj 1_C\right)\right)\\
&=&p_{k-1} \cdot \delta\left(\sigma_{k-1},  \; {\mathcal O}_{\tilde F^{(B)}}\left(\sigma_{k-1}\right)\right)
\end{eqnarray*}
where, recall, $p_{k-1}=\Tr\left[\proj 1_C\,\rho_{k-1}^{(0)}\right]$ is the probability that a query is made in the $k$th iteration, and we define the normalized state 
$
\sigma_{k-1}\stackrel{\rm def}{=}\frac{\proj 1_C\, \rho_{k-1}^{(0)}\,\proj 1_C}{p_{k-1}}$.
Therefore,
\begin{eqnarray}
\lefteqn{\mathbb E_r\left[\delta\left(\proj r\otimes \rho_{q_{\max}}, \; \proj r\otimes \rho_{0}\right)\right]} \nonumber \\
& \leq & \sum_{k=1}^{q_{\max}} {\mathbb E}_B\left[\delta(\left(\proj r\otimes \rho_{k}, \; \proj r\otimes \rho_{k-1}\right)\right] \nonumber \\
&\le&\sum_{k=1}^{q_{\max}}p_{k-1}\cdot \mathbb E_B\left[\delta\left(\sigma_{k-1},  \; {\mathcal O}_{\tilde F^{(B)}}\left(\sigma_{k-1}\right)\right)\right] \nonumber \\
& \leq & q \cdot \max_{\sigma} \, \mathbb E_B\left[\delta\left(\sigma,  \; {\mathcal O}_{\tilde F^{(B)}}\left(\sigma\right)\right)\right], \label{eqn:adaptive1}
\end{eqnarray}
where we write $\Exp_B$ for the expectation over the set $B$ output by $\mathcal{B}$ in place of~$\Exp_r$.

Since $\sigma$ can be purified to some state $\ket{\psi}$, and $\delta(\ket{\psi}, \ket{\psi'}) \leq \|\ket{\psi}-\ket{\psi'}\|_2$ for pure states $\ket{\psi}, \ket{\psi'}$, 
we have 
\begin{eqnarray*}
\max_{\sigma} \, \mathbb E_B\left[\delta\left(\sigma,  \; {\mathcal O}_{\tilde F^{(B)}}\left(\sigma\right)\right)\right] 
&\leq & \max_{\ket{\psi}} \, {\mathbb E}_B\left[\delta\left(\ket{\psi},  \; {\mathcal O}_{\tilde F^{(B)}}\ket{\psi}\right)\right] \\
&\leq & \max_{\ket{\psi}} \, {\textstyle \Exp_B \left[\| \ket{\psi} - \O_{\tilde F^{(B)}}\ket{\psi}\|_2\right]}.
\end{eqnarray*} 
Because $\O_{\tilde F^{(B)}}$ acts as the identity on $(\mathbb{I}-\Pi_{B_1})\ket{\psi}$ for any $\ket{\psi}$, we have 
\begin{eqnarray}
\lefteqn{ {\textstyle \Exp_B \left[\| \ket{\psi} - \O_{\tilde F^{(B)}}\ket{\psi}\|_2\right]} } \nonumber \\
& = & 
{\textstyle \Exp_B \left[\| \Pi_{B_1} \ket{\psi} - \O_{\tilde F^{(B)}}\Pi_{B_1} \ket{\psi} + (\mathbb{I}-\O_{\tilde F^{(B)}})(\mathbb{I}-\Pi_{B_1} )\ket{\psi}\|_2\right]} \nonumber \\
& \leq & {\textstyle \Exp_B \left[\| \Pi_{B_1} \ket{\psi}\|_2\right]} +  {\textstyle \Exp_B \left[\|\O_{\tilde F^{(B)}}\Pi_{B_1} \ket{\psi}\|_2\right]} \nonumber \\
& = & 2\cdot {\textstyle \Exp_B \left[\| \Pi_{B_1} \ket{\psi}\|_2\right]} \nonumber \\
& \leq & 2 \sqrt{{\textstyle \Exp_B\left[ \rule{0pt}{8pt}\|\Pi_{B_1} \ket{\psi}\|_2^2\right]}}, \label{eqn:blinding:bound-norm}
\end{eqnarray}
using Jensen's inequality in the last step. Let $\ket{\psi}=\sum_{x \in \bool^m, y \in \bool^n} \alpha_{x,y} \ket{x}\ket{y}$ where  $\|\ket{\psi}\|_2^2=\sum_{x,y} \alpha_{x,y}^2 = 1$.
Then
\begin{eqnarray*}
{\textstyle \Exp_B\left[\|\Pi_{B_1} \ket{\psi}\|_2^2\right]}
& = & {\textstyle \Exp_B\left[ \sum_{x,y: \, x \in B_1} \alpha_{x,y}^2 \right]} \nonumber\\
& = & \sum_{x,y} \alpha_{x,y}^2 \cdot \Pr[x \in B_1] \;\; \leq \;\; \epsilon. \label{eq:blinding:low-blinding-prob}
\end{eqnarray*}
Together with Equations~(\ref{eqn:adaptive1}) and (\ref{eqn:blinding:bound-norm}), this gives the desired result.
\qed

\subsection{Proof of the Resampling Lemma}\label{sec:perm-ARL}



We begin by introducing a superposition-oracle technique based on the one by Zhandry~\cite{Zhandry2019}, but different in that our oracle represents a two-way accessible, uniform permutation (rather than a uniform function). We also do not need to ``compress'' the oracle, as an inefficient representation suffices for our purposes.


For an arbitrary function $f:\{0,1\}^n\to\{0,1\}^n$, define the state
\[
\ket{f}_F=\bigotimes_{x\in \bool^n}\ket{f(x)}_{F_x},
\]
where $F$ is the collection of registers $\{F_x\}_{x\in\{0,1\}^n}$. 
We represent an evaluation of $f$ via an operator~$O$ whose action on the computational basis is given by
\[
O_{XYF}\;\ket x_X\ket y_Y\ket f_F=\CNOT^{\otimes n}_{F_x:Y}\ket x_X\ket y_Y\ket f_F=\ket{x}_X\ket{y\oplus f(x)}_Y\ket f_F,
\]
where $X, Y$ are $n$-qubit registers. 
Handling inverse queries to $f$ is more difficult.
We want to define an inverse operator $\Oinv$ such that, for any permutation~$\pi$, 
	\begin{equation}\label{eq:projected-inv-query}
\Oinv_{XYF} \ket \pi_F=\left(\sum_{x,y\in \bool^n}\proj{ y}_Y\otimes \mathsf X^{x}_X\otimes \proj{ y}_{F_{x}}\right) \ket{\pi}_F
\end{equation}
(where $\mathsf X$ is the Pauli-X operator, and for $x\in \bool^n$ we let $\mathsf X^{x} :=\mathsf X^{x_1}\otimes \mathsf X^{x_2}\otimes \ldots\otimes \mathsf X^{x_n}$ so that $\mathsf X^{x}\ket{\hat x}=\ket{\hat x\oplus x}$); then, 
\[\Oinv_{XYF} \ket x_X \ket y_Y \ket \pi_F=\ket {x \xor \pi^{-1}(y)}_X \ket{y}_Y \ket{\pi}_F.\]
In order for $\Oinv$ to be a well-defined unitary operator, however, we must extend its definition to the entire space of functions.
A convenient extension is given by \extraexplain{the following action on arbitrary computational basis states:}
\[
\Oinv_{XYF} =\prod_{x'\in \bool^n}\left(\mathsf{X}^{x'}_X\otimes \proj y_{F_{x'}}+\left(\mathds 1-\proj y\right)_{F_{x'}}\right),
\]
so that 
\[\Oinv_{XYF} \ket x_X \ket y_Y \ket f_F=\ket {x \xor \left(\oplus_{x': f(x')=y} \, x'\right)}_X \ket{y}_Y \ket{f}_F.\]
%
\extraexplain{In other words, the inverse operator XORs all preimages (under~$f$) of the value in register $Y$ into the contents of register~$X$.}


\ignore{
	The  oracle register $F$ is initialized in a state supported on the subspace spanned by the permutations, and forward and inverse queries are controlled unitary operators that thus do not change the computational basis information in the oracle register, i.e. it preserves the permutation subspace. We introduce some notation for this subspace 
	Let $P^\Per$ be the projector onto the subspace $\mathrm{span}\{\ket\pi\,|\,\pi\in \Perms_n\}$. When defining the inverse query operator $\Oinv$, we need to make sure that \extraexplain{it acts on a database initialized with a permutation in the right way, i.e.}
	\begin{equation}\label{eq:projected-inv-query}
		\Oinv P_F^\Per=\left(\sum_{x',y'\in \bool^n}\proj{ y'}_Y\otimes \mathsf X^{x'}_X\otimes \proj{ y'}_{F_{x'}}\right) P_F^\Per,
	\end{equation}
	where $\mathsf X$ is the Pauli-X operator and for $x\in \bool^n$ we let $\mathsf X^{x'} :=\mathsf X^{x'_1}\otimes \mathsf X^{x'_2}\otimes \ldots\otimes \mathsf X^{x'_n}$, i.e., $\mathsf X^{x'}\ket{x}=\ket{x'\oplus x}$. \extraexplain{The expression on the RHS of \cref{eq:projected-inv-query} can be described in words as follows: For all pairs $(x',y')$, if the adversary's $Y$ register and the superposition oracle's $F_{x'}$ register both contain $y'$, add $x'$ to register $X$.}
To conveniently manipulate expressions involving $\Oinv$, we need to define its action on any state of registers $XYF$, i.e. we need to extend the unitary operator defined in \cref{eq:projected-inv-query} in some way by defining how it acts in case $F$ is in a state $\ket f$ with $f\notin \Perms_n$.\extraexplain{\footnote{We cannot just use Equation~(\ref{eq:projected-inv-query}) without $P_F^\Per$, as the right-hand side would then not be a unitary operator.}}
}

We may view a uniform permutation as a uniform superposition over all permutations in~$\Perms_n$; i.e., we model a uniform permutation as the state
\[
\ket{\phi_0}_F = \left(2^n!\right)^{-\frac 1 2}\sum_{\pi\in \Perms_n}\ket{\pi }_F.
\]
The final state of any oracle algorithm $\D$ is identically distributed whether we (1)~sample uniform $\pi \in \Perms_n$ and then run $\D$ with access to~$\pi$ and $\pi^{-1}$, or (2)~run $\D$ with access to $O$ and $\Oinv$ after initializing the $F$-registers to $\ket{\phi_0}_F$ 
(and, if desired, at the end of its execution, measure the $F$-registers to obtain $\pi$ and the residual state of~$\D$).

\ignore{
\medskip \noindent {\bf The desired lemma.}
We now restate the resampling lemma for convenience.
\begin{lemma}[Restatement of \expref{Lemma}{lem:prp-arl}] 
	Let $\D$ be a distinguisher in the following game:
	\begin{description} 
		\item[]\textsf{\emph{Phase 1:}} A uniform permutation $P:\bool^n \rightarrow \bool^n$ is chosen, and $\D$ is given quantum access to~$P_0=P$ and $P^{-1}_0=P^{-1}$. This phase ends when $\D$ triggers \textsf{\emph{Reprogram}}.
		\item[]\textsf{\emph{Phase 2:}} Uniform $s_0, s_1 \in \bool^n$ are chosen, and we let $P_1=P\circ \swap_{s_0,s_1}$.  
		A uniform bit~$b$ is chosen, and $\D$ is given $s_0, s_1$, and quantum access to~$P_b, P^{-1}_b$. Then $\D$ outputs a guess~$b'$.
	\end{description}
	For any $\D$ making at most $q$ queries (combined) to $P_0, P^{-1}_0$ in the first phase, 
	\begin{equation}\label{eq:bound-in-prp-arl}
		\left|\Pr[\mbox{$\D$ outputs 1} \mid b=1] - \Pr[\mbox{$\D$ outputs 1} \mid b=0]\right| \leq 4 \sqrt{q/2^n}.
	\end{equation}
\end{lemma}
}

Our proof relies on the following lemma, which is a special case of the conclusion 
of implication~$(\diamond')$ in \cite{ODV21}. (Here and in the following, we denote the complementary projector of a projector $P$ by $\bar P\stackrel{\rm def}{=} \mathds 1-P$.)
\begin{lemma}[Gentle measurement lemma]\label{lem:sgentle}
	Let $\ket\psi$ be a quantum state and let $\{P_i\}_{i=1}^q$ be a collection of projectors 
	with $\left\| \bar P_i\ket\psi\right\|_2^2\le \epsilon_i$ for all~$i$. Then 
	\[
	1-\left|\bra\psi\left(P_q \cdots P_1\right)\ket\psi\right|^2\le \sum_{i=1}^q\epsilon_i.
	\] 
\end{lemma}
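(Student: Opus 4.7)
My plan is to base the proof on the operator telescoping identity
\[
\mathds 1 - P_q P_{q-1} \cdots P_1 \;=\; \sum_{i=1}^q P_q P_{q-1} \cdots P_{i+1}\, \bar P_i,
\]
with the convention that the empty product (for $i=q$) is $\mathds 1$. I would verify this by induction on $q$: the base case $q=1$ is $\mathds 1 - P_1 = \bar P_1$, and for the inductive step one writes $\mathds 1 - P_q (P_{q-1} \cdots P_1) = \bar P_q + P_q(\mathds 1 - P_{q-1}\cdots P_1)$ and applies the inductive hypothesis to the second summand. This identity reduces the overall deviation of the product from the identity to a telescoping sum of single-step failures $\bar P_i$, each flanked by projectors of operator norm one.

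Taking the $\bra\psi\cdot\ket\psi$ matrix element of both sides and applying Cauchy--Schwarz termwise, together with the fact that products of projectors are contractions, then gives
\[
\bigl|\bra\psi\, P_q \cdots P_{i+1}\, \bar P_i \,\ket\psi\bigr| \;\le\; \|P_{i+1}^\dagger\cdots P_q^\dagger\ket\psi\|_2 \cdot \|\bar P_i \ket\psi\|_2 \;\le\; \|\bar P_i \ket\psi\|_2 \;\le\; \sqrt{\epsilon_i}.
\]
Summing and applying the triangle inequality yields $|1 - \bra\psi P_q \cdots P_1 \ket\psi| \le \sum_i \sqrt{\epsilon_i}$, and hence the amplitude-level bound $|\bra\psi P_q \cdots P_1 \ket\psi| \ge 1 - \sum_i \sqrt{\epsilon_i}$.

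The main obstacle will be tightening this into the linear-in-$\epsilon_i$ form stated by the lemma, rather than the weaker $O(\sum_i\sqrt{\epsilon_i})$ bound that the amplitude-level argument produces. To close that gap I plan to switch to working with the sequence of unnormalized intermediate states $\ket{\psi_i}:=P_i P_{i-1}\cdots P_1 \ket\psi$ and exploit the Pythagorean recurrence
\[
\|\ket{\psi_i}\|_2^2 \;=\; \|\ket{\psi_{i-1}}\|_2^2 - \|\bar P_i \ket{\psi_{i-1}}\|_2^2,
\]
which is valid because each $P_i$ is an orthogonal projector ($P_i^2=P_i$ and $P_i=P_i^\dagger$). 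This telescopes to $1 - \|P_q\cdots P_1\ket\psi\|_2^2 = \sum_i \|\bar P_i \ket{\psi_{i-1}}\|_2^2$, so the remaining task is to control each $\|\bar P_i \ket{\psi_{i-1}}\|_2^2$ by $\epsilon_i = \|\bar P_i \ket\psi\|_2^2$ despite $\ket{\psi_{i-1}}$ differing from $\ket\psi$. This is precisely the content of implication $(\diamond')$ of~\cite{ODV21}: an inductive estimate that keeps track of how far partial products have drifted the state from $\ket\psi$, and charges any extra discrepancy to initial-state errors from earlier steps. Combining the resulting estimate on $\|P_q\cdots P_1\ket\psi\|_2^2$ with the relation $|\bra\psi P_q\cdots P_1\ket\psi|^2 \ge \|P_q \cdots P_1\ket\psi\|_2^2 - \|(\mathds 1-\ketbra{\psi}{\psi})P_q\cdots P_1\ket\psi\|_2^2$ (and bounding the second term using the telescoping identity applied to the projector chain extended by $\ketbra{\psi}{\psi}$) then delivers the stated bound $1-|\bra\psi P_q\cdots P_1\ket\psi|^2 \le \sum_i \epsilon_i$.
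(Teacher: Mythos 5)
You should know at the outset that the paper does not actually prove this lemma: it is imported as ``a special case of the conclusion of implication $(\diamond')$ in \cite{ODV21}'', so any self-contained argument is by definition a different route. Your first half (the telescoping identity plus Cauchy--Schwarz, giving $1-|\bra\psi P_q\cdots P_1\ket\psi|\le\sum_i\sqrt{\epsilon_i}$) is correct, and so is the Pythagorean telescope $1-\|P_q\cdots P_1\ket\psi\|_2^2=\sum_i\|\bar P_i\ket{\psi_{i-1}}\|_2^2$. The gap is the step that is supposed to do all the work: ``control each $\|\bar P_i\ket{\psi_{i-1}}\|_2^2$ by $\epsilon_i$.'' That intermediate claim is false in general---the partial products drift the state, and that drift is exactly what makes the quantum union bound nontrivial---and it is not what implication $(\diamond')$ of \cite{ODV21} asserts: $(\diamond')$ is (up to the placement of the square) the statement of the lemma itself. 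So at the crucial moment your argument either asserts something untrue or falls back on citing the very result being proved, and the scaffolding of your last paragraph (the decomposition of $|\bra\psi\cdots\ket\psi|^2$ and the extended projector chain) reintroduces $\sqrt{\epsilon_i}$-type terms rather than closing the gap.

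There is also a concrete reason your attempts to reach the constant-$1$ linear bound keep failing: the inequality as printed is false already at $q=1$. For a single projector, $\bra\psi P_1\ket\psi=\|P_1\ket\psi\|_2^2=1-\|\bar P_1\ket\psi\|_2^2$, so taking $\|\bar P_1\ket\psi\|_2^2=\epsilon_1=1/2$ gives $1-|\bra\psi P_1\ket\psi|^2=3/4>\epsilon_1$. What is provable (and all the paper's application at Equation~(\ref{eq:use-z}) actually needs, at the cost of a constant) is $1-|\bra\psi P_q\cdots P_1\ket\psi|\le 2\sum_i\epsilon_i$, hence $1-|\bra\psi P_q\cdots P_1\ket\psi|^2\le 4\sum_i\epsilon_i$, and your two ingredients already suffice for that. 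Use the mirror telescope $\mathds 1-P_q\cdots P_1=\sum_i\bar P_iP_{i-1}\cdots P_1$, bound each term by $|\bra\psi\bar P_i\ket{\psi_{i-1}}|\le\sqrt{\epsilon_i}\,\|\bar P_i\ket{\psi_{i-1}}\|_2$ (inserting $\bar P_i=\bar P_i^2$), and apply Cauchy--Schwarz \emph{across the index $i$} together with your Pythagorean identity to get $|1-c|\le\sqrt{\sum_i\epsilon_i}\cdot\sqrt{1-\|P_q\cdots P_1\ket\psi\|_2^2}\le\sqrt{\sum_i\epsilon_i}\cdot\sqrt{(1-|c|)(1+|c|)}$ for $c=\bra\psi P_q\cdots P_1\ket\psi$, using $\|P_q\cdots P_1\ket\psi\|_2\ge|c|$; solving for $|1-c|$ yields $|1-c|\le 2\sum_i\epsilon_i$. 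This is essentially the argument of \cite{ODV21}, and I would recommend proving the corrected statement this way rather than chasing the constant-$1$ form.
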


\medskip \noindent {\bf Proof of Lemma~\ref{lem:prp-arl-proved}.}
We split the distinguisher $\D$ into two stages $\D =(\D_0, \D_1)$ corresponding to the first and second phases of the experiment in \expref{Lemma}{lem:prp-arl-proved}. 
As discussed above, we run the experiment using the superposition oracle $\ket{\phi_0}_F$ and then measure the $F$-registers at the end. 
Informally, our goal is to show that on average
over the choice of reprogrammed positions $s_0, s_1$, the adversary-oracle state after $\D_0$ finishes is almost invariant under the reprogramming operation (i.e., the swap of registers $F_{s_0}$ and $F_{s_1}$) unless $\D_0$ makes a large number of oracle queries. 
This will follow from \expref{Lemma}{lem:sgentle} because, on average over the choice of $s_0, s_1$, any particular query of $\D_0$ (whether using $O$ or $\Oinv$) only involves $F_{s_0}$ or $F_{s_1}$ with negligible amplitude. 


\def\S{{\sf Swap}}


We begin by defining the projectors 
\begin{eqnarray*}
&\left(P_{s_0s_1}\right)_X =\begin{cases}
\mathds 1 &s_0=s_1\\
\mathds 1-\proj{s_0}-\proj{s_1}& s_0\neq s_1
\end{cases}&\\
&\left( P^{\mathrm{inv}}_{s_0s_1}\right)_{FY}=	 \begin{cases}
\mathds 1 &s_0=s_1\\
\sum_{y\in \bool^n}\proj y_Y\otimes \left(\mathds 1-\proj y\right)^{\otimes 2}_{F_{s_0}F_{s_1}}& s_0\neq s_1.
\end{cases}&
\end{eqnarray*}
It is straightforward to verify that for any $s_0, s_1$:
\begin{eqnarray}
&\left[\S_{F_{s_0}F_{s_1}},\; O_{XYF}\left(	 P_{s_0s_1}\right)_X\right]=0 &\label{eq:commutes1}\\
&\left[\S_{F_{s_0}F_{s_1}},\; \Oinv_{XYF}\left( P^{\mathrm{inv}}_{s_0s_1}\right)_{FY} \right]=0,& \label{eq:commutes2}
\end{eqnarray}
where $[\cdot, \cdot]$ denotes the commutator operation, 
and
$\S_{AB}$ is the swap operator (i.e., $\S_{A,B} {\ket x}_A\ket{x'}_B=\ket{x'}_A\ket{x}_B$ if the target registers $A, B$ are distinct, and the identity if $A$ and $B$ refer to the same register). 
In words, this means that if we project a forward query to inputs other than $s_0, s_1$, then swapping the outputs of a function at $s_0$ and $s_1$ before evaluating that function has no effect; the sane holds if we project an inverse query (for some associated function~$f$) to the set of output values that are not equal to $f(s_0)$ or~$f(s_1)$.

Since $\bar P_{s_0s_1}\stackrel{\rm def}{=}\mathds 1- P_{s_0s_1} \le\proj{s_0}+\proj{s_1}$  it follows that
for any normalized state $\ket\psi_{XE}$ (where $E$ is an arbitrary other register),
\begin{align}
\Exp_{s_0, s_1}\left[\left\|\left(\bar P_{s_0s_1}\right)_{X}\ket \psi_{XE}\right\|^2_2\right]
&\le \Exp_{s_0, s_1}\left[\bra\psi\left(\proj{s_0}+\proj{s_1}\right)\ket \psi\right] \nonumber \\
&= 2\cdot 2^{-n}.\label{eq:small}
\end{align}
We show a similar statement about $P^{\mathrm{inv}}_{s_0s_1}$. We can express a valid adversary/oracle state $\ket\psi_{YXEF}$~(that is thus only supported on the span of $\Perms_n$ \jnote{???}) as
\begin{equation}\label{eq:decomposition}
\ket\psi_{YXEF}=\sum_{x,y\in \bool^n}c_{xy}\ket y_Y\ket y_{F_x}\ket{\psi_{xy}}_{XEF_{x^c}},
\end{equation}
\jnote{Why do you need $\ket y_Y$, rather than just pushing it inside $\psi_{xy}$?}
for some normalized quantum states $\{\ket{\psi_{xy}}\}_{x,y\in\bool^n}$,
with $\sum_{x,y\in \bool^n}|c_{xy}|^2=1$ and $\bra y_{F_{x'}}\ket{\psi_{xy}}_{XEF_{x^c}}=0$ for all $x'\neq x$.
If $s_0=s_1$, then $\left\|\left(\bar P^{\mathrm{inv}}_{s_0s_1}\right)_{YF}\ket\psi_{YXEF}\right\|_2^2=0\le 2\cdot 2^{-n}$. It is thus immediate from \cref{eq:decomposition} that
\begin{align}\label{eq:invsmall}
&\Exp_{s_0, s_1}\left[\left\|\left(\bar P^{\mathrm{inv}}_{s_0s_1}\right)_{YF}\ket\psi_{YXEF}\right\|_2^2\right]	\le 2\cdot 2^{-n}
\end{align}
\jnote{I am missing why the above is not immediate, as it was for the forward direction.}\cm{I guess the problem is that I am missing why it would be immediate... ;) I simplified it, but felt it was not obvious enough without giving the (new) decomposition \cref{eq:decomposition}.} \jnote{Somehow to me the explanation doesn't convince me, but I am anyway convinced without the explanation that it is true.} 

Without loss of generality, we assume $\D_0$ starts with initial state $\ket{\psi_0}=\ket{\psi_0'}\ket{\phi_0}$ (which we take to include the superposition oracle's initial state $\ket{\phi_0}$), computes the state
\[
\ket\psi=U_{\D_0}\ket{\psi_0}=U_qO_qU_{q-1}O_{q-1}\cdots U_1O_1\ket{\psi_0},
\]
and outputs all its registers as a state register~$E$. Here, each $O_i\in\{O, \Oinv\}$ acts on registers $XYF$, and each $U_j$ acts on registers $XYE$. To each choice of $s_0,s_1$ we assign a decomposition $\ket\psi=\ket{\psi_{\mathrm{good}}(s_0,s_1)}+\ket{\psi_{\mathrm{bad}}(s_0,s_1)}$ by defining

\[
\ket{\psi_{\mathrm{good}}(s_0,s_1)}=z\cdot U_qO_q P^q_{s_0s_1}U_{q-1}O_{q-1} P^{q-1}_{s_0s_1}\cdots U_1O_1 P^1_{s_0s_1}\ket{\psi_0},
\]
where $ P^i_{s_0s_1}= P_{s_0s_1}$ if $O_i=O$, $ P^i_{s_0s_1}= P^{\mathrm{inv}}_{s_0s_1}$ if $O_i=\Oinv$, and $z\in\mathbb C$ is such that $|z|=1$ and   $\braket\psi{\psi_{\mathrm{good}}(s_0,s_1)}\in\mathbb R_{\ge 0}$. \jnote{Is the correction factor $z$ necessary?}\cm{I added a parenthetical to explain.} \jnote{I don't really see why it is needed, but I guess it doesn't matter.}
\[
\ket{\psi_{\mathrm{good}}(s_0,s_1)}=z\cdot U_{\D_0} 
Q^q_{s_0s_1}\cdots Q^1_{s_0s_1}\ket{\psi_0},
\]
with
$Q^i_{s_0s_1}=\tilde U_i^\dagger P^i_{s_0s_1}\tilde U_i$
for 
$\tilde U_i=U_{i-1}O_{i-1}\ldots U_1O_1$.
Let 
\[
\epsilon_i(s_0,s_1)=\left\| \bar Q^i_{s_0s_1}\ket{\psi_0}\right\|_2^2=\left\| \bar P^i_{s_0s_1}\tilde U_i\ket{\psi_0}\right\|_2^2.
\]
Applying Lemma~\ref{lem:sgentle} yields
\begin{eqnarray}
1-\left|\braket\psi{\psi_{\mathrm{good}}(s_0,s_1)}\right|^2\le \sum_{i=1}^q	\epsilon_i(s_0,s_1). \label{eqn:gentle}
\end{eqnarray}

We will now analyze the impact of reprogramming the superposition oracle after $\D_0$ has finished. Recall that reprogramming swaps the values of the permutation at points $s_0$ and $s_1$, which is implemented in the superposition-oracle framework by applying $\S_{F_{s_0}F_{s_1}}$. 
Note that 
$\S_{F_{s_0}F_{s_1}}\ket{\phi_0}=\ket{\phi_0}$. As the adversary's internal unitaries $U_i$ do not act on $F$, Equations \eqref{eq:commutes1} and \eqref{eq:commutes2} then imply that 
\[
\S_{F_{s_0}F_{s_1}}\ket{\psi_{\mathrm{good}}(s_0,s_1)}=\ket {\psi_{\mathrm{good}}(s_0,s_1)}\,.
\]
The standard formula for the trace distance of pure states thus yields
\begin{align}
\frac 1 2\left\|\proj\psi-\S_{F_{s_0}F_{s_1}}\proj{\psi}\S_{F_{s_0}F_{s_1}}\right\|_1=&\sqrt{1-\left|\bra\psi \S_{F_{s_0}F_{s_1}}\ket\psi\right|^2}.\label{eq:puretrdist}
\end{align}
We further have
\begin{align}
\left|\bra\psi \S_{F_{s_0}F_{s_1}}\ket\psi\right|
&= \left|\braket\psi\psi+ \bra{\psi_{\mathrm{bad}}(s_0,s_1)}\left(\S_{F_{s_0}F_{s_1}}-\mathds 1\right)\ket{\psi_{\mathrm{bad}}(s_0,s_1)}\right|\nonumber \\
&\ge 1-2\|\ket{\psi_{\mathrm{bad}}(s_0,s_1)}\|_2^2\label{eq:use-inva}
\end{align}
using the triangle and Cauchy-Schwarz inequalities.
Combining Equations \eqref{eq:puretrdist} and \eqref{eq:use-inva} we obtain
\begin{equation*}
\frac 1 2\left\|\proj\psi-\S_{F_{s_0}F_{s_1}}\proj{\psi}\S_{F_{s_0}F_{s_1}}\right\|_1\le 2\cdot \|\ket{\psi_{\mathrm{bad}}(s_0,s_1)}\|_2.
\end{equation*}
But as $\ket{\psi_{\mathrm{bad}}(s_0,s_1)}=\ket \psi-\ket{\psi_{\mathrm{good}}(s_0,s_1)}$, we have
\begin{align}
\|\ket{\psi_{\mathrm{bad}}(s_0,s_1)}\|_2^2
&= 2-2\cdot \mathrm{Re}\braket{\psi}{\psi_{\mathrm{good}}(s_0,s_1)}\nonumber\\
&= 2-2\cdot \left|\braket{\psi}{\psi_{\mathrm{good}}(s_0,s_1)}\right|\label{eq:use-z}\\
&\le 2\sum_{i=1}^q\epsilon_i(s_0,s_1).\nonumber
\end{align}
Combining the last two equations we obtain
\begin{equation}\label{eq:tracegentle}
\frac 1 2\left\|\proj\psi-\S_{F_{s_0}F_{s_1}}\proj{\psi}\S_{F_{s_0}F_{s_1}}\right\|_1\le 2\sqrt 2\sqrt{\sum_{i=1}^q\epsilon_i(s_0,s_1)}\,.
\end{equation}
The remainder of the proof is the same as the analogous part of the proof of \cite[Theorem~6]{GHHM20}. $\D_1$'s task boils down to distinguishing the states $\ket\psi$ and $\S_{F_{s_0}F_{s_1}}\ket{\psi}$, for uniform $s_0, s_1$ that $\D_1$ receives as input, using the limited set of instructions allowed by the superposition oracle. 
We can therefore bound $\D$'s advantage by the maximum distinguishing advantage for these two states when using arbitrary quantum computation, averaged over the choice of $s_0, s_1$. Using the standard formula for this maximum distinguishing advantage we obtain
\begin{align*}
\Pr\left[\D \text{~outputs~} b\right]-\frac 1 2
&\le \frac 1 4 \Exp_{s_0,s_1}\left[	\left\|\proj\psi-\S_{F_{s_0}F_{s_1}}\proj{\psi}\S_{F_{s_0}F_{s_1}}\right\|_1\right]\\
&\le \sqrt 2\Exp_{s_0,s_1}\left[	\sqrt{\sum_{i=1}^q\epsilon_i(s_0,s_1)}\right]\\
&\le \sqrt 2\sqrt{\Exp_{s_0,s_1}\left[	\sum_{i=1}^q\epsilon_i(s_0,s_1)\right]}
\le 2\sqrt{\frac{q}{2^n}},
\end{align*}
where the second inequality is Equation \eqref{eq:tracegentle}, the third  is Jensen's inequality, and the last is from Equations~\eqref{eq:small}--\eqref{eqn:gentle}. 
This implies the lemma.
\qed

\section*{Acknowledgments}
The authors thank Andrew Childs and Bibhusa Rawal for useful discussions.
Work of Jonathan Katz was supported in part 
by financial assistance award 70NANB19H126 from the U.S. Department of Commerce, National Institute of Standards and Technology. Work of Christian Majenz was funded by a NWO VENI grant (Project No.\ VI.Veni.192.159). Gorjan Alagic acknowledges support from the U.S. Army Research Office under Grant Number W911NF-20-1-0015, the U.S. Department of Energy under Award Number DE-SC0020312, and the AFOSR under Award Number FA9550-20-1-0108.

\def\shortbib{0}


\appendix

\section{Security of Forward-Only Even-Mansour}\label{app:forward-EM}



In this section we consider a simpler case, where $E_k[F](x) := F(x \oplus k)$ for $F:\bool^n \rightarrow \bool^n$ a uniform \emph{function} and $k$ a uniform $n$-bit string.
Here we restrict the adversary to forward queries only, i.e., the adversary has classical access to $E_k[F]$ and quantum access to~$F$; note that $E^{-1}_k[F]$ and $F^{-1}$ may not even be well-defined. As mentioned in the main body, this setting was analyzed in the previously published work \cite{JST21} as well, using different techniques.

We let $\Funcs_n$ denote the set of all functions from $\bool^n$ to~$\bool^n$.

\begin{theorem}\label{thm:forward-only}
Let $\A$ be a quantum algorithm making $\formerqC$  classical queries to its first oracle and $q_F$ quantum queries to its second oracle. Then
\begin{eqnarray*}
\lefteqn{\left|\Pr_{\substack{k \from \bool^n \\ F \from \Funcs_n}} \left[\A^{E_k[F], F}(1^n) = 1\right]
- \Pr_{R, F \from \Funcs_n} \left[\A^{R, F}(1^n) = 1\right]\right|} \hspace*{1.5in} \\
& \leq & 2^{-n/2} \cdot \left(2 \formerqC \sqrt{q_F} + 2q_F \sqrt{\formerqC}\right).
\end{eqnarray*}
\end{theorem}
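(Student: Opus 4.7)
}

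The plan is to follow the hybrid structure of the proof of \expref{Theorem}{thm:full}, but in a substantially simplified form that takes advantage of (i) the absence of inverse queries, (ii) the fact that the ideal oracle $R$ is a random function rather than a permutation (so no distinctness of outputs is required), and (iii) the availability of the simpler resampling lemma for random functions (\expref{Lemma}{lem:arl}) in place of \expref{Lemma}{lem:prp-arl}. Without loss of generality we assume $\A$ makes no repeated classical queries. As before, partition $\A$'s execution into stages $0, \ldots, q_E$ separated by its classical queries, and let $q_{F,j}$ be the expected number of quantum queries $\A$ makes in stage~$j$ in the ideal world, so that $\sum_j q_{F,j} = q_F$. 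For an ordered list $T = ((x_1,y_1),\ldots,(x_t,y_t))$ and key $k$, define $F_{T,k}$ to agree with $F$ everywhere except at the points $x_i \oplus k$, where it takes the value $y_i$. Define $\Hyb_j$ as the experiment where the first $j$ classical queries are answered by $R$ and the first $j$ stages use $F$, after which the classical oracle switches to $E_k[F]$ and the quantum oracle switches to $F_{T_j,k}$; define $\Hyb'_j$ analogously but stopping after the $(j+1)$st classical query (so uses $R$ for that query and switches to $F_{T_{j+1},k}$ thereafter). Then $\Hyb_0$ is the real world and $\Hyb_{q_E}$ is the ideal world.

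The first step is to bound $\left|\Pr[\A(\Hyb'_j)=1]-\Pr[\A(\Hyb_{j+1})=1]\right|$ by constructing a distinguisher $\D$ for the arbitrary reprogramming game of \expref{Lemma}{lem:ada-blind}. $\D$ specifies the function $F_0 := F$ (sampled by $\D$) and the algorithm $\B$ which samples $k$ and outputs the set $B = \{(x_i \oplus k, y_i)\}_{i=1}^{j+1}$, so that $F^{(B)} = F_{T_{j+1}, k}$. Since $k$ is uniformly distributed, the probability that any fixed $x \in \bool^n$ lies in $B_1$ is at most $(j+1)/2^n$. Running $\A$'s stage $j+1$ using the oracle $F_b$ provided by the game yields $\Hyb_{j+1}$ when $b=0$ and $\Hyb'_j$ when $b=1$, and the expected number of queries $\D$ makes to $F_0$ in that stage is $q_{F,j+1}$. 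Thus \expref{Lemma}{lem:ada-blind} gives a bound of $2 q_{F,j+1} \sqrt{(j+1)/2^n}$.

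The second and more interesting step is to bound $\left|\Pr[\A(\Hyb_j)=1]-\Pr[\A(\Hyb'_j)=1]\right|$ using the resampling lemma \expref{Lemma}{lem:arl}. Construct a distinguisher $\D$ that in phase~1 runs $\A$ with oracle $F_0 = F$ (quantum) and an internally sampled random $R$ (classical) until $\A$ submits its $(j+1)$st classical query $x_{j+1}$. In phase~2, $\D$ receives $s$ (and, if $b=1$, the reprogrammed oracle $F_1 = F_{s \mapsto y}$ for uniform $y$). $\D$ sets $k := s \oplus x_{j+1}$, answers the $(j+1)$st classical query with $E_k[F_b](x_{j+1}) = F_b(s)$, and continues simulating $\A$ with classical oracle $E_k[F_b]$ and quantum oracle $(F_b)_{T_j, k}$ (computed by making $j$ additional phase-2 queries to learn $F_b(x_i \oplus k)$, which do not count toward the bound of \expref{Lemma}{lem:arl}). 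For $b=0$ this is exactly $\Hyb_j$. For $b=1$, because $\A$ never repeats classical queries we have $x \neq x_{j+1}$ (i.e., $x \oplus k \neq s$) for every subsequent classical query, hence $E_k[F_1]$ agrees with $E_k[F]$ on those queries; moreover, by the same distinctness, $(F_1)_{T_j, k} = F_{T_{j+1}, k}$ with $y_{j+1} := y$ drawn uniformly at random. So the simulation exactly reproduces $\Hyb'_j$, with no bad events to worry about. Thus \expref{Lemma}{lem:arl} gives a bound of $1.5 \sqrt{q_F/2^n}$.

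Summing over $j = 0, \ldots, q_E - 1$ and using $\sum_j q_{F,j+1} \leq q_F$, the total distinguishing advantage is at most $1.5 q_E \sqrt{q_F/2^n} + 2 q_F \sqrt{q_E/2^n}$, which is bounded by the claimed $2^{-n/2}(2 q_E \sqrt{q_F} + 2 q_F \sqrt{q_E})$. The main conceptual simplification relative to the permutation case is that, because $R$ is a random function (no output-distinctness) and there are no inverse queries, the bad events $\qcoll$, $\scoll$, and $\find$ of \expref{Lemma}{lem:step-one-perps} all disappear; the most delicate point is merely verifying the identity $(F_1)_{T_j, k} = F_{T_{j+1}, k}$, which follows immediately from the no-repeated-classical-queries assumption.
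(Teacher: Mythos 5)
Your proposal is correct and follows essentially the same route as the paper's own proof in Appendix~A: the same hybrids $\Hyb_j$, $\Hyb'_j$, the same reduction to \expref{Lemma}{lem:ada-blind} with $\epsilon=(j+1)/2^n$ for the $\Hyb'_j$-vs-$\Hyb_{j+1}$ step, the same reduction to \expref{Lemma}{lem:arl} with $k:=s\oplus x_{j+1}$ for the $\Hyb_j$-vs-$\Hyb'_j$ step, and the same final summation. (One cosmetic remark: since $F_{T,k}$ is defined here by a table lookup rather than by composing swaps, the ``$j$ additional phase-2 queries'' you mention are not actually needed, though as you note they would be harmless anyway.)
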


\begin{proof}
We make the same assumptions about $\A$ as in the initial paragraphs of the proof of \expref{Theorem}{thm:full}. We also adopt analogous notation for  the stages of $\A$, now using $q_E$, $q_F$, and $q_{F, j}$ as appropriate.

Given a function $F : \bool^n \rightarrow \bool^n$, a set $T$ of pairs where any $x \in \bool^n$ is the first element of at most one pair in~$T$, and a key $k \in \bool^n$, we define the function $F_{T, k}:\bool^n\rightarrow \bool^n$ as
\[
F_{T, k}(x) := 
\begin{cases}
y &\text{if } (x \oplus k, y) \in T\\
F(x) &\text{otherwise.}
\end{cases}
\]
Note that, in contrast to the analogous definition in \expref{Theorem}{thm:full}, here the order of the tuples in $T$ does not matter and so we may take it to be a set. Note also that we are redefining the notation $F_{T, k}$ from how it was used in \expref{Theorem}{thm:full}; this new usage applies to this Appendix only.

We now define a sequence of experiments $\Hyb_j$, for $j=0, \ldots, \formerqC$:

\medskip\noindent \textbf{Experiment} $\Hyb_j$. Sample $R, F \from \Funcs_n$ and $k \from \bool^n$. Then:
\begin{enumerate}
\item Run $\A$, answering its classical queries using~$R$ and its quantum queries using~$F$, stopping immediately before its $(j+1)$st classical query. Let $T_j = \{(x_1, y_1), \dots, (x_j, y_j)\}$ be the set of all classical queries made by $\A$ thus far and their corresponding responses.
\item For the remainder of the execution of $\A$, answer its classical queries using $E_k[F]$ and its quantum queries using $F_{T_j, k}$. 
\end{enumerate}
We can represent $\Hyb_j$ as the experiment in which $\A$'s queries are answered using the oracle sequence
\[
\underbrace{F, R, F, \cdots, R, F}_{\mbox{\scriptsize $j$ classical queries}}, \underbrace{E_k[F], F_{T_j, k}, \cdots, E_k[F], F_{T_j, k}}_{\mbox{\scriptsize $\formerqC-j$ classical queries}}\,.
\]
Note that $\Hyb_0$ is exactly the real world (i.e., $\A^{E_k[F], F}$) and $\Hyb_{\formerqC}$ is exactly the ideal world (i.e., $\A^{R, F}$.)

For $j=0, \ldots, \formerqC-1$, we define an additional experiment~$\Hyb_j'$:

\medskip\noindent \textbf{Experiment} $\Hyb_j'$. 
Sample $R, F \from \Funcs_n$ and $k \from \bool^n$. Then:
\begin{enumerate}
\item Run $\A$, answering its classical queries using $R$ and its quantum queries using~$F$, stopping immediately after its $(j+1)$st classical query. Let $T_{j+1} = \big((x_1, y_1), \dots, (x_{j+1}, y_{j+1})\big)$ be the set of all classical queries made by $\A$ thus far and their corresponding responses.
\item For the remainder of the execution of $\A$, answer its classical queries using $E_k[F]$ and its quantum queries using~$F_{T_{j+1}, k}$.
\end{enumerate}
I.e., $\Hyb'_j$ corresponds to answering $\A$'s queries using the oracle sequence
\[
\underbrace{F, R, F, \cdots, R, F}_{\mbox{\scriptsize $j$ classical queries}}, R, F_{T_{j+1}, k}, \underbrace{E_k[F], F_{T_{j+1}, k} \cdots, E_k[F], F_{T_{j+1}, k}}_{\mbox{\scriptsize $\formerqC-j-1$ classical queries}}\,.
\]


We now show that $\Hyb_j'$ is close to $\Hyb_{j+1}$ and $\Hyb_j$ is close to~$\Hyb_j'$ for $0 \leq j < q_E$.

\begin{lemma}\label{lem:step-two}
	For $j=0, \ldots, q_E-1$, 
\[|\Pr[\A(\Hyb'_{j})=1] - \Pr[\A(\Hyb_{j+1})=1] | \leq 2\cdot q_{F,j+1}
\sqrt{(j+1)/2^n}.\]
\end{lemma}
\begin{proof}
%
Given an adversary $\A$, we construct a distinguisher $\D$  for the ``blinding game'' of \expref{Lemma}{lem:ada-blind}
that works as follows: 
\begin{description}
\item[Phase 1:] 
 $\D$ samples $F, R \leftarrow \Funcs_n$. It then
runs $\A$, answering its quantum queries with $F$ and its classical queries with~$R$, until it replies to $\A$'s $(j+1)$st classical query. Let $T_{j+1} = \{(x_1, y_1), \ldots, (x_{j+1}, y_{j+1})\}$ be the set of classical queries made by $\A$ and their responses. $\D$ defines 
algorithm $\algo B$ as follows: on randomness~$k \in \bool^n$, output $B=\{(x_j \oplus k, y_j)\}_{j=1}^{j+1}$. 
Finally, $\D$ outputs $F$ and $\mathcal{B}$.
\item[Phase 2:] $\D$ is given quantum access to a function~$F_b$. It continues to run~$\A$, answering its quantum queries with~$F_b$ until $\A$ makes its next 
classical query.
\item[Phase 3:] $\D$ is given the randomness~$k$ used to run~$\B$. It continues 
running $\A$, answering its classical queries with $E_k[F]$ and its quantum queries with~$F_{T_{j+1}, k}$. Finally, $\D$ outputs whatever $\A$ outputs.
\end{description}

When $b=0$ (so $F_b=F_0=F$), then $\A$'s output is identically distributed to its output in $\Hyb_{j+1}$.
On the other hand, when $b=1$ then $F_b=F_1=F^{(B)} = F_{T_{j+1},k}$ and so $\A$'s output is identically distributed to its output in~$\Hyb'_j$.
The expected number of queries made by $\D$ in phase~2 when $F=F_0$ is the expected number of queries made by $\A$ in stage $(j+1)$ in $\Hyb_{j+1}$. Since $\Hyb_{j+1}$ and $\Hyb_{q_E}$ are identical until after the $(j+1)$st stage, this is precisely~$q_{F,j+1}$. 
Because $k$ is uniform, we can apply \expref{Lemma}{lem:ada-blind} with $\epsilon=(j+1)/2^n$. The lemma follows.
\qed\end{proof}




\begin{lemma}\label{lem:step-one}
	For $j=0, \ldots, q_E$,
	\[|\Pr[\A(\Hyb_j) = 1] - \Pr[\A(\Hyb'_j)=1]| \leq 1.5 \cdot \sqrt{q_F/2^n}\,.\]
\end{lemma}
\begin{proof}
	From any adversary $\A$, we construct a distinguisher $\algo D$ for the game of \expref{Lemma}{lem:arl}. $\D$ works as follows:
	\begin{description}
		\item[Phase 1:] $\D$ is given quantum access to a (random) function~$F$. It samples $R \from \Funcs_n$ and
		then runs $\A$, answering its quantum queries using~$F$ and its classical queries using~$R$, until $\A$ submits its $(j+1)$st classical query~$x_{j+1}$. At that point, let $T_j=\{(x_1,y_1) \ldots, (x_j, y_j)\}$ be the set of input/output pairs $\A$ has received from its classical oracle thus far.
		\item[Phase 2:] $\D$ is given (uniform) $s \in \bool^n$ and quantum oracle access to a function~$F_b$. 
		$\D$ sets $k := s \oplus x_{j+1}$, and then
		continues running $\A$, answering its classical queries (including the $(j+1)$st) using $E_k[F_b]$ and its quantum queries using the function $(F_b)_{T_j, k}$, i.e., 
		\[
		x \mapsto 
		\begin{cases}
		y &\text{if } (x \oplus k, y) \in T_j \\
		F_b(x) &\text{otherwise.}
		\end{cases}
		\]
		Finally, $\D$ outputs whatever $\A$ outputs.
	\end{description}
	
	We analyze the execution of $\algo D$ in the two cases of the game of \expref{Lemma}{lem:arl}. 
	In either case, the quantum queries of $\A$ in stages $0, \ldots, j$
	are answered using a random function~$F$, and $\A$'s first $j$ classical queries are answered using an independent random function~$R$. Note further that since $s$ is uniform, so is~$k$. 
	
	
	
	\medskip\noindent \textbf{Case 1: $b=0$.} 
	In this case, all the remaining classical queries of~$\A$ (i.e., from the $(j+1)$st on) are answered using 
	$E_k[F]$, and the remaining quantum queries of $\A$ are answered using $F_{T_j, k}$. The output of $\A$ is thus distributed identically to its output in~$\Hyb_j$ in this case.
	
	\medskip\noindent \textbf{Case 2: $b=1$.} Here, $F_b=F_1=F_{s \rightarrow y}$ for a uniform~$y$.
	Now, the response to the $(j+1)$st classical query of $\A$ is 
	\[E_k[F_b](x_{j+1}) = E_k[F_{s \rightarrow y}](x_{j+1}) = F_{s \mapsto y}(k \oplus x_{j+1}) = F_{s \rightarrow y}(s) = y.\] 
	Since $y$ is uniform and independent of anything else, and since $\A$ has never previously queried $x_{j+1}$ to its classical oracle, this is equivalent to answering the first $j+1$ classical queries of~$\A$ using a random function~$R$.
	The remaining classical queries of $\A$ are also answered using $E_k[F_{s \mapsto y}]$. However, since $E_k[F_{s \rightarrow y}](x)=E_k[F](x)$ for all $x \neq x_{j+1}$ and $\A$ never repeats the query~$x_{j+1}$, this is equivalent to answering the remaining classical queries of $\A$ using~$E_k[F]$.
	
	The remaining quantum queries of $\A$ are answered with the function
	\[
	x \mapsto 
	\begin{cases}
	y' &\text{if } (x \oplus k, y') \in T_j\\
	F_{s \rightarrow y}(x)
	&\text{otherwise.}
	\end{cases}
	\]
	This, in turn, is precisely the function $F_{T_{j+1}, k}$, where $T_{j+1}$ is obtained by adding $(x_{j+1}, y)$ to $T_j$ (and thus
	consists of the first $j+1$ classical queries made by $\A$ and their corresponding responses). Thus, the output of $\A$ in this case is distributed identically to its output in~$\Hyb_j'$.
			
	The number of quantum queries made by $\D$ in phase~1 is at most~$q_F$. The claimed result thus follows from \expref{Lemma}{lem:arl}. 
	\qed\end{proof}

Using Lemmas~\ref{lem:step-two} and~\ref{lem:step-one}, and the fact that $\sum_{j=1}^{\formerqC}q_{F,j}=q_F$, we have
\begin{eqnarray*}
\left| \Pr[\A(\Hyb_0)=1] - \Pr[\A(\Hyb_{\formerqC})=1] \right| 
& \leq & 1.5 \formerqC \sqrt{q_F/2^n}+ 2\sum_{j=1}^{\formerqC}q_{F,j}\sqrt{j/2^n} \\
& \leq & 1.5 \formerqC \sqrt{q_F/2^n}+ 2\sqrt{q_E/2^n}\sum_{j=1}^{\formerqC}q_{F,j} \\
& \leq & 1.5 \formerqC \sqrt{q_F/2^n} + 2q_F \sqrt{\formerqC/2^n}\,,
\end{eqnarray*}
as required.
\qed\end{proof}

\section{Further Details for the Proof of Lemma~\ref{lem:step-one-perps}}

\begin{figure}[bht]
	\framebox[\textwidth][l]{
		\begin{algorithm}[H]
			\DontPrintSemicolon
			\setcounter{AlgoLine}{22}
			$P, R \from \Perms_n$\;
			Run $\A$ with quantum access to $P$ and classical access to $R$,   until $\A$ makes its $(j+1)$st classical query~$x_{j+1}$;  let $T_j$ 
			be as in the text\;
			$s_0, s_1 \leftarrow \bool^n$\;
			$k_1 := s_0 \xor x_{j+1}$,  $k_2 \leftarrow D_{\mid k_1}$, $k:=(k_1, k_2)$ \;
			$y_{j+1} := P(s_1) \xor k_2$  \label{alg:1:compute-y} \; 
			\lIf{$y_{j+1} \in \{y_1, \ldots, y_j\}$}{
				$y_{j+1} \leftarrow \bool^n\setminus\{y_1, \ldots, y_j\}$ }
			Give $y_{j+1}$ to $\A$ as the answer to its $(j+1)$st classical query\;
			$T_{j+1} := \big((x_1, y_1), \ldots, (x_{j+1}, y_{j+1})\big)$\;
			Continue running $\A$ with quantum access to $P_{T_{j+1},k}$ and classical access to~$E_k[P]$  \label{line:1:oracles} \;
		\end{algorithm}
	}
	\framebox[\textwidth][l]{
		\begin{algorithm}[H]
			\DontPrintSemicolon
			\setcounter{AlgoLine}{31}
			$P, R \from \Perms_n$\;
			Run $\A$ with quantum access to $P$ and classical access to $R$,   until $\A$ makes its $(j+1)$st classical query~$x_{j+1}$;  let $T_j$ 
			be as in the text\;
			$k_1 \leftarrow \bool^n$,  $k_2 \leftarrow D_{\mid k_1}$, $k:=(k_1, k_2)$,
			$y_{j+1} \leftarrow \bool^n$  \; 
			\lIf{$y_{j+1} \in \{y_1, \ldots, y_j\}$}{
				$y_{j+1} \leftarrow \bool^n\setminus\{y_1, \ldots, y_j\}$ }
			Give $y_{j+1}$ to $\A$ as the answer to its $(j+1)$st classical query\;
			$T_{j+1} := \big((x_1, y_1), \ldots, (x_{j+1}, y_{j+1})\big)$\;
			Continue running $\A$ with quantum access to $P_{T_{j+1},k}$ and classical access to~$E_k[P]$  \;
		\end{algorithm}
	}
	\framebox[\textwidth][l]{
		\begin{algorithm}[H]
			\DontPrintSemicolon
			\setcounter{AlgoLine}{38}
			$P, R \from \Perms_n$\;
			Run $\A$ with quantum access to $P$ and classical access to $R$,   until $\A$ makes its $(j+1)$st classical query~$x_{j+1}$;  let $T_j$ 
			be as in the text\;
			$k\leftarrow D$, 
			$y_{j+1} \leftarrow \bool^n\setminus\{y_1, \ldots, y_j\}$ \;
			Give $y_{j+1}$ to $\A$ as the answer to its $(j+1)$st classical query\;
			$T_{j+1} := \big((x_1, y_1), \ldots, (x_{j+1}, y_{j+1})\big)$\;
			Continue running $\A$ with quantum access to $P_{T_{j+1},k}$ and classical access to~$E_k[P]$  \;
		\end{algorithm}
	}
	\caption{Syntactic rewritings of ${\sf Expt}'_j$. \label{fig:1}} 
\end{figure}


\subsection{Equivalence of ${\sf Expt}'_j$ and $\Hyb'_j$}
\label{appendix:games}

The code in the top portion of Figure~\ref{fig:1} is a syntactic rewriting of~${\sf Expt}'_j$. (Flags that have no effect on the output of $\A$ are omitted.) In line~\ref{alg:1:compute-y}, the computation of $y_{j+1}$  has been expanded (note that $E_k[P_1](x_{j+1}) = P_1(s_0) \xor k_2 = P(s_1) \xor k_2$).
In line~\ref{line:1:oracles}, $Q$ has been replaced with $P_{T_{j+1},k}$ and $\O$ has been replaced with $E_k[P]$ as justified in the proof of Lemma~\ref{lem:step-one-perps}. 

\ignore{
	\begin{figure}[p]
	\framebox[\textwidth][l]{
		\begin{algorithm}[H]
			\DontPrintSemicolon
			\setcounter{AlgoLine}{22}
			$P, R \from \Perms_n$\;
			Run $\A$ with quantum access to $P$ and classical access to $R$,   until $\A$ makes its $(j+1)$st classical query~$x_{j+1}$ \;
			$s_0, s_1, k_2 \leftarrow \bool^n$, 
			$k_1 := s_0 \xor x_{j+1}$,  $k:=(k_1, k_2)$,
			$y_{j+1} := P(s_1) \xor k_2$ \label{alg:1:compute-y} \; 
			\If{$y_{j+1} \in \{y_1, \ldots, y_j\}$}{
				$y_{j+1} \leftarrow \bool^n\setminus\{y_1, \ldots, y_j\}$, $k_2:=y_{j+1} \xor P(s_1)$,  $k:=(k_1, k_2)$ }
			Give $y_{j+1}$ to $\A$ as the answer to its $(j+1)$st classical query\;
			$T_{j+1} := \big((x_1, y_1), \ldots, (x_{j+1}, y_{j+1})\big)$\;
			Run $\A$ with quantum access to $P_{T_{j+1},k}$ and classical access to~$E_k[P]$ \label{line:1:oracles} \;
		\end{algorithm}
	}
	\framebox[\textwidth][l]{
	\begin{algorithm}[H]
		\DontPrintSemicolon
		\setcounter{AlgoLine}{30}
		$P, R \from \Perms_n$\;
		Run $\A$ with quantum access to $P$ and classical access to $R$,   until $\A$ makes its $(j+1)$st classical query~$x_{j+1}$ \;
		$k_1, s_1, y_{j+1} \leftarrow \bool^n$,
		$k_2 := y_{j+1} \xor P(s_1)$,  $k:=(k_1, k_2)$ \; 
		\If{$y_{j+1} \in \{y_1, \ldots, y_j\}$}{
			$y_{j+1} \leftarrow \bool^n\setminus\{y_1, \ldots, y_j\}$, $k_2:=y_{j+1} \xor P(s_1)$, $k:=(k_1, k_2)$ }
		Give $y_{j+1}$ to $\A$ as the answer to its $(j+1)$st classical query\;
		$T_{j+1} := \big((x_1, y_1), \ldots, (x_{j+1}, y_{j+1})\big)$\;
		Run $\A$ with quantum access to $P_{T_{j+1},k}$ and classical access to~$E_k[P]$  \;
	\end{algorithm}
}
	\framebox[\textwidth][l]{
	\begin{algorithm}[H]
		\DontPrintSemicolon
		\setcounter{AlgoLine}{38}
		$P, R \from \Perms_n$\;
		Run $\A$ with quantum access to $P$ and classical access to $R$,   until $\A$ makes its $(j+1)$st classical query~$x_{j+1}$ \;
		$k_1, s_1, y_{j+1} \leftarrow \bool^n$ \; 
		\lIf{$y_{j+1} \in \{y_1, \ldots, y_j\}$}{
			$y_{j+1} \leftarrow \bool^n\setminus\{y_1, \ldots, y_j\}$ \label{fig:3:compute-y}}
		$k_2 := y_{j+1} \xor P(s_1)$,  $k:=(k_1, k_2)$\;
		Give $y_{j+1}$ to $\A$ as the answer to its $(j+1)$st classical query\;
		$T_{j+1} := \big((x_1, y_1), \ldots, (x_{j+1}, y_{j+1})\big)$\;
		Run $\A$ with quantum access to $P_{T_{j+1},k}$ and classical access to~$E_k[P]$  \;
	\end{algorithm}
}
	\framebox[\textwidth][l]{
	\begin{algorithm}[H]
		\DontPrintSemicolon
		\setcounter{AlgoLine}{46}
		$P, R \from \Perms_n$\;
		Run $\A$ with quantum access to $P$ and classical access to $R$,   until $\A$ makes its $(j+1)$st classical query~$x_{j+1}$ \;
		$k_1, k_2 \leftarrow \bool^n$,  $\;k:=(k_1, k_2)$,
		$\;y_{j+1} \leftarrow \bool^n\setminus\{y_1, \ldots, y_j\}$\;
		Give $y_{j+1}$ to $\A$ as the answer to its $(j+1)$st classical query\;
		$T_{j+1} := \big((x_1, y_1), \ldots, (x_{j+1}, y_{j+1})\big)$\;
		Run $\A$ with quantum access to $P_{T_{j+1},k}$ and classical access to~$E_k[P]$  \;
	\end{algorithm}
}
	\caption{Syntactic rewritings of $F_j$. }
\end{figure}
}

The code in the middle portion of Figure~\ref{fig:1} results from the following changes: first, rather than sampling uniform $s_0$ and then setting $k_1:=s_0 \xor x_{j+1}$, the code now samples a uniform~$k_1$. 
Similarly, rather than choosing uniform $s_1$ and then setting $y_{j+1}:=P(s_1) \xor k_2$, the code now samples a uniform~$y_{j+1}$ (note that $P$ is a permutation, so $P(s_1)$ is uniform).
Since neither $s_0$ nor $s_1$ is used anywhere else, each can now be omitted. 

The code in the bottom portion of Figure~\ref{fig:1} simply chooses $k=(k_1, k_2)$ according to distribution~$D$, and chooses uniform $y_{j+1} \in \bool^n \setminus \{y_1, \ldots, y_j\}$.
It can be verified by inspection that this final experiment is equivalent to~$\Hyb'_j$.

\ignore{
	\begin{figure}[hb]
	\framebox[\textwidth][l]{
	\begin{algorithm}[H]
		\DontPrintSemicolon
		\setcounter{AlgoLine}{46}
		$P, R \from \Perms_n$\;
		Run $\A$ with quantum access to $P$ and classical access to $R$,   until $\A$ makes its $(j+1)$st classical query~$x_{j+1}$ \;
		$k_1, k_2 \leftarrow \bool^n$,  $\;k:=(k_1, k_2)$,
		$\;y_{j+1} \leftarrow \bool^n\setminus\{y_1, \ldots, y_j\}$\;
		Give $y_{j+1}$ to $\A$ as the answer to its $(j+1)$st classical query\;
		$T_{j+1} := \big((x_1, y_1), \ldots, (x_{j+1}, y_{j+1})\big)$\;
		Run $\A$ with quantum access to $P_{T_{j+1},k}$ and classical access to~$E_k[P]$  \;
	\end{algorithm}
}	\caption{Final rewriting of $F_j$, which is equivalent to~$\Hyb'_j$.  \label{fig:2}}
\end{figure}
}


\ignore{
	\begin{figure}[h]
	\framebox[\textwidth][l]{
		\begin{algorithm}[H]
			\DontPrintSemicolon
			$P, R \from \Perms_n$\;
			Run $\A$ with quantum access to $P$ and classical access to $R$,   until $\A$ makes its $(j+1)$st classical query~$x_{j+1}$;  let $T_j$ 
			be as in the text\;
			$k_1 \leftarrow \bool^n$,  $k_2 \leftarrow D_{\mid k_1}$, $k:=(k_1, k_2)$,
			$y_{j+1} \leftarrow \bool^n$  \; 
			\lIf{$y_{j+1} \in \{y_1, \ldots, y_j\}$}{
				$y_{j+1} \leftarrow \bool^n\setminus\{y_1, \ldots, y_j\}$ }
			Give $y_{j+1}$ to $\A$ as the answer to its $(j+1)$st classical query\;
			$T_{j+1} := \big((x_1, y_1), \ldots, (x_{j+1}, y_{j+1})\big)$\;
			Continue running $\A$ with quantum access to $P_{T_{j+1},k}$ and classical access to~$E_k[P]$  \;
		\end{algorithm}
	}
	\caption{\jnote{SCRATCH}} 
\end{figure}
}

\ignore{
	\begin{figure}[h]
	\framebox[\textwidth][l]{
		\begin{algorithm}[H]
			\DontPrintSemicolon
			$P, R \from \Perms_n$\;
			Run $\A$ with quantum access to $P$ and classical access to $R$,   until $\A$ makes its $(j+1)$st classical query~$x_{j+1}$;  let $T_j$ 
			be as in the text\;
			$k_1 \leftarrow \bool^n$,  $k_2 \leftarrow D_{\mid k_1}$, $k:=(k_1, k_2)$  \; 
			$y_{j+1} \leftarrow \bool^n\setminus\{y_1, \ldots, y_j\}$ \;
			Give $y_{j+1}$ to $\A$ as the answer to its $(j+1)$st classical query\;
			$T_{j+1} := \big((x_1, y_1), \ldots, (x_{j+1}, y_{j+1})\big)$\;
			Continue running $\A$ with quantum access to $P_{T_{j+1},k}$ and classical access to~$E_k[P]$  \;
		\end{algorithm}
	}
	\caption{\jnote{SCRATCH}} 
\end{figure}
}

\subsection{Handling an Inverse Query}
\label{appendix:inverse-case}
In this section we discuss the case where the \mbox{$(j+1)$st} classical query of $\A$ is a inverse query in the proof of \expref{Lemma}{lem:step-one-perps}. Phase~1 is exactly as described in the proof of \expref{Lemma}{lem:step-one-perps}, though we now let $y_{j+1}$ denote the \mbox{$(j+1)$st} classical query made by~$\A$ (assumed to be in the inverse direction).
\begin{description}
\item[Phase 2:] 
$\D$ receives $s_0, s_1 \in \bool^n$ and quantum oracle access to a permutation~$P_b$. First $\D$ sets $t_0:=P_b(s_0)$ and $t_1:=P_b(s_1)$. It then 
sets $k_2:=t_0 \xor y_{j+1}$, 
chooses $k_1 \leftarrow D_{|k_2}$ (where this represents the conditional distribution on $k_1$ given~$k_2$), and
sets $k:=(k_1, k_2)$. $\D$ continues running $\A$, answering its remaining classical queries (including the $(j+1)$st one) using $E_k[P_b]$, and its remaining quantum queries using 
\[
(P_b)_{T_j, k} = \swap_{P_b(x_1 \oplus k_1), y_1 \oplus k_2} \circ \cdots \circ \swap_{P_b(x_j \oplus k_1), y_j \oplus k_2} \circ P_b\,.
\]
Finally, $\D$ outputs whatever $\A$ outputs.
\end{description}

Note that $t_0, t_1$ are uniform, and so $k$ is distributed according to~$D$.
Then:

\medskip\noindent \textbf{Case $b=0$ (no reprogramming).} 
In this case, $\A$'s remaining classical queries (including its $(j+1)$st classical query) are answered using $E_k[P_0] = E_k[P]$, and its remaining quantum queries are answered using $(P_0)_{T_j, k} = P_{T_j, k}$. The output of $\A$ is thus distributed identically to its output in~$\Hyb_j$ in this case.

\medskip\noindent \textbf{Case $b=1$ (reprogramming).} 
In this case, $k_2=P_1(s_0) \xor y_{j+1}=P(s_1)\xor y_{j+1}$ and so 
\begin{eqnarray*}
P_b^{-1} = P_1^{-1} = (P \circ \swap_{s_0,s_1})^{-1} & = & (\swap_{P(s_0), P(s_1)} \circ P)^{-1} \\
& = & P^{-1} \circ \swap_{P(s_0), P(s_1)} \\
& = & P^{-1} \circ \swap_{P(s_0), y_{j+1}\oplus k_2} .
\end{eqnarray*}
The response to $\A$'s $(j+1)$st classical query  is thus
\begin{eqnarray*}
x_{j+1} \stackrel{\rm def}{=} E_k^{-1}[P_1](y_{j+1}) = P_1^{-1}(y_{j+1} \oplus k_2) \oplus k_1 = P_1^{-1}(P(s_1)) \xor k_1 = s_0 \oplus k_1\,.
\end{eqnarray*}
The remaining classical queries of $\A$ are then answered using $E_k[P_1]$, while its remaining quantum queries are answered using~$(P_1)_{T_j, k}$. 

Now we define the following three events: 
\begin{enumerate}
		\item $\qcoll$ is the event that $x_{j+1} \in \{x_1, \ldots, x_j\}$.
\item $\scoll$  is the event that $P(s_0) \xor k_2 \in \{y_1, \ldots, y_j\}$. 
\item $\find$  is the event that, in phase~2, $\A$ queries its classical oracle in the forward direction on $s_1 \xor k_1$, 
or the inverse direction on $P(s_0) \xor k_2$. 
\end{enumerate}

Comparing the above to the proof of Lemma~\ref{lem:step-one-perps}, we see (because $P$ is a permutation) that the situation is entirely symmetric, and the analysis is therefore the same.


\ignore{

\section{Non-adaptive blinding}

\ga{This was subsumed by the adaptive arbitrary reprogramming lemma}

\begin{lemma}[Arbitrary reprogramming]\label{lem:blind}
	Consider the following experiment:
	\begin{description}
		\item[Phase 1:] $\D$ outputs descriptions of a function $F_0=F: \bool^m \rightarrow \bool^n$ and a randomized algorithm $\mathcal B$ whose output is a set $B \subset \bool^m \times \bool^n$ where each $w \in \bool^m$ is the first element of at most one tuple in~$B$. Let 
		\begin{equation}\label{eq:blinding-eps}
			\epsilon=\max_{x \in \bool^m}\left\{\Pr_{B \leftarrow \B} [x \in B_1]\right\}.
		\end{equation}
		
		\item[Phase 2:] $\mathcal{B}$ is run to obtain~$B$. Let $F_1=F_B$.
A uniform bit~$b$ is chosen, and
$\D$ is given quantum access to~$F_b$. 
		\item[Phase 3:] $\D$ loses access to $F_b$, and  receives the randomness~$r$  used to invoke~$\algo B$ in phase~2. Then $\D$ outputs a guess~$b'$. 
	\end{description}
For any distinguisher $\D$ making at most $q$ queries to $F_b$, it holds that
\[\left|\Pr[\mbox{$\D$ outputs 1} \mid b=1] - \Pr[\mbox{$\D$ outputs 1} \mid b=0]\right| \leq 2q \cdot \sqrt{\epsilon}.\]
\end{lemma}
\ignore{
\begin{proof}
We adapt the proof of \cite[Theorem~11]{AMRS20}. For a function $f$, let $\O_f$ be the unitary map $\ket{x}\ket{y} \rightarrow \ket{x} \ket{y \xor f(x)}$. $\D$ is specified by an initial state~$\ket{\phi}$, a sequence of $q$ unitary operators $C_1, \ldots, C_q$, and a POVM~$\{M_i\}_{i \in I}$; the distribution on the output~$I$ that results from running $\D$ with access to oracle $\O_f$ is that given by applying the POVM to $\ket{r}\otimes \ket{f^q \circ \phi}$, where $r$ denotes the randomness used to invoke $\B$ in phase~2 and for $k\in \{1, \ldots, q\}$ we set
\[\ket{f^k \circ \phi} = C_k\O_f \cdots C_1\O_f\ket{\phi}.\]

If the trace distance $\delta(\ket{x}, \ket{y})$ between two states $\ket{x}, \ket{y}$ is at most~$\epsilon$, then the statistical difference between the distributions that result from applying any POVM to those states is at most~$\epsilon$ as well.
To prove the lemma we will show that $\Exp_r\left[\delta\left(\rule{0pt}{9pt}\ket{r}\otimes \ket{F_0^q \circ \phi},\; \ket{r}\otimes \ket{F_1^q \circ \phi}\right)\right] \leq 2q \sqrt{\epsilon}$.
This suffices since then
\begin{eqnarray*}
\lefteqn{\left|\Pr[\mbox{$\D$ outputs 1} \mid b=1] - \Pr[\mbox{$\D$ outputs 1} \mid b=0]\right| }\hspace*{1.5in}\\
	& \leq & {\textstyle \Exp_r\left[\delta\left(\rule{0pt}{9pt}\ket{r}\otimes \ket{F_0^q \circ \phi}, \; \ket{r}\otimes \ket{F_1^q \circ \phi}\right)\right] } \\
	& \leq & 2q \sqrt{\epsilon}.
	\end{eqnarray*}

To prove the desired claim, set
\begin{eqnarray*}
	\ket{\phi_k} & \bydef & C_q\O_{F_1} \cdots C_{k+1}\O_{F_1} C_k \O_{F_0} \cdots C_1\O_{F_0}\ket{\phi} 
	\\ & = & C_q\O_{F_1} \cdots C_{k+1}\O_{F_1}\ket{F_0^k \circ \phi}
\end{eqnarray*}
so that $\ket{\phi_k}$ is the final state if the first $k$ queries are answered using $F_0$ and the remaining $q-k$ queries are answered using~$F_1$. (We assume without loss of generality that $\D$ always makes exactly $q$ queries.)
Note that $\ket{\phi_q} = \ket{F_0^q \circ \phi}$
and $\ket{\phi_0} =\ket{F_1^q \circ \phi}$.
For a set $B$ as in the lemma, define
$\tilde F^{(B)}(x) = F(x) \xor F_B(x)$ and note that $\tilde F^{(B)}(x)=0^n$ for $x \not \in B_1$.
Since the trace distance between two states is preserved when the same unitary transformation is applied to both, for any~$r$ we have
\begin{eqnarray*}
	\delta(\ket{r}\otimes \ket{\phi_k}, \; \ket{r}\otimes \ket{\phi_{k-1}} 
	& = & \delta(\ket{r} \otimes \O_{F_0} \ket{F_0^{k-1} \circ \phi}, \; \ket{r} \otimes \O_{F_1} \ket{F_0^{k-1} \circ \phi})  \\
	& = & \delta(\ket{r}  \otimes \ket{F_0^{k-1} \circ \phi},\;  \ket{r} \otimes \O_{\tilde F^{(B)}}  \ket{F_0^{k-1} \circ \phi}).
\end{eqnarray*}
Since $\delta(\ket{\psi}, \ket{\psi'}) \leq \|\ket{\psi}-\ket{\psi'}\|$ for pure states $\ket{\psi}, \ket{\psi'}$, 
it thus holds that
\begin{eqnarray}
\lefteqn{{\textstyle \Exp_r\left[\delta\left(\rule{0pt}{9pt}\ket{r}\otimes \ket{F_0^q \circ \phi},\; \ket{r}\otimes \ket{F_1^q \circ \phi}\right)\right] } }\nonumber \\
&\leq &
\sum_{k=0}^{q-1} {\textstyle \Exp_r} \left[ \delta (\ket{r}\otimes \ket{\phi_{k+1}}, \; \ket{r}\otimes \ket{\phi_{k}}\right] \nonumber \\
& = & \sum_{k=0}^{q-1} {\textstyle \Exp_r} \left[ \delta(\ket{r}  \otimes \ket{F_0^{k} \circ \phi},\;  \ket{r} \otimes \O_{\tilde F^{(B)}}  \ket{F_0^{k} \circ \phi}) \right] \nonumber \\
& \leq & q \cdot \max_{\ket{\psi}} \, {\textstyle \Exp_B \left[\| \ket{\psi} - \O_{\tilde F^{(B)}}\ket{\psi}\|\right]}, \label{eqn:blindness:final_sum}
\end{eqnarray}
where we write $\Exp_B$ for the expectation over the set $B$ output by $\mathcal{B}$ in place of~$\Exp_r$.
Because $\O_{\tilde F^{(B)}}$ acts as the identity on $(\mathbb{I}-\Pi_{B_1})\ket{\psi}$ for any $\ket{\psi}$, we have 
\begin{eqnarray}
\lefteqn{ {\textstyle \Exp_B \left[\| \ket{\psi} - \O_{\tilde F^{(B)}}\ket{\psi}\|\right]} } \nonumber \\
& = & 
{\textstyle \Exp_B \left[\| \Pi_{B_1} \ket{\psi} - \O_{\tilde F^{(B)}}\Pi_{B_1} \ket{\psi} + (\mathbb{I}-\O_{\tilde F^{(B)}})(\mathbb{I}-\Pi_{B_1} )\ket{\psi}\|\right]} \nonumber \\
& \leq & {\textstyle \Exp_B \left[\| \Pi_{B_1} \ket{\psi}\|\right]} +  {\textstyle \Exp_B \left[\|\O_{\tilde F^{(B)}}\Pi_{B_1} \ket{\psi}\|\right]} \nonumber \\
& = & 2\cdot {\textstyle \Exp_B \left[\| \Pi_{B_1} \ket{\psi}\|\right]} \nonumber \\
& \leq & 2 \sqrt{{\textstyle \Exp_B\left[ \rule{0pt}{8pt}\|\Pi_{B_1} \ket{\psi}\|^2\right]}}, \label{eqn:blinding:bound-norm}
\end{eqnarray}
using Jensen's inequality in the last step. Let $\ket{\psi}=\sum_{x \in \bool^m, y \in \bool^n} \alpha_{x,y} \ket{x}\ket{y}$ where  $\|\ket{\psi}\|^2=\sum_{x,y} \alpha_{x,y}^2 = 1$.
Then
\begin{eqnarray}
	{\textstyle \Exp_B\left[\|\Pi_{B_1} \ket{\psi}\|^2\right]}
	& = & {\textstyle \Exp_B\left[ \sum_{x,y: x \in B_1} \alpha_{x,y}^2 \right]} \nonumber\\
	& = & \sum_{x,y} \alpha_{x,y}^2 \cdot \Pr[x \in B_1] \;\; \leq \;\; \epsilon. \label{eq:blinding:low-blinding-prob}
	\end{eqnarray}
Together with Equations~(\ref{eqn:blindness:final_sum}) and (\ref{eqn:blinding:bound-norm}), this gives the desired result. 
\qed\end{proof}
}

\section{Two-way blinding}

\ga{I think this reduction is actually integrated into the proof of two-way Even-Mansour, so strictly speaking we do not need it anymore. I put it here just in case.}

We have the following corollary for two-sided blinding of a two-way-accessible permutation.
\begin{lemma}\label{lem:two-way-blind}
	Consider the following experiment: \jnote{Need to align this experiment/proof with the previous one.}
	\begin{description}
		\item[Phase 1:] $\D$ outputs descriptions of a permutation $P_0:\bool^n\to\bool^n$ and functions $P^+,P^-: \bool^n \rightarrow \bool^n$, as well as a randomized algorithm $\mathcal B$ whose output is a pair of subsets $(B^+, B^-)$ of $\bool^n$ and such that for all $x\in \bool^n$ and $i\in\{+,-\}$ it holds that $\Pr[(B^+, B^-)\leftarrow \B: x\in B^i]\le \epsilon$.
		\item[Phase 2:] A uniform random tape $r$ is chosen, and $\mathcal{B}(r)$ is run to obtain~$(B^+, B^-)$. For $i \in \{+,-\}$, define
		\[P^i_{B^i}(x) = \begin{cases}
			0^n & x\not \in B^i\\
			P^i(x) & x \in B^i,
		\end{cases}
		\]
		and let $P^i_1(x) = P^i_0(x) \xor P^i_{B^i}(x)$. Define $P_0^+=P_0$ and $P_0^-=P_0^{-1}$.
		A uniform bit~$b$ is chosen, and
		$\D$ is given quantum access to~$P^+_b$ and~$P^-_b$.
		\item[Phase 3:] $\D$ loses access to its oracles, and receives the randomness $r$ used to invoke~$\algo B$. It then outputs a guess~$b'$. 
	\end{description}
For any distinguisher $\D$ making at most $q$ queries overall to its oracles, it holds that
$\left|\Pr[\mbox{$\D$ outputs 0} \mid b=0] - \Pr[\mbox{$\D$ outputs 0} \mid b=1]\right| \leq 2q \cdot \sqrt{\epsilon}$.
\end{lemma}
\begin{proof}
Given a distinguisher $\mathcal D$ as in the lemma, we construct a distinguisher $\widetilde{\mathcal D}$ for the experiment of Lemma~\ref{lem:blind} as follows.
$\widetilde{\D}$ runs $\D$ to obtain $P_0, P^+, P^-$, and~$B$. 
Then $\widetilde{\D}$ defines functions $\tilde P_0, \tilde P' : \bool^{n+1} \rightarrow \bool^n$ as
\[
\tilde P_0(b,x) = \begin{cases}
P_0(x) & b=0 \\
P_0^{-1}(x) & b=1
	\end{cases} 
\;\;\;\;\;
\mbox{and}
\;\;\;\;\;
\tilde P'(b,x) = \begin{cases}
P^+(x) & b=0 \\
P^-(x) & b=1.
\end{cases} 
\]
It also defines the algorithm 
$\widetilde{{\mathcal B}}$ that runs $\mathcal B$ to obtain $(B^+, B^-)$ and then outputs 
$B=\{ 0x : x \in B^+\} \cup \{1x : x \in B^-\}$. (Note that $\widetilde{{\mathcal B}}$ satisfies the requirement of Lemma~\ref{lem:blind} if $\mathcal B$ satisfies the requirement of Lemma~\ref{lem:two-way-blind}.)
Finally, $\widetilde{\D}$ outputs $\tilde P_0, \tilde P'$, and~$\widetilde{{\mathcal B}}$. 

In phase~2, $\widetilde{\D}$ is given access to a function $\tilde P_b:\bool^{n+1}\rightarrow \bool^n$ and defines $P_b^+(x)=\tilde P_b(0,x)$ and $P_b^-(x) = \tilde P_b(1,x)$. It then runs~$\D$, giving it access to $P_b^+, P_b^-$ using its own oracle access to~$\tilde P_b$. 
(Note that, in doing so, $\widetilde{\D}$ makes at most $q$ queries to~$\tilde P_b$.)
When $\widetilde{\D}$ is given~$r$ in phase~3, it simply passes that randomness to~$\D$. Finally, $\widetilde{\D}$ outputs whatever $\D$ does.

$\widetilde{\D}$'s simulation of $\D$ is perfect, so
we obtain the desired bound by invoking Lemma~\ref{lem:blind}.
%
%
\qed\end{proof}


\section{Some old proofs}

\begin{proof}
	Recall that an execution of $\D$ takes the form of Equation~(\ref{eq:adaptive-distinguisher}) up to a final measurement. 
	For some fixed value of the randomness $r$ used to run~$\B$, set $\Upsilon_b=\Phi \circ c\mathcal O_{F_b}\circ\mathcal M_C$, and define
	\[\rho_k \stackrel{{\rm def}}{=} \left(\Upsilon_1^{q_{\max}-k} \circ \Upsilon_0^k\right) (\rho),\]
so that $\rho_k$ is the final state if the first $k$ queries are answered using a (controlled) $F_0$ oracle and then the remaining $q_{\max}-k$ queries are answered using a (controlled) $F_1$ oracle.	
Furthermore, we define $\rho_i^{(0)} = \Upsilon_0^i(\rho)$.
Note also that $\rho_{q_{\max}}$ (resp., $\rho_0$) is the final state of the algorithm when the $F_0$ oracle (resp., $F_1$ oracle) is used the entire time.
We bound ${\mathbb E}_r\left[\delta\left(\proj r\otimes \rho_{q_{\max}}, \, \proj r \otimes \rho_0\right)\right]$.
	
Define $\tilde F^{(B)}(x)=F(x) \xor F_B(x)$, and note that $F_B(x)=0^n$ for $x \not \in B_1$. 
Since the trace distance is non-increasing under quantum channels, for any $r$ we have
\begin{eqnarray*}
\delta\left(\proj r\otimes \rho_{k}, \, \proj r\otimes \rho_{k-1}\right) &\le &\delta\left({c\mathcal O}_{F_0}\circ\mathcal M_C\left(\rho_{k-1}^{(0)}\right),  \; {c\mathcal O}_{F_1}\circ\mathcal M\left(\rho_{k-1}^{(0)}\right)\right)\\
&=&\delta\left(\mathcal M_C\left(\rho_{k-1}^{(0)}\right),  \; {c\mathcal O}_{\tilde F^{(B)}}\circ\mathcal M_C\left(\rho_{k-1}^{(0)}\right)\right).
\end{eqnarray*}
By definition of a controlled oracle,
\begin{eqnarray*}
{c\mathcal O}_{\tilde F^{(B)}}\circ\mathcal M_C\left(\rho_{k-1}^{(0)}\right) &=& {c\mathcal O}_{\tilde F^{(B)}}\left(\proj 1_C\,\rho_{k-1}^{(0)}\,\proj 1_C\right)+\proj 0_C\,\rho_{k-1}^{(0)}\,\proj 0_C
\\
&=& {\mathcal O}_{\tilde F^{(B)}}\left(\proj 1_C\,\rho_{k-1}^{(0)}\,\proj 1_C\right)+\proj 0_C\,\rho_{k-1}^{(0)}\,\proj 0_C,
\end{eqnarray*}
and thus
\begin{eqnarray*}
\lefteqn{\delta\left(\mathcal M_C\left(\rho_{k-1}^{(0)}\right),  \; {c\mathcal O}_{\tilde F^{(B)}}\circ\mathcal M_C\left(\rho_{k-1}^{(0)}\right)\right)} \\
&=&\delta\left(\proj 1_C\,\rho_{k-1}^{(0)}\,\proj 1_C, \; {\mathcal O}_{\tilde F^{(B)}}\left(\proj 1_C\,\rho_{k-1}^{(0)}\,\proj 1_C\right)\right)\\
&=&p_{k-1} \cdot \delta\left(\sigma_{k-1},  \; {\mathcal O}_{\tilde F^{(B)}}\left(\sigma_{k-1}\right)\right)
\end{eqnarray*}
where, recall, $p_{k-1}=\Tr\left[\proj 1_C\,\rho_{k-1}^{(0)}\right]$ is the probability that a query is made in the $k$th iteration, and we define the normalized state 
$
\sigma_{k-1}=\frac{\proj 1_C\, \rho_{k-1}^{(0)}\,\proj 1_C}{p_{k-1}}$.
Therefore,
\begin{eqnarray}
\lefteqn{\mathbb E_r\left[\delta\left(\proj r\otimes \rho_{q_{\max}}, \; \proj r\otimes \rho_{0}\right)\right]} \nonumber \\
& \leq & \sum_{k=1}^{q_{\max}} {\mathbb E}_B\left[\delta(\left(\proj r\otimes \rho_{k}, \; \proj r\otimes \rho_{k-1}\right)\right] \nonumber \\
&\le&\sum_{k=1}^{q_{\max}}p_{k-1}\cdot \mathbb E_B\left[\delta\left(\sigma_{k-1},  \; {\mathcal O}_{\tilde F^{(B)}}\left(\sigma_{k-1}\right)\right)\right] \nonumber \\
& \leq & q \cdot \max_{\sigma} \, \mathbb E_B\left[\delta\left(\sigma,  \; {\mathcal O}_{\tilde F^{(B)}}\left(\sigma\right)\right)\right], \label{eqn:adaptive1}
\end{eqnarray}
where we write $\Exp_B$ for the expectation over the set $B$ output by $\mathcal{B}$ in place of~$\Exp_r$.
Since $\sigma$ can be purified to some state $\ket{\psi}$, and $\delta(\ket{\psi}, \ket{\psi'}) \leq \|\ket{\psi}-\ket{\psi'}\|$ for pure states $\ket{\psi}, \ket{\psi'}$, 
we have 
\begin{eqnarray*}
\max_{\sigma} \, \mathbb E_B\left[\delta\left(\sigma,  \; {\mathcal O}_{\tilde F^{(B)}}\left(\sigma\right)\right)\right] 
&\leq & \max_{\ket{\psi}} \, {\mathbb E}_B\left[\delta\left(\ket{\psi},  \; {\mathcal O}_{\tilde F^{(B)}}\ket{\psi}\right)\right] \\
&\leq & \max_{\ket{\psi}} \, {\textstyle \Exp_B \left[\| \ket{\psi} - \O_{\tilde F^{(B)}}\ket{\psi}\|\right]}.
\end{eqnarray*} 
Because $\O_{\tilde F^{(B)}}$ acts as the identity on $(\mathbb{I}-\Pi_{B_1})\ket{\psi}$ for any $\ket{\psi}$, we have 
\begin{eqnarray}
\lefteqn{ {\textstyle \Exp_B \left[\| \ket{\psi} - \O_{\tilde F^{(B)}}\ket{\psi}\|\right]} } \nonumber \\
& = & 
{\textstyle \Exp_B \left[\| \Pi_{B_1} \ket{\psi} - \O_{\tilde F^{(B)}}\Pi_{B_1} \ket{\psi} + (\mathbb{I}-\O_{\tilde F^{(B)}})(\mathbb{I}-\Pi_{B_1} )\ket{\psi}\|\right]} \nonumber \\
& \leq & {\textstyle \Exp_B \left[\| \Pi_{B_1} \ket{\psi}\|\right]} +  {\textstyle \Exp_B \left[\|\O_{\tilde F^{(B)}}\Pi_{B_1} \ket{\psi}\|\right]} \nonumber \\
& = & 2\cdot {\textstyle \Exp_B \left[\| \Pi_{B_1} \ket{\psi}\|\right]} \nonumber \\
& \leq & 2 \sqrt{{\textstyle \Exp_B\left[ \rule{0pt}{8pt}\|\Pi_{B_1} \ket{\psi}\|^2\right]}}, \label{eqn:blinding:bound-norm}
\end{eqnarray}
using Jensen's inequality in the last step. Let $\ket{\psi}=\sum_{x \in \bool^m, y \in \bool^n} \alpha_{x,y} \ket{x}\ket{y}$ where  $\|\ket{\psi}\|^2=\sum_{x,y} \alpha_{x,y}^2 = 1$.
Then
\begin{eqnarray}
{\textstyle \Exp_B\left[\|\Pi_{B_1} \ket{\psi}\|^2\right]}
& = & {\textstyle \Exp_B\left[ \sum_{x,y: x \in B_1} \alpha_{x,y}^2 \right]} \nonumber\\
& = & \sum_{x,y} \alpha_{x,y}^2 \cdot \Pr[x \in B_1] \;\; \leq \;\; \epsilon. \label{eq:blinding:low-blinding-prob}
\end{eqnarray}
Together with Equations~(\ref{eqn:adaptive1}) and (\ref{eqn:blinding:bound-norm}), this gives the desired result.
\qed\end{proof}
}

\ignore{
		\begin{equation}
		\rho_k^{0}= \Phi\circ \left(c\mathcal O_{P_0}\circ \mathcal M\circ \Phi\right)^k \proj 0
	\end{equation}
and 
\begin{equation}
	\rho_k=\left(c\mathcal O_{P_1}\circ \mathcal M\circ \Phi\right)^{q_{\max}-k-1}\circ c\mathcal O_{P_1}\circ \mathcal M(\rho_k^{0}),
\end{equation}
where we abuse notation to denote by $\ket 0$ the all-zero state on any number of qubits and we omit register subscripts for clarity as convenient.
The trace distance is non-increasing under quantum channels, thus for any fixed~$r$
\begin{align}
	\delta\left(\proj r\otimes \rho_{k-1}, \proj r\otimes \rho_k\right)\le &\delta\left({c\mathcal O}_{P_1}\circ\mathcal M\left(\rho_{k-1}^{(0)}\right),  {c\mathcal O}_{P_0}\circ\mathcal M\left(\rho_{k-1}^{(0)}\right)\right)\\
	=&\delta\left(\mathcal M\left(\rho_{k-1}^{(0)}\right),  {c\mathcal O}_{P_B'}\circ\mathcal M\left(\rho_{k-1}^{(0)}\right)\right),
\end{align}
where $P_B'(x) = P_1(x) \xor P_0(x)$; note that $P'_B(x)=0^n$ for $x \not \in B_1$. Now note that ${c\mathcal O}_{P_B'}(\proj 1_C(\cdot)\proj 1_C)=\proj 1_C(\cdot)\proj 1_C$, i.e., we have 
\begin{align}
	{c\mathcal O}_{P_B'}\circ\mathcal M\left(\rho_{k-1}^{(0)}\right)=&{c\mathcal O}_{P_B'}\left(\proj 0_C\rho_{k-1}^{(0)}\proj 0_C\right)+\proj 1_C\rho_{k-1}^{(0)}\proj 1_C,
\end{align}
and thus
\begin{align}
	\delta\left(\mathcal M\left(\rho_{k-1}^{(0)}\right),  {c\mathcal O}_{P_B'}\circ\mathcal M\left(\rho_{k-1}^{(0)}\right)\right)=&\delta\left(\proj 0_C\rho_{k-1}^{(0)}\proj 0_C,  {c\mathcal O}_{P_B'}\left(\proj 0_C\rho_{k-1}^{(0)}\proj 0_C\right)\right)\\
	=&p_k\delta\left(\sigma_{k-1},  {c\mathcal O}_{P_B'}\left(\sigma_{k-1}\right)\right),
\end{align}
where we have defined the probability that $\D$ actually makes potential query number $k$,
\begin{equation}
	p_k=\Tr[\proj 0_C\rho_k^{(0)}],
\end{equation}
and the normalized state 
\begin{equation}
	\sigma_{k-1}=\frac{\proj 0_C\rho_{k-1}^{(0)}\proj 0_C}{p_k}.
\end{equation}
We can now bound
\begin{align}
	&\mathbb E_r\left[\delta\left(\proj r\otimes \rho_{q_{\max}}, \proj r\otimes \rho_{0}\right)\right]\\
	\le&\sum_{k=1}^{q_{\max}}p_k\mathbb E_r\left[\delta\left(\sigma_{k-1},  {c\mathcal O}_{P_B'}\left(\sigma_{k-1}\right)\right)\right]\\
	\le& 2\sqrt \epsilon \sum_{k=1}^{q_{\max}}p_k=2q\sqrt \epsilon.
\end{align}
Here, we have used the bound from the proof of \expref{Lemma}{lem:blind} (Equations \eqref{eqn:blinding:bound-norm} and \eqref{eq:blinding:low-blinding-prob}) together with the fact that $\sigma_k$ can be purified.
\qed\end{proof}

\section{Questions, open problems}

\noindent \textbf{Bigger topics.}
\begin{itemize}
	\item What if the public permutation is instantiated with a Feistel ladder? Can we show (like Gentry-Ramzan did classically~\cite{GR04}) that EM is still secure?
	\item Is EM pq-indifferentiable for some (even polynomial) number of rounds? 
	\item what about improved security bounds for indistinguishability when the number of Even-Mansour rounds increases?
\end{itemize}

\section{Application: post-quantum security of Elephant}

\subsection{Tweakable block ciphers and the Elephant scheme}

A tweakable block cipher is a family of permutations indexed both by a key and a tweak:
$$
\tilde E : \bool^k \times \tweak \times \bool^n \rightarrow \bool^n\,.
$$
The space $\bool^k$ is the key space, and $\tweak$ is the tweak space. A tweakable block cipher is considered secure if an adversary cannot distinguish $\tilde E_K$ (for $K \leftarrow \bool^k$) from the ideal tweakable block cipher
$$
\pi^\textsf{ideal} : \tweak \times \bool^n \rightarrow \bool^n
$$ 
where, for each $t$, $\pi^\textsf{ideal}(t, \cdot)$ is an independently sampled, uniformly random permutation on $\bool^n$. Note that the adversary is allowed to select the tweak for each query they make.

Elephant is an authenticated encryption (with associated data) scheme under consideration for standardization in the NIST lightweight cryptography competition~\cite{BLDM21}. The encryption algorithm for Elephant is described in \expref{Algorithm}{alg:elephant-enc}, in terms of an ideal tweakable block cipher $\pi : \tweak \times \bool^n \rightarrow \bool^n$. The decryption algorithm is straightforward to define, as it consists of recomputing the values of the tweakable block cipher and decryption and authenticating in the obvious way. We remark that the scheme only ever uses $\pi$ in the forward direction.

\begin{algorithm}
\caption{Idealized Elephant encryption $\Enc_\pi$}\label{alg:elephant-enc}
\begin{algorithmic}[1]
\Require $(N, A, M) \in \bool^m \times \bool^* \times \bool^*$
\Ensure $(C, T) \in \bool^{|M|} \times \bool^t$
\State $M_1, \dots, M_{\ell_M} \leftarrow M$ \Comment{separate into blocks}
\For{$i = 1, \dots, \ell_M$}  \Comment{begin encryption}
\State{$C_i \leftarrow M_i \oplus \pi_{(i-1, 1)}(N || 0^{n-m})$} 
\EndFor
\State $C \leftarrow \lfloor C_1 \dots C_{\ell_M}\rfloor_{|M|}$ 
\State $A_1, \dots, A_{\ell_A} \leftarrow N||A||1$ \Comment{separate into blocks}
\State $C_1, \dots, C_{\ell_C} \leftarrow C||1$ 
\State $T \leftarrow A_1$ \Comment{begin authentication}
\For{$i = 2, \dots, \ell_A$}  
\State{$T \leftarrow T \oplus \pi_{(i-1, 0)}(A_i)$} 
\EndFor 
\For{$i = 1, \dots, \ell_C$}  
\State{$T \leftarrow T \oplus \pi_{(i-1, 2)}(C_i)$}
\EndFor 
\State{$T \leftarrow \pi_{(0, 0}(T)$}
\Return{$(C, \lfloor T \rfloor_t)$}
\end{algorithmic}
\end{algorithm}

In the actual Elephant scheme, $\pi$ is replaced with a so-called ``Simplified Masked Even-Mansour'' (\smem) cipher, defined by
\begin{equation}\label{eq:tweak-EM}
\tilde E_K(x) = P(x \oplus \mask_K^{a, b}) \oplus \mask_K^{a, b}
\end{equation}
where $P$ is uniformly random and public, and $\mask$ is the following ``masking'' function.
$$
\mask_K^{a,b} = f(a, b, P(K||0))
$$
Here, $f : \N^2 \times \bool^n \rightarrow \bool^*$ is a hash function satisfying certain properties; for now, one can imagine that $f$ is a universal hash. \ga{Add detail on Def B.1 or whatever we end up choosing here.}

\subsection{Security of Elephant from security of a tweakable cipher}

The classical security proof for Elephant is described in Appendix D of the specification document~\cite{BLDM21}. We now briefly describe how the high-level outline of the proof can be adapted to the post-quantum case, assuming that \smem is a secure post-quantum tweakable block cipher. Here, ``post-quantum'' refers to the setting where the adversary has classical oracle access to $\tilde E_K$ (and can select both the message and the tweak inputs) as well as quantum oracle access to $P$.

In the proof, one first replaces the \smem cipher with the idealized cipher $\pi$, just as in our description in \expref{Algorithm}{alg:elephant-enc}. The distinguishing advantage of an adversary in the authenticated encryption experiment is then bounded by the sum of two quantities:
\begin{enumerate}
\item $\Delta(\textsf{Elephant}[\pi], P\,;\, \textsf{Elephant}[\tilde E], P)$ : the distinguishing advantage between Elephant using $\pi$ and Elephant using \smem, and 
\item $\Delta(\textsf{Elephant}[\pi], P\,;\, \textsf{AE}_\textsf{ideal}, P)$ : the distinguishing advantage between Elephant with $\pi$ and an ideal authenticated encryption scheme.
\end{enumerate}
In the second quantity, the adversary technically still has access to $P$. However, the distinguishing task is now independent of $P$, and so the relevant difference between classical and post-quantum security (i.e., the type of access to $P$) is irrelevant. One can thus simply drop $P$ and bound the second quantity above using the same argument given by~\cite{BLDM21}. In the first quantity, it's straightforward to see that an adversary $\A$ can be used to construct a distinguisher $\algo D$ between $\pi$ and \smem. Specifically, $\algo D$ uses its oracles to construct the encryption and decryption functions in the obvious way, and then invokes $\A$ with the constructed oracles, passing along access to $P$. This reduction works regardless of the access type of the oracle for $P$.

We conclude that establishing post-quantum security of Elephant boils down to establishing the post-quantum security of the \smem tweakable block cipher.

\subsection{Post-quantum security of \smem}

\ga{I guess this is the hard part. :)}

\section{Applications}

\ga{Moved this to the appendix since we're just concentrating on Elephant for now.}

In this section we give a introduction to three NIST lightweight works: \textsf{Chaskey}, \textsf{PRINCE} and \textsf{Elephant}. 

\subsection{Chaskey}

\textsf{Chaskey} is a dedicated design for 32-bit microcontroller architectures. The structure of \textsf{Chaskey} is a combination of Even-Mansour and CBC-MAC, where the MAC algorithm is permutation-based. 

Figure 1 is an example of \textsf{Chaskey}: The message $m$ has size $nl$ \jnote{I guess there must be some padding to handle messages that are not exactly a multiple of $n$ bits long?} and it is split into $l$ blocks $m_1,m_2,...,m_1$ of $n$ bits each. \textsf{Chaskey} uses an $n$-bit key $K$ to process the message $m$ and it outputs a $t$-bit tag $\tau$, where $t\leq n$. The user could control the size $t$ by the truncation function $right_t$. For every $K$, two subkeys $K_1$ and $K_2$ are generated in a certain way. An $n$-bit Addition-Rotation-XOR(ARX) permutation $\pi$ is used in each block.

\graphicspath{ {./Chaskey/} }

\begin{figure}[h]
    \centering
    \includegraphics[width=13cm, height=3cm]{Chaskey.png}
    \caption{The \textsf{Chaskey} mode of operation.}
    \label{fig:mesh1}
\end{figure}

\textit{Classical security.}  The security of \textsf{Chaskey} is derived from the security of Even-Mansour block cipher based on $\pi$. If we denote the online chosen plaintext queries by $D$, denote the offline block cipher evaluations by $T$, then \textsf{Chaskey} is secure up to a time-data tradeoff $TD=2^n$, which balances at $T=D=2^{n/2}$.
\\

\subsection{PRINCE}

\textsf{PRINCE} is a low-latency 64-bit block cipher with a 128-bit key $k$, where $k$ is split into two 64-bit sub-keys $k_0$ and $k_1$. Another 64-bit subkey $k_0'$ is generated from $k_0$.

$$
k= k_0||k_1 \text{ , } k_0'=(k_0 \ggg 1)\oplus (k_0 \gg 63)
$$

\textsf{PRINCE} is based on FX construction: $k_0$ and $k_0'$ are used as whitening keys while $k_1$ is the 64-bit key for a 12 round block cipher called $\text{PRINCE}_{core}$.
\graphicspath{ {./PRINCE/} }

\begin{figure}[h]
    \centering
    \includegraphics[width=9cm, height=2cm]{PRINCE.png}
    \caption{The \textsf{PRINCE} cipher.}
    \label{fig:mesh1}
\end{figure}

Figure 2 shows the FX construction of \textsf{PRINCE}: 
$$c=E_{k_0,k_0',k_1}(m)=F_{k_1}(m\oplus k_0)\oplus k_0' \text{ , }F_{k_1}=\text{PRINCE}_{core}$$

\textit{Classical security.} \textsf{PRINCE} claim a data-time tradeoff of $TD\geq 2^{126}$.

\subsection{Elephant}
\textsf{Elephant} is an authenticated encryption with associated data (AEAD) scheme. The mode of \textsf{Elephant} is a nonce-based encryption-then-MAC construction, where encryption is performed by counter mode and message authentication by a variant of protected counter sum. Both encryption and authentication part are based on a simplification of the masked Even-Mansour (MEM) tweakable block cipher. Each block cipher of \textsf{Elephant} is permutation-based and only evaluates the permutation in the forward direction. More specifically, both encryption and decryption only query the forward direction. Therefore, there is no need to implement the inverse of the permutation, which makes it more efficient. 

Figure 3 shows the encryption part of \textsf{Elephant}. The message is split in to $l_M$ n-bit parts: $M_1,M_2,...,M_{l_M}$. The $\textsf{mask}$ used in each block is a keyed tweak-based function. Let $P:\{0,1\}^n \rightarrow \{0,1\}^n$ be a permutation, $K\in \{0,1\}^k \text{ , } k\leq n$ be a key, $N\in \{0,1\}^m \text{ , } m\leq n$ be a nonce and $\phi_1,\phi_2:\{0,1\}^n \rightarrow \{0,1\}^n$ be two LFSRS with $\phi_2=\phi_1\oplus \textsf{id}$, where $\textsf{id}$ is the identity function.  Then the function $\textsf{mask}: \{0,1\}^k \times \mathbb{N}^2 \rightarrow \{0,1\}^n $   is defined as follows:

$$
\textsf{mask}_K^{a,b}=\textsf{mask}(K,a,b)= \phi_2^b \circ \phi_1^a \circ P(K\Vert 0)
$$

\graphicspath{ {./Elephant/} }

\begin{figure}[h]
    \centering
    \includegraphics[width=10cm, height=5cm]{Elephant.png}
    \caption{The encryption part of \textsf{Elephant} cipher.}
    \label{fig:mesh1}
\end{figure}

\textit{Classical security.} \textsf{Elephant} uses 128-bit key $K$ and 96-bit nonce $N$. It has 3 well-known types, with different permutation $P$:
\\

\textsf{Dumbo} uses the 160-bit permutation SPONGENT-$\pi$[160].The expected classical security is $2^{112}$ with data complexity up to $2^{53}$.

\textsf{Jumbo} uses the 176-bit permutation SPONGENT-$\pi$[176].The expected classical security is $2^{127}$ with data complexity up to $2^{53}$.

\textsf{Delirium} uses the 200-bit permutation KECCAK-$f$[200].The expected classical security is $2^{127}$ with data complexity up to $2^{77}$.

}

\end{document}


The Even-Mansour cipher is a simple method for constructing a (keyed) pseudorandom permutation $E$ from a public random permutation $P:\{0,1\}^n \rightarrow \{0,1\}^n$. It is a core ingredient in a wide array of symmetric-key constructions, including several lightweight cryptosystems presently under consideration for standardization by NIST. It is secure against classical attacks, with optimal attacks requiring $q_E$ queries to $E$ and $q_P$ queries to $P$ such that $q_E \cdot q_P \approx 2^n$. If the attacker is given quantum access to both $E$ and $P$, however, the cipher is completely insecure, with attacks using $q_E, q_P = O(n)$  queries known.

In any plausible real-world setting, however, a quantum attacker would have only classical access to the keyed permutation $E$ implemented by honest parties, while retaining quantum access to $P$. Attacks in this setting with $q_E \cdot q_P^2 \approx 2^n$ are known, showing that security degrades as compared to the purely classical case, but leaving open the question as to whether the Even-Mansour cipher can still be proven secure in this natural ``post-quantum'' setting.

We resolve this question, showing that any attack in that setting requires $q_E \cdot q^2_P + q_P \cdot q_E^2 \approx 2^n$. Our results apply to both the two-key and single-key variants of Even-Mansour. Along the way, we establish several generalizations of results from prior work on quantum-query lower bounds that may be of independent interest.